\documentclass[PGL, biber]{nowfnt} 

\brokenpenalty=10000
\displaywidowpenalty=0
\clubpenalty=10000
\widowpenalty=10000
\setcounter{topnumber}{10}

\setcounter{bottomnumber}{10}

\setcounter{totalnumber}{10}

\hbadness=10000
\vbadness=10000
\interfootnotelinepenalty10000
\raggedbottom
\frenchspacing
\catcode`\@=11
\let\cite\citet
\catcode`\@=12

\usepackage[utf8]{inputenc}

\usepackage{etoolbox}
\makeatletter
\patchcmd{\ttlh@hang}{\parindent\z@}{\parindent\z@\leavevmode}{}{}
\patchcmd{\ttlh@hang}{\noindent}{}{}{}
\makeatother



\usepackage{color}
\usepackage{amssymb}
\usepackage{tikz}
\usepackage{mathpartir}
\usepackage{amsmath}
\usepackage[misc]{ifsym}
\usetikzlibrary{shapes,shapes.geometric,arrows,automata,snakes}

\usepackage{listings}
\usepackage{mathpartir}
\usepackage{color,xcolor}
\usepackage{xspace}

\newcommand{\set}[1]{\{#1\}}
\newcommand{\spec}[1]{\set{#1}}

\newcommand{\ie}{\emph{i.e.}\xspace}

\newcommand{\eg}{\emph{e.g.}\xspace}

\newcommand{\wrt}{\emph{wrt.}\xspace}

\newcommand{\poplmark}{\textsc{POPLMark}\xspace}
\newcommand{\code}[1]{\lstinline{#1}}
\newcommand{\join}{\bullet}
\newcommand{\ssreflect}{SSReflect\xspace}

\newcommand{\eqdef}{\triangleq}

\makeatletter
\newcommand{\defeq}{\mathrel{\aban@defeq}}
\newcommand{\aban@defeq}{%
  \vbox{\offinterlineskip\check@mathfonts
    \ialign{\hfil##\hfil\cr
      \fontsize{\ssf@size}{\z@}\normalfont def\cr
      \noalign{\kern1\p@}
      $\m@th=$\cr
      \noalign{\kern-.5\fontdimen22\textfont2}
    }%
  }%
}
\makeatother

\newcommand{\reducedstrut}{\vrule width 0pt height .9\ht\strutbox depth .9\dp\strutbox\relax}
\newcommand{\codediff}[1]{%
  \begingroup
  \setlength{\fboxsep}{0pt}%
  \colorbox{orange!25}{\reducedstrut#1\/}%
  \endgroup
}

\def\checkmark{\tikz\fill[scale=0.4](0,.35) -- (.25,0) -- (1,.7) -- (.25,.15) -- cycle;}
\def\cross{$\mathbin{\tikz [x=1.4ex,y=1.4ex,line width=.2ex] \draw (0,0) -- (1,1) (0,1) -- (1,0);}$}

\definecolor{shadecolor}{gray}{1.00}
\definecolor{darkgray}{gray}{0.30}
\definecolor{violet}{rgb}{0.56, 0.0, 1.0}
\definecolor{forestgreen}{rgb}{0.13, 0.55, 0.13}

\lstdefinelanguage{Coq} {
mathescape=true,						
texcl=false,
morekeywords=[1]{
  Equations,
  Add,
  All,
  Arguments,
  Axiom,
  Bind,
  Canonical,
  Check,
  Close,
  CoFixpoint,
  CoInductive,
  Coercion,
  Contextual,
  Corollary,
  Defined,
  Definition,
  Delimit,
  End,
  Example,
  Export,
  Fact,
  Fixpoint,
  Goal,
  Graph,
  Hint,
  Hypotheses,
  Hypothesis,
  Implicit,
  Implicits,
  Import,
  Inductive,
  Lemma,
  Let,
  Local,
  Locate,
  Ltac,
  Maximal
  Module,
  Morphism,
  Next,
  Notation,
  Obligation,
  Open,
  Parameter,
  Parameters,
  Prenex,
  Print,
  Printing,
  Program,
  Projections,
  Proof,
  Proposition,
  Qed,
  Record,
  Relation,
  Remark,
  Require,
  Reserved,
  Resolve,
  Rewrite,
  Save,
  Scope,
  Search,
  Section,
  Show,
  Strict,
  Structure,
  Tactic,
  Theorem,
  Unset,
  Variable,
  Variables,
  View,
  inside,
  outside
},
morekeywords=[2]{
  as,
  cofix,
  else,
  end,
  exists,
  exists2,
  fix,
  for,
  forall,
  fun,
  if,
  in,
  is,
  let,
  match,
  nosimpl,
  of,
  return,
  struct,
  then,
  vfun,
  with
},
morekeywords=[3]{Type, Prop, Set, True, False},
morekeywords=[4]{
  after,
  apply,
  assert,
  auto,
  bool_congr,
  case,
  change,
  clear,
  compute,
  congr,
  cut,
  cutrewrite,
  destruct,
  elim,
  field,
  fold,
  generalize,
  have,
  heval, 
  hnf,
  induction,
  injection,
  intro,
  intros,
  intuition,
  inversion,
  left,
  loss,
  move,
  nat_congr,
  nat_norm,
  pattern,
  pose,
  refine,
  rename,
  replace,
  revert,
  rewrite,
  right,
  ring,
  set,
  simpl,
  split,
  suff,
  suffices,
  symmetry,
  transitivity,
  trivial,
  unfold,
  unlock,
  using,
  without,
  wlog,
  autorewrite
},        
morekeywords=[5]{
  assumption,
  by,
  contradiction,
  done,
  exact,
  lia,
  gappa,
  omega,
  reflexivity,
  romega,
  solve,
  tauto,
  discriminate,
  unsat
},
morecomment=[s]{(*}{*)},
morekeywords=[6]{do, first, try, idtac, repeat},
showstringspaces=false,
morestring=[b]",
tabsize=3,							
extendedchars=true,  		 		
sensitive=true, 
breaklines=false,
basicstyle=\footnotesize\ttfamily,
captionpos=b,							
columns=[l]fullflexible,
identifierstyle={\color{black}},
keywordstyle=[1]{\color{violet}},
keywordstyle=[2]{\color{forestgreen}},
keywordstyle=[3]{\color{forestgreen}},
keywordstyle=[4]{\color{blue}},
keywordstyle=[5]{\color{red}},
keywordstyle=[6]{\color{violet}},
stringstyle=,
commentstyle=\ttfamily\color{brown},
numberstyle=\tiny,
literate={\\/}{{$\lor~$}}1
         {/\\}{{$\land~$}}1
         {:->}{{$\mapsto~$\!}}1
         {\\->}{{$\mapsto~$\!}}1
         {<--}{{$\asgn~$}}1
         {\\in}{{$\in~$}}1
         {\\notin}{{$\notin~$}}1
         {++}{{$+\!+\!~$}}1
         {->}{{$\to~$}}1         
         {forall}{{$\forall~$}}1
         {not}{{$\lnot~$}}1         
         {exists}{{$\exists~$}}1
         {=>}{{$\Rightarrow~$}}1
         {\\~}{{$\lnot\;$}}1
         {\\+}{{$\!\join\!~$}}1
}

\lstdefinestyle{Coq}{language=Coq}
\lstset{language=Coq}

\title{QED at Large: A Survey of Engineering of Formally Verified Software}

\booltrue{authortwocolumn}
\maintitleauthorlist{
Talia Ringer \\
University of Washington \\
tringer@cs.washington.edu
\and
Karl Palmskog \\
University of Texas at Austin \\
palmskog@acm.org
\and
Ilya Sergey \\
Yale-NUS College\\ and\\ National University of Singapore\\
ilya.sergey@yale-nus.edu.sg
\and
Milos Gligoric \\
University of Texas at Austin \\
gligoric@utexas.edu
\and
Zachary Tatlock \\
University of Washington \\
ztatlock@cs.washington.edu
}

\issuesetup
{%
 copyrightowner={Author1 and Author2},
 volume        = xx,
 issue         = xx,
 pubyear       = 2019,
 isbn          = xxx-x-xxxxx-xxx-x,
 eisbn         = xxx-x-xxxxx-xxx-x,
 doi           = 10.1561/XXXXXXXXX,
 firstpage     = 102, 
 lastpage      = 281
 }

\addbibresource{proof-engineeringSunriseUpdated.bib}
\addbibresource{proceedingsSunrise.bib}


\author[1]{Talia Ringer}
\author[2]{Karl Palmskog}
\author[3]{Ilya Sergey}
\author[4]{Milos Gligoric}
\author[5]{Zachary Tatlock}

\affil[1]{University of Washington; tringer@cs.washington.edu}
\affil[2]{University of Texas at Austin; palmskog@acm.org}
\affil[3]{Yale-NUS College; ilya.sergey@yale-nus.edu.sg}
\affil[4]{University of Texas at Austin; gligoric@utexas.edu}
\affil[5]{University of Washington; ztatlock@cs.washington.edu}


\begin{document}





\makeabstracttitle

\begin{abstract}
Development of formal proofs of correctness of programs can increase actual and perceived reliability and facilitate better understanding of program specifications and their underlying assumptions. Tools supporting such development have been available for over 40 years, but have only recently seen wide practical use. Projects based on construction of machine-checked formal proofs are now reaching an unprecedented scale, comparable to large software projects, which leads to new challenges in proof development and maintenance. Despite its increasing importance, the field of proof engineering is seldom considered in its own right; related theories, techniques, and tools span many fields and venues. This survey of the literature presents a holistic understanding of proof engineering for program correctness, covering impact in practice, foundations, proof automation, proof organization, and practical proof development.
\end{abstract}

\chapter{Introduction}
\label{sec:intro}

A formal proof of program correctness can show that for all possible inputs, the program behaves as expected.
This theoretical guarantee can provide practical benefits. For example,
the formally verified optimizing C compiler CompCert~\citep{Leroy:POPL06}
is empirically more reliable than GCC and LLVM:
the test generation tool Csmith~\citep{Yang2011} found 79 bugs in GCC and 202 bugs in LLVM, but was unable to find any bugs
in the verified parts of CompCert. 

Methodologies for developing proofs of program correctness are as old as the proofs themselves~\citep{turing1949, Floyd1967Flowcharts, Hoare1971}.
These proofs were on paper and of simple programs; tools to support their development followed soon after~\citep{Milner1972b}
and have continued to evolve for over 40 years~\citep{Bjoerner2014}.
Projects based on construction of formal, machine-checked proofs using these tools are now reaching a scale comparable to that of large software engineering projects. 
For example, the initial correctness proofs for an operating system kernel took around 20 person years to develop~\citep{Klein2009}, and as of 2014 consisted of 480,000 lines of specifications and proof scripts~\citep{Klein2014micro}.

This survey covers the timeline and research literature concerning proof development for program verification,
including theories, languages, and tools. It emphasizes challenges and breakthroughs at each stage
in history and highlights challenges that are currently present due to the increasing scale of proof developments. 

\section{Challenges at Scale}

Scaling up leads to new challenges and additional demand for tool support in proof development and maintenance. For example, users may
have to reformulate properties to facilitate library reuse~\citep{Hales2017}, or to encode data structures in specific ways to aid in automation
of proofs about them~\citep{Gonthier2008}. Proof development environments need to allow users to efficiently write, check, and share proofs~\citep{Faithfull2016}; proof libraries need to allow easy search and seamless integration of results into local developments~\citep{Gauthier2015}. Evolving projects face the possibility of previous proofs breaking due to seemingly unrelated changes, justifying design principles~\citep{Woos2016} as well as support for quick error detection~\citep{Celik2017} and repair~\citep{Ringer2018}.

The research community has answered these challenges with theories, techniques, and tools
for proofs of program correctness that scale---all of which fall under the umbrella of \textit{proof engineering},
or software engineering for proofs. 
Many of these techniques draw inspiration from work in software engineering on large-scale development practices and tools~\citep{Klein2014}. 
However, even with close conceptual ties between construction of programs and proofs, research in software engineering requires careful translation to the world of formal proofs. For example, proof engineers can benefit from regression testing techniques by considering lemmas and their proofs in place of tests, as in \textit{regression proving}~\citep{Celik2017}; yet, the standard metric used to prioritize regression tests---statement coverage---has no clear analogue for lemmas with complex conditions and quantification.

This survey serves to gather these theories, techniques, and tools into a single place,
drawing parallels to software engineering, and pointing out challenges that are especially
pronounced in proof development.
It discusses the problems engineers encounter when verifying large systems, 
existing solutions to these problems, and future opportunities for research to address underserved problems.

\section{Scope: Domain and Literature}

We consider proof engineering research in the context of interactive theorem provers (ITPs) or \textit{proof assistants} (used interchangeably with ITPs in this survey) that satisfy the \textit{de Bruijn criterion}~\citep{Barendregt2002,Barendregt2351}, which requires that they produce proof objects that a small proof-checking kernel can verify; the general workflow of such tools is illustrated in Figure~\ref{fig:workflow}. That is, we consider proof assistants such as Coq~\citep{coq}, Isabelle/HOL~\citep{isabelle}, 
HOL Light~\citep{hollight}, and Agda~\citep{agda}; we do not consider program verifiers, theorem provers, and constraint solvers such as Dafny~\citep{Leino2010}, ACL2~\citep{acl2}, and Z3~\citep{z3}
except when contributions carry over. We focus on proof engineering for software verification, 
but consider contributions from mathematics and other domains when relevant.

Sometimes, the key design principles in \emph{engineering} a large program verification effort are not the focus of the most well-known publications on the effort. Instead, 
they can be in less standard references such as workshop papers~\citep{Komendantskaya2012,Blanchette2013,CompanyCoq2016,Mulhern06proofweaving}, invited talks~\citep{WenzelIsabelleFuture}, blog posts~\citep{VerifiedCryptoFirefox}, and online documents~\citep{LeroyDeepSpecSS17,WenzelScalingIsabelle}. One purpose of this survey is to bring such design principles front-and-center.  Naturally, we shall aim to survey the relevant literature with our best effort to provide accurate and thorough citations.  To that end, we will not hesitate to cite both traditional research papers in well-known venues and relevant discussions in less traditional forms, without further distinction among them.

\section{Overview}

\begin{figure}[ht]
  \centering
{\footnotesize
\begin{tikzpicture}[>=stealth',semithick]

\begin{scope}
\node[ellipse, draw=black, inner ysep=8pt, inner xsep=5pt] (user) { user };
\end{scope}

\begin{scope}[xshift=5.5cm]
\node[rectangle, draw=black, minimum width=6.5cm, minimum height=2cm] (coq) { };
\node[rectangle, draw=black, minimum height=1.5cm, fill=lightgray, xshift=-2cm] (engine) { logic engine };
\node[rectangle, draw=black, xshift=1.8cm] (checker) { proof checker };
\node[xshift=1.7cm, yshift=0.7cm] (sys) { proof assistant };
\end{scope}

\begin{scope}[xshift=9.5cm]
\node[yshift=0.6cm] (ack) {\checkmark};
\node[yshift=-0.6cm,xshift=-0.1cm] (nack) {\cross};
\end{scope}

\draw[->, thick] (user.north east) to node[xshift=-0.2cm, yshift=0.2cm] {tactics} ([yshift=0.33cm]engine.west);
\draw[->, thick] ([yshift=0.38cm]engine.south west) to node[xshift=-0.2cm, yshift=-0.2cm] {subgoals} (user.south east);

\draw[->, thick] (engine.east) to node[yshift=0.2cm] {proof} (checker.west);

\draw[->, thick] (checker.east) to (ack.south west);
\draw[->, thick] (checker.east) to (nack.north west);

\end{tikzpicture}
}
\caption{Typical proof assistant workflow, adapted from \cite{pa-history-geuvers-sadhana09}.}
\label{fig:workflow}
\end{figure}
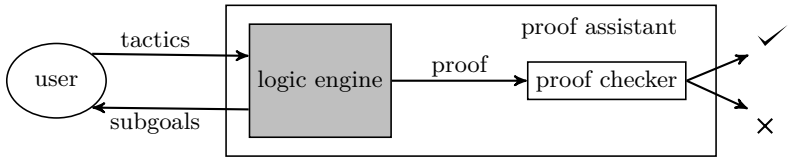

After motivating chapters (Chapters~\ref{ch:ex} and~\ref{cha:appl-mech-proofs}), this survey discusses the history and foundations of 
proof assistants (Chapter~\ref{ch:foundations}). It then surveys proof engineering research under three headings: languages and automation (Chapter~\ref{sec:prooflangs}), proof organization and scalability (Chapter~\ref{ch:organization}), and practical proof development and evolution (Chapter~\ref{ch:development}).
At a glance, Chapter~\ref{sec:prooflangs} concerns proof automation approaches and languages, Chapter~\ref{ch:organization} concerns methods to express and organize programs and proofs, and Chapter~\ref{ch:development} concerns development processes and tools.
Each of these three chapters is divided into sections; each section surveys a more granular area of proof engineering research,
then concludes with a discussion of opportunities for future work within that area when applicable.
The survey concludes (Chapter~\ref{ch:conclusion}) with a discussion of opportunities for future work within proof engineering more broadly.
In the case of factual errors, an errata may be found on \url{https://proofengineering.org}.

%

\section{Reading Guide}

This survey aims to reach a broad audience of researchers, proof engineers, and community members
who are interested in understanding, using, or contributing to proof engineering research.
Readers need not be deterred for lack of background knowledge.
It is not always necessary to understand previous chapters in order to understand later chapters;
readers should feel free to skip sections or chapters, or to consult later chapters or cited resources for more information.
This guide lists topics with which basic familiarity is helpful
in order to get the most out of the referenced chapters (all chapters unless otherwise specified),
along with resources (cited next to \Letter\ icons) for the interested reader:

\begin{itemize}
\item Programming languages, type systems, and metatheory \Letter~\citep{pierce2002types, harper2016practical}, including:
\begin{itemize}
\item ITPs \Letter~\citep{pa-history-geuvers-sadhana09, Harrison2014}, especially:
\begin{itemize}
\item Coq \Letter~\citep{CPDT, Pierce-al:SF, CoqArt} 
\item Isabelle/HOL \Letter~\citep{wenzel2004isabelle, Nipkow2014concrete} 
\end{itemize}
\item Automated reasoning \Letter~\citep{Bradley2007, Kroening2008} (Chapters~\ref{cha:appl-mech-proofs} and~\ref{sec:prooflangs})
\item The Curry-Howard correspondence \Letter~\citep{pfenning2010curry, sorensen2006lectures}
\item Dependent and inductive types \Letter~\citep{CPDT}
\item Equality \Letter~\citep{CPDT, nlab:equality} (Chapters~\ref{cha:appl-mech-proofs} and~\ref{ch:foundations})
\item Compilers \Letter~\citep{cooper2011engineering} (Chapters~\ref{cha:appl-mech-proofs}, \ref{ch:foundations}, and~\ref{ch:organization})
\end{itemize}
\item Software engineering \Letter~\citep{se-canon} 
\item Systems \Letter~\citep{Anderson2014, Cachin2011} (Chapters~\ref{cha:appl-mech-proofs} and~\ref{ch:organization})
\item Formalized mathematics \Letter~\citep{ams-proof, nlab:foundation_of_mathematics} (Chapter~\ref{cha:appl-mech-proofs})
\end{itemize}

\chapter{Proof Engineering by Example}
\label{ch:ex}

To make our notion of engineering of machine-checked proofs of program correctness more concrete, we consider the classic problem of constructing a program that decides whether a string (say, ``$aaab$'') matches a regular expression (say, $a^*b$). This problem has been used to demonstrate the usefulness of rigorous proofs of program properties by induction~\citep{Harper1999,Yi2006}.

\paragraph{Specification}
A necessary step towards a correct program is to obtain a suitable \emph{specification} (Section~\ref{sec:specenc}) of \emph{functional correctness}, in this case, what it means for a string to match a regular expression. We adopt the usual meaning in terms of the theory of regular languages. Consider an enumerable alphabet $\Sigma$ of characters $c$, where it is decidable whether two characters are equal. A string $s$ is an element of the set $\Sigma^*$, consisting of 0 or more characters from the alphabet. We call any such set of strings (subset of $\Sigma^*$) a \emph{regular language}, and refer to the special \emph{empty string} of 0 letters as $\epsilon$. We then define regular expression syntax:
\[
r ::= \;\; \textbf{0} \;\; | \;\; \textbf{1} \;\; | \;\; c \;\; | \;\; r_1 + r_2 \;\; | \;\; r_1 \cdot r_2 \;\; | \;\; r^*
\]
We next define a matching relation between regular expressions and strings informally using inductive rules:

\[
\inferrule{ }{\epsilon \triangleleft \mathbf{1}} \qquad \inferrule{ }{c \triangleleft c} \qquad \inferrule{s \triangleleft r_1}{s \triangleleft r_1 + r_2} \qquad \inferrule{s \triangleleft r_2}{s \triangleleft r_1 + r_2}
\]
\vspace{-3pt}
\[
\inferrule{s \triangleleft r_1 \quad s' \triangleleft r_2}{s\,s' \triangleleft r_1 \cdot r_2} \qquad \inferrule{ }{\epsilon \triangleleft r^*} \qquad \inferrule{s \triangleleft r \quad s' \triangleleft r^*}{s\,s' \triangleleft r^*}
\]

The intended interpretation of $s \triangleleft\, r$, ``$s$ matches $r$,'' is that $s$ belongs to the regular language defined by $r$. One way to convince ourselves that $\triangleleft$ is adequate is by proving that it holds for some cases where we expect $s$ to be in the language of $r$ (e.g., ``$aaab$'' and $a^*b$), and that it does not hold for cases where we do not expect this (e.g., ``$ab$'' and $b^*$). \citet{Yi2006} defines equationally a function which purports to decide whether $s \triangleleft r$. We write $\mathit{r ! s}$ for the application of this function to given $r$ and $s$, omitting cases that include the Kleene star ($r^*$):
\begin{eqnarray*}
 \mathbf{0} ! s \defeq \mathit{false} \qquad \mathbf{1} ! s \defeq s = \epsilon \qquad c ! s \defeq s = c \qquad r_1 + r_2 ! s \defeq r_1 ! s \lor r_2 ! s\\
  \epsilon \cdot r_2 ! c s \defeq r_2 ! c s  \quad c \cdot r_2 ! c s \defeq r_2 ! s \quad (r'_1 \cdot r'_2) \cdot r_2 ! c s \defeq r'_1 \cdot (r'_2 \cdot r_2) ! c s\\
  r_1\cdot r_2 !\epsilon \defeq r_1 ! \epsilon \land r_2 ! \epsilon \qquad (r'_1 + r'_2) \cdot r_2 ! c s \defeq r'_1 \cdot r_2 ! c s \lor r'_2 \cdot r_2 ! c s
\end{eqnarray*}

\vspace{-10pt}
\paragraph{Encoding} Our approach to specifying and verifying this function in an ITP is to make a \emph{deep embedding} (Section~\ref{sec:embedding}) of the language of regular expressions. Specifically, we encode the language as an inductive datatype (Section~\ref{sec:definitional}) in the Coq proof assistant, parameterized by an arbitrary type \lstinline{char} of characters:
\begin{lstlisting}[language=Coq]
Inductive r : Type := r_zero : r | r_unit : r | r_char : char -> r
| r_plus : r -> r -> r | r_times : r -> r -> r | r_star : r -> r.
\end{lstlisting}
We consider strings to be simple lists with elements in \lstinline{char}, i.e., to have type \lstinline{list char}. This allows us to define $s'$ appended to $s$ ($s\,s'$) as regular list concatenation ($s +\!\!\!+\, s'$). The matching relation becomes:
\begin{lstlisting}[language=Coq,keepspaces=true]
Inductive r_match : list char -> r -> Prop :=
| r_match_unit  : r_match [] r_unit
| r_match_char  : forall c, r_match (c :: []) (r_char c)
| r_match_plus' : forall s r1 r2, r_match s r1 -> r_match s (r_plus r1 r2)
| r_match_plus2 : forall s r1 r2, r_match s r2 -> r_match s (r_plus r1 r2)
| r_match_times : forall s s' r1 r2,
    r_match s r1 -> r_match s' r2 -> r_match (s ++ s') (r_times r1 r2)
| r_match_star1 : forall r0, r_match [] (r_star r0)
| r_match_star2 : forall s s' r0,
    r_match s' (r_star r0) -> r_match (s ++  s') (r_star r0).
\end{lstlisting}



We can now define the specification (type) \lstinline{acc_t} of a function \lstinline{acc} that decides the matching relation:
\begin{lstlisting}[language=Coq]
Definition acc_p (rs : r char * list char) := r_match (snd rs) (fst rs).
Definition acc_t (rs : r char * list char) := {acc_p rs}+{not acc_p rs}.
\end{lstlisting}

\vspace{-15pt}
\paragraph{Verified Implementation}
Directly writing a computable function with this specification (with the type \lstinline{acc_t}) requires us to immediately establish termination (Section~\ref{sec:totality-termination}). However, we more conveniently use a \emph{definitional extension} (Section~\ref{sec:definitional}) to encode the function and prove partial correctness and termination separately~\citep{Sozeau2010} using a well-founded relation \lstinline{acc_lt} whose definition we omit:
\begin{lstlisting}[language=Coq]
Equations acc (rs : re char * list char) : acc_t rs by wf rs acc_lt :=
 acc (re_zero, _) := right _;
 acc (re_unit, []) := left _; acc (re_unit, _ :: _) := right _;
 acc (re_char c, [c']) := match char_eq_dec c c' with
  left _ => left _ | right _ => right _ end;
 acc (re_char _, []) := right _; acc (re_char _, _ :: _) := right _;
 acc (re_plus r1 r2, s) :=
  match acc (r1, s) with left _=> left _ | right _ =>
   match acc (r2, s) with left _ => left _ | right _ => right _ end end;
 (* ... re_times/re_star cases and proof scripts omitted ... *)
\end{lstlisting}

While our certified matching function follows the equational definition provided by Yi quite closely, we may have made some mistakes. One way to ensure the adequacy of our proof (Section~\ref{sec:trust-proofs}) is to connect it to a Coq theory of regular languages~\citep{Doczkal2018}. We use this theory's notion of language given by a regular expression (\lstinline{re_lang}) to build a matcher that uses our original \lstinline{acc} function, by converting from the theory's notion of a regular expression:
\begin{lstlisting}[language=Coq]
Program Definition acc' (r : regexp char) (w : list char) :
 {w \in re_lang r}+{w \notin re_lang r} :=
  match acc (@eq_comparable char) (regexp2re r, w) with
  left _ => left _ | right _ => right _ end.
\end{lstlisting}

\vspace{-15pt}
\paragraph{Executable Code}
Finally, we can extract (Section~\ref{sec:trust-programs}) the Coq code for the function \lstinline{acc} to a practical programming language such as OCaml and run it on example regexp-string pairs where we have an intuition whether they should match or not. We could also refine (Section~\ref{sec:refinement}) the code to more performant Coq code, or to an imperative language (Section~\ref{sec:reas-about-imper}) embedded in Coq, making it possible to use 
certified compilation (Section~\ref{sec:verified-compilers}) to assembly language.

%




\chapter{Why Proof Engineering Matters}
\label{cha:appl-mech-proofs}

Formal verification of a program can improve actual and perceived reliability.
It can help the programmer think about the desired and actual behavior of the program,
perhaps finding and fixing bugs in the process~\citep{murraybp}.
It can make explicit which parts of the system are trusted, and further decrease the burden
of trust as more of the system is verified.

One noteworthy program verification success story is the CompCert~\citep{Leroy:POPL06, Leroy2009} verified optimizing C compiler.
Both the back-end and front-end compilation passes
of CompCert have been verified, ensuring the correctness of their composition~\citep{Kaestner2017}.
CompCert has stood up to the trials of human trust: it has been used, for example, to compile code for safety-critical flight control software~\citep{Frana2011}.
It has also stood up to rigorous testing: while the test generation tool Csmith~\citep{Yang2011} found 
79 bugs in GCC and 202 bugs in LLVM, it was unable to find any bugs in the verified parts of CompCert.

CompCert, however, was not a simple endeavor: The original development comprised of approximately 35,000 lines of Coq code;
functionality accounted for only 13\% of this, while specifications and proofs accounted for the other 87\%.
This is not unusual for large proof developments. The initial correctness proofs for an operating system (OS) kernel in Isabelle/HOL,
for example, consisted of 480,000 lines of specifications and proofs~\citep{Klein2014micro}.

Much like software engineering theories, techniques, and tools help software engineers deal with the challenges
of programming for scale, so proof engineering helps proof engineers.
To highlight the importance of proof engineering, we start by surveying a sample of domains in which 
proof engineering for program verification has been particularly influential (Section~\ref{sec:app-prog}).
We then discuss proof engineering for other domains (Section~\ref{sec:other-domains}),
as the lessons learned from those domains transfer to program verification as well.
We conclude by describing some examples of practical impact from program verification (Section~\ref{sec:empirical}).



\section{Proof Engineering for Program Verification}
\label{sec:app-prog}

We discuss a sample of the domains in which proof engineering has had a large impact:
certified compilers (Section~\ref{sec:verified-compilers}),
low-level systems software (Section~\ref{sec:systems-software}),
concurrent and distributed systems (Section~\ref{sec:distributed-systems}),
and ceritified solvers and checkers (Section~\ref{sec:solvers}).

\subsection{Certified Compilers}
\label{sec:verified-compilers}

Development of certified compilers is a classic application of proof assistants (Section~\ref{sec:early-history}). 
In spite of their long history, however, certified compilers for practical and widely used programming languages have only started to appear in the last
decade. This is mainly due to the sheer size and complexity of the semantics of these languages---challenges that have necessitated 
developments in proof engineering.

The inertia for those developments came in 2006 with the help of CompCert.
This project was a pivotal moment in the history of program verification.
\cite{Leroy:POPL06} received the POPL test of time award in 2016~\citep{popl-time}, with the release noting
its pivotal role:

\begin{quote}
The paper was (and still is) groundbreaking in that it demonstrates the feasibility of using an interactive theorem prover---specifically, Coq---to both program and formally verify a realistic compiler ... [it] made a convincing case that theorem-proving technology is mature enough to be applied to the full functional verification of realistic systems, and in so doing heralded a new age of ``big verification.''
\end{quote} 

Many projects built on CompCert by, for example, using its C semantics, or by targeting C or any of its intermediate representations.
\cite{Appel2011} developed a program logic based on CompCert's C semantics, allowing Coq users to prove properties of deeply embedded C programs in Coq. These properties then hold for machine code generated by CompCert, via CompCert's main correctness result~\citep{Appel:BOOK14}. \cite{Kaestner2017} described many extensions and enhancements to the basic compilation toolchain of CompCert, e.g., translation validation of the process of \emph{linking} machine code to produce object files and executable files. \cite{Cao2018} presented a CompCert-based C program verification environment called VST-Floyd in Coq based on separation logic, simplifying the process of specifying and verifying properties that hold down to machine code.

Certified compilers now span broad applications:
For example, the \textsc{Cogent} language for system programming and verification is accompanied a certifying compiler~\citep{oconnor2016b} 
in Isabelle/HOL that produces a proof that the generated C code is correct. The compiler uses a refinement framework (Section~\ref{sec:refinement}) to
automate relating the \textsc{Cogent} semantics to the generated code~\citep{Rizkallah16}.
The Standard ML language variant CakeML~\citep{Kumar2014} has a verified compiler in HOL4
with a certified machine-code implementation produced by bootstrapping (applying the compiler to itself).
Using the Vellvm~\citep{zhao2012} framework in Coq, proof engineers can reason about transformations on the LLVM intermediate language representation.

Certified compilers have also covered new ground with respect to modularity and compositionality.
The IMM~\citep{Podkopaev2019} memory model modularizes certified compilation from high-level concurrent programming languages to different hardware models.
The Bedrock~\citep{Chlipala2013} intermediate language and verification environment in Coq for low-level programming
contains a notion of macros that can be reasoned about modularly.
The Pilsner~\citep{Neis2015} certified compiler in Coq from a higher-order ML-like language to machine code
supports programs that can, in contrast to CompCert, be compositionally verified.
Compositional CompCert~\citep{Stewart2015} is a variant of CompCert with a correctness theorem that can be applied compositionally. 

Proof engineering for certified compilers has had applications directly to the languages these compilers are verified in.
For example, there are certified compilers both from HOL4~\citep{Myreen2012} and from Isabelle/HOL~\citep{Hupel2018}
to CakeML; it is possible to produce machine code by composing these compilers with the certified CakeML compiler.
Both CertiCoq~\citep{Anand2017} and \OE uf~\citep{Mullen2018} describe certified compilers for Coq's specification
language Gallina. CertiCoq is an ongoing project to build a certified compiler from Gallina to machine code, using a 
hierarchy of custom intermediate languages. \OE uf presents a certified compiler from a subset of Gallina to assembly code.
Both compilers target CompCert intermediate languages: CertiCoq targets Clight, while \OE uf targets Cminor.
Each of these compilers provide an alternative to untrusted program extraction (Section~\ref{sec:trust-programs}).

\subsection{Low-Level Systems Software}
\label{sec:systems-software}

In addition to compilers, low-level software such as operating systems, file systems, and network stacks are building blocks of large software systems. In turn, such software relies on interfaces to hardware, and on hardware behavior. Through considerable effort, proof engineers have specified and verified important pieces of systems software and their hardware bases.

In pioneering work, \cite{Klein2009,Klein2014micro} developed a small general-purpose OS kernel, called seL4, in the C programming language, with correctness proofs in Isabelle/HOL. The proven properties include correctness of interprocess communication, access-control enforcement, and information-flow noninterference. Many verified extensions to seL4 have been proposed since its inception, e.g., to ensure hard deadlines are met for system calls, which is required for applications in real-time systems~\citep{Sewell2017}.

Verification of OS kernels has brought new developments in proof engineering.
For example, as part of the CertiKOS project, \cite{Gu-al:POPL15} presented a framework in Coq for specifying and verifying abstraction layers, which they used to develop several certified OS kernels. In doing so, they introduced the idea of a deep specification (Section~\ref{sec:des-scale}). 
\cite{Gu-al:OSDI16} designed and verified a concurrent kernel for the x86 architecture with fine-grained locking.

An Instruction Set Architecture (ISA) provides an important interface to hardware for, .e.g., compilers.
\cite{Fox2010} developed formal specifications (semantics) of the ISA for ARMv7 in HOL4. \cite{Armstrong2019} later used a domain-specific language to provide ISA specifications for Isabelle/HOL, HOL4, and Coq for the ARMv8, RISC-V, and CHERI-MIPS architectures.
\cite{Morrisett2012} modeled a subset of the x86 ISA in Coq, and used it to build a verified checker for a sandbox policy.

Security policies of systems software are apt targets for verification due to the importance of them being correct.
Along these lines, \cite{Dam2013} proposed a separation kernel (hypervisor) based on the ARMv7 processor architecture and its formalization by Fox and Myreen, and proved, in HOL4, an information flow security property that ensures OS instances can only communicate via explicit channels. \cite{Guanciale2016} proved memory virtualization security in HOL4 for ARMv7.

OS kernel subsystems and formats are additional formalization targets: \cite{Bishop2006,Bishop2018} defined and validated executable specifications in HOL4 for the TCP/IP stack, and \cite{Kell2016} formalized the ELF binary format in HOL4 for executables used in Unix-like operating systems such as Linux.

Verification has also reached file systems, such as FSCQ~\citep{Chen2015}, a file system with guarantees about crash safety that have been verified in Coq,
and DFSCQ~\citep{Chajed2017}, an efficient crash-safe file system with several verified optimizations.
Using the \textsc{Cogent} language and its certifying compiler (Section~\ref{sec:verified-compilers}),
\cite{Amani16} developed a file system called BilbyFS in Isabelle/HOL with executable code in C; they also implemented and verified the legacy Linux file system ext2. \cite{Ridge2015} developed a specification of POSIX file systems in HOL4, which they tested against real-world file system behavior.

\subsection{Concurrent and Distributed Systems}
\label{sec:distributed-systems}

Concurrent and distributed systems can be difficult to develop, understand, and debug. Researchers have developed many different theories and frameworks for reasoning about such systems in proof assistants.
For example, FCSL~\citep{Sergey-al:PLDI15} is a framework in Coq for reasoning about fine-grained concurrency, based on a shallow embedding of concurrent programs.
The \textsc{Disel}~\citep{Sergey2017} framework for distributed separation logic in Coq builds on this,
using the shallow embedding approach from FCSL. 
The formalization is extracted to executable OCaml code and run on real hardware.
Two different frameworks~\citep{zeller2014, gomes2017verifying} in Isabelle/HOL exist for verifying two different models
of CRDTs, replicated datatypes
which provide strong eventual consistency guarantees.

Progress in proof engineering for verification of concurrent and distributed systems has had implications for the formalization of practical 
programming languages and protocols. This is illustrated by the \cite{Jung2017} formalization of the imperative and threaded Rust programming language
using the Iris~\citep{Jung2018} framework for concurrent separation logic in Coq,
and by the \cite{Woos2016} implementation and verification of the key correctness property of the Raft consensus algorithm 
using the Verdi~\citep{Wilcox2015} framework for verification of asynchronous message-passing distributed systems.

\subsection{Certified Solvers and Checkers}
\label{sec:solvers}

Proof engineers have formalized and proven correct automated solvers for first-order logic and other more restricted logics.
\cite{blanchette2018verified} verified a SAT solver in Isabelle/HOL with conventional features such as clause learning. \cite{schlichtkrull2019verified} verified a purely functional first-order superposition-based solver in Isabelle/HOL and obtained executable code in Standard ML.

Another line of work in Isabelle/HOL formalizes various model checkers, for example
for Linear Temporal Logic~\citep{esparza2013fully} and for timed automata~\citep{wimmer2018verified}.


\section{Proof Engineering for Other Domains}
\label{sec:other-domains}

While this survey focuses on proof engineering for program verification,
domains outside of program verification encounter similar proof engineering challenges.
The solutions that these communities develop have implications for proof engineering
for program verification. We discuss these impliciations for two domains:
mathematics (Section~\ref{sec:dsmathematics})
and programming languages metatheory (Section~\ref{sec:dsmetatheory}).

\subsection{Mathematics}
\label{sec:dsmathematics}


Mathematics is a natural application domain for proof assistants.
\textit{Formalized mathematics} is the attempt to formalize mathematical 
theories in part or in whole using proof assistants, so that proofs can be 
mechanically checked. 
Formalized mathematics was one of the first major application domains for proof assistants;
several early ITPs and their predecessors
were designed with this domain in mind (Section~\ref{sec:early-history}).

Early formal developments in mathematics arose in the 
1990s~\citep{zucker1994formalization, van1994checking, bancerek1990fundamental, qed-manifesto-boyer-cade94}.
Since then, there have been many notable developments for formalized mathematics,
including the Four Color Theorem~\citep{Gonthier2008}, the Kepler
conjecture~\citep{Hales2011, Hales2017}, the fundamental
theorem of algebra~\citep{Geuvers2000}, G{\"o}del-Rosser incompleteness~\citep{OConnor2005}, and the Jorden curve theorem~\citep{Hales2007}. 
Other interesting mathematical proofs can be found in a comparative overview of different proof assistants for mathematics~\citep{Wiedijk2006}.
The QED manfesto~\citep{qed-manifesto-boyer-cade94} called for a complete database of formalized mathematics.
The UniMath~\citep{UniMath} library is an ongoing attempt to formalize foundations of mathematics~\citep{Voevodsky2015}
in Coq, using homotopy type theory (Section~\ref{sec:equality}).

This section samples tools and design principles for formalized mathematics,
and discusses how they are relevant to program verification.

\paragraph{Design Principles} Many proof developments in mathematics are mature and
involve a large community of contributors. Several of these developments have 
style guides for contributors.
For example, UniMath library has a style guide
that serves to make proofs rigorous, easy to
port to other proof assistants, and less fragile, and to standardize
and improve appearance and readability. Among other things, the style
guide prohibits the addition of new axioms, and encourages the use of
tactics whose semantics are well-defined.  Similarly, the HoTT
library~\citep{Bauer2017} for homotopy type theory contains a style
guide which, among other things, encourages uniform naming principles, outlines methods for
defining equivalences, describes how to use axioms uniformly,
and encourages the use of tactics that have well-defined relationships
with the terms they produce. 

In addition to style guides, proof engineers have developed design principles to handle certain kinds of problems
common in mathematics. \cite{Gonthier2013}, for example, outlines
a number of techniques used in the proof of the Odd Order Theorem.

\cite{Wiedijk2006} compares the styles
of seventeen different proof assistants for mathematics. The book is a collection of
proofs of the irrationality of $\sqrt{2}$ from users of each of the proof assistants,
and a discussion of each of the proof assistants and the proofs in those proof assistants.
This comparison can be useful for understanding the tradeoffs of and design considerations in each of the proof assistants.

\paragraph{Tooling} Formalized mathematics has also seen the development of tooling
to support entire classes of proofs. Notable examples include autarktic
computations for algebraic reasoning~\citep{Barendregt2002}, special support
for equational reasoning in proof checkers~\citep{Barthe1996}, 
decision procedures for fragments of arithemic (Section~\ref{sec:autotactics}),
techniques for reasoning modulo associativity and commutativity~\citep{Braibant2011},
transport methods (Sections~\ref{sec:languagereuse} and~\ref{sec:toolingreuse}),
and theory exploration (Section~\ref{sec:autotactics}).

\paragraph{Beyond Mathematics} In formalized
mathematics, user communities of certain frameworks or
libraries adhere to style guides and design principles. These style
guides and design principles have different emphases. 
Proof engineers in communities outside of mathematics may similarly benefit from standardizing
some elements of style depending on the desired outcome.  On a
project-by-project basis, this may make collaboration between proof
engineers easier, and limit the accidental introduction of untrusted
code.

Style guides and design principles also have implications for proof understanding. 
Mathematicians are the original proof checkers, so it's perhaps unsurprising that many
mathematics communities emphasize proof understanding by humans. 
Beyond mathematics, human understanding of
proofs communicates information to the reader beyond the theorem
statement itself. Furthermore, just like in software engineering,
effective collaboration between proof engineers hinges on mutual
understanding of the underlying code.

Standardizing style may also make automation easier. For example, by limiting
the set of tactics used within a community, the produced proofs are
more clearly defined, which may make higher-level automation such as
refactoring and repair tools (Section~\ref{sec:evolve}) less challenging.

Proof engineers in communities outside of mathematics may also benefit
from comparitive studies across different proof assistants, similar to
\cite{Wiedijk2006} but for other domains.

Finally, much of the tooling developed for mathematics addresses problems
that occur in proof developments outside of mathematics. 
For example, dealing with equivalences and isomorphisms is a problem that
is not exclusive to the domain of mathematics.
Many of the techniques and tooling from mathematics that solve
this problem may be useful for proof engineers who encounter this 
same problem in other domains.

\subsection{Programming Language Metatheory}
\label{sec:dsmetatheory}


One large domain of focus is \textit{mechanized metatheory}:
proofs about programming languages. The desire to mechanize theory led to the introduction of the 
Edinburgh Logical Framework (LF) by \cite{Harper1987} (and later in more detail, \cite{Harper1993}),
building on ideas from Automath.
LF defined a methodology for encoding and reasoning about a simpler programming language from within the higher-order dependently typed
lambda calculus. Mechanizations of metatheory followed shortly after, both in Nuprl (for example, \cite{howe1988computational})
and in LF (see \cite{Harper2007} for an overview of mechanized metatheory in LF, and \cite{harrison-reflection} for
an early history of mechanized metatheory more broadly).

Since then, the domain has grown to reach practical languages:
The mechanization~ \citep{mechanized-sml} of Standard ML in Twelf formalized
the metatheory of a practical language in its entirety.
WebAssembly has had a formal semantics from the very beginning, which has been mechanized~\citep{watt2018mechanising}
in Isabelle/HOL.
Simplified languages representing the core underlying theories of Scala~\citep{Rompf2016, Amin2017} and of
OCaml~\citep{owens2008} have been verified. 
Results from mechanized metatheory have also influenced verification of real compilers,
like CakeML (Section~\ref{sec:verified-compilers}).

The success of metatheory has brought with it benchmark suites and design principles.
In addition, mechanized metatheory has influenced new additions to the core languages of ITPs.
This section describes a small sample of those benchmark suites, design principles, and language features,
and discusses how the lessons learned from mechanized metatheory generalize beyond
this domain.

\paragraph{Benchmark Suites} Some of the success of mechanized metatheory is attributable
to benchmark suites that have clearly established the importance of the domain
and set out to define how to measure progress within it. The
\poplmark challenge~\citep{Aydemir2005} has been particularly influential.

The benchmarks in the \poplmark challenge are proofs of properties of
the language System F-Sub~\citep{Cardelli1994}, which has parametric polymorphism and
subtyping. \poplmark highlights specific problems in
proof engineering for metatheory, 
and outlines criteria for evaluating the success of technology that
addresses these problems.



15 solutions to the \poplmark challenge remain accessible online~\citep{poplmark-website}. 
Of these solutions, 8 are in Coq, 2 are in Isabelle/HOL,
and the remaining 5 are spread across 5 other ITPs. The solutions cover 8 different ways to represent variable \textit{binders},
one of the problems that \poplmark highlights.

Personal communications~\citep{harpersonal, piercenal2}
suggest that the solution using Twelf~\citep{Pfenning1999} (an LF implementation) was the first solution to solve
all of the difficult parts of the challenge.
The website notes that this solution demonstrated the benefits
of using that framework, including the style of binders it supports, while also sparking an interesting discussion
on different ways of specifying the problem across different frameworks.
Only Arthur Chargu{\'e}raud attempted the same solution in the same proof assistant with different styles of binders;
while his solutions were inconclusive, they inspired later work on making binders easier to represent~\citep{Aydemir2008, Chargueraud2011}.

The \poplmark solutions may be thought of as a springboard for later work. They provided information about what
the state-of-the-art was at the time, which enabled later researchers to measure progress.
Over 300 papers have cited \poplmark since its introduction in 2005.

Still, there is some dissatisfaction with the outcomes of \poplmark.
For example, mst papers that cite \poplmark focus on the difficulty of dealing with
binders, which is just one challenge that proof engineers face when
mechanizing metatheory~\citep{piercenal}. The List-machine Benchmark~\citep{appel2012list},
developed in parallel with \poplmark, deemphasizes
binders and instead emphasizes connections between proofs and real
implementations of compilers. 
\poplmark Reloaded~\citep{abel2017poplmark} emphasizes
logical relations proofs. The ORBI~\citep{felty2018benchmarks, felty2015benchmarks} benchmarks
focus on the tradeoffs of design decisions of different systems for mechanizing metatheory,
rather than on different approaches using a given system.

\paragraph{Design Principles} 
Many papers that cite \poplmark focus on difficulties of dealing with binders. Paper proofs
typically use a \textit{named} representation, where variables are
represented by names. These are easy for humans to reason about, but
make it difficult for tools to reason about alpha-equivalence.
Nominal logics~\citep{Aydemir2006, Urban2008} encode names such that
equivalent terms are alpha-equivalent. These representations have
the benefits of named representations, but without the cost of
difficulty reasoning about alpha-equivalence. While the nominal
approach is common in Isabelle~\citep{Urban2008}, only preliminary
approaches exist in Coq~\citep{Aydemir2006}. 
Developing such a tool in Coq may be difficult due
to the differences in logics between Coq and Isabelle/HOL, as Nominal Isabelle
makes use~\citep{urban2011} of quotient types~\citep{Homeier05} in HOL. 

In the absence of a nominal tool for Coq, proof engineers may
explore tradeoffs between \textit{de Bruijn indexes}
(introduced by \cite{debruijn1972} and explored by \cite{owens2008} and
\cite{schafer2015}, among others) and a \textit{locally nameless}
representation (introduced by \cite{Gordon1993b} and explored by \cite{Leroy2007}, \cite{Aydemir2008},
and \cite{Chargueraud2011}, among others). De Bruijn indexes make reasoning
about alpha-equivalence simple because alpha-equivalence is
definitional equality (Section~\ref{sec:equality}), but they require shifting operations in the code,
which complicates human understanding; 
\cite{Berghofer2007} detail the tradeoffs between de Bruijn indices and names.
Locally nameless attempts to capture the best of both worlds:
it uses names for free variables and de Bruijn indexes
for bound variables. Consequentially, locally nameless requires
reasoning about indexes and shifting operations only for locally closed terms.

LF-based systems like Twelf~\citep{Pfenning1999}, Delphin~\citep{poswolsky2009system}, Beluga~\citep{pientka2008programming} use \textit{higher-order abstract syntax} (HOAS)~\citep{Pfenning1988} to simplify reasoning
about binders. HOAS gives an encoding of binders in the object language (the language that is reasoned about) as binders in the meta-language
(the language of reasoning---in the case of LF, the higher-order dependently typed lambda calculus). 
Beluga uses ideas from contextual modal type theory~\citep{nanevski2008contextual} to further simplify reasoning about HOAS encodings.
The Hybrid~\citep{ambler2002} tool makes it possible to use HOAS within Isabelle/HOL;
\cite{capretta2007hybrid} describe a version of Hybrid for Coq, and 
\cite{felty2012hybrid} describe how to use Hybrid to reason about an object language in a manner
similar to Twelf. \cite{felty2010hoas} compare Twelf, Beluga, and Hybrid on case studies
of metatheory using HOAS, and presents
a set of challenge problems which highlights the differences between these systems.
Variants of HOAS such as weak HOAS~\citep{Ciaffaglione2012}
and \textit{parametric higher-order abstract syntax} (PHOAS)~\citep{Chlipala2008}
make HOAS more tractable in general-purpose ITPs like Coq.

There is a small amount of work on design principles that address the concerns of \poplmark beyond dealing with binders.
For example, Engineering Formal Metatheory~\citep{Aydemir2008} 
identifies specific lemmas that are useful and discusses the organization of theorems, proofs, and automation.
It also introduces \textit{cofinite quantification} of free variables in inductive
relations---defining relations that hold on all but finitely many variables, 
rather than for some fresh variable. This strengthens the premise of the relations,
which in turn strengthens inductive hypotheses for proofs. 

Design principles for mechanized metatheory often go hand-in-hand with high-level frameworks such as 3MT (Section~\ref{sec:metaframe}),
or with domain-specific languages such as Ott and Lem (Section~\ref{sec:specenc}).
Other work in design principles for mechanized metatheory includes an overview
of different ways of formalizing language semantics in an ITP
for the same language~\citep{Bertot2009}, and the use of the coinductive partiality monad~\citep{Capretta2005}
in Agda to define denotational semantics~\citep{Danielsson2012}.

\paragraph{Beyond Metatheory} Few domains have seen as much movement
in the development of design principles for proof engineering as
mechanized metatheory. 
Opinions on the role that \poplmark played in this are mixed~\citep{piercenal, personappel}. There is little
disagreement that \poplmark was timely: Proof assistants were becoming
more usable, and the ongoing development of CompCert (Section~\ref{sec:verified-compilers})
inspired confidence in their usefulness. At the same time, the properties that researchers wanted to
prove about their languages were becoming larger and more complex. It
was becoming difficult to know that these properties were actually correct,
and to maintain confidence in correctness in the face of changes. 

\poplmark gave a common platform
for experimentation and offered a concrete criteria for success in a
timely domain.  The benchmarks were difficult enough to stress
technology, but simple enough that they were easy to understand and
that experts could prove them in a few weeks.
The work in metatheory that followed \poplmark demonstrated
that general-purpose proof assistants really were usable to prove these properties
that researchers cared about.

Work to this
day continues to use \poplmark as an evaluation metric. 
\cite{Amin2017}, for example, introduce a
proof technique using definitional interpreters that addresses the
open challenge of scaling type soundness proofs to realistic
languages, and evaluate the success of
this technique using the F-Sub language from the \poplmark
benchmarks. This highlights that sometimes, proving
a slightly different property and then showing how that relates to the
original property can be much simpler than proving the original
property directly.

\poplmark suggests that timely benchmark suites are instrumental in
bringing the challenges of design for proof engineering to the
attention of the research community; in doing so, however, they can
narrow the focus to one particularly difficult problem, sometimes to
the exclusion of the bigger picture. Domains outside of metatheory
can take this into consideration.

In addition, the success of experts using LF and Twelf on the \poplmark benchmarks and in mechanizing a practical programming language
suggests that it is worth weighing carefully the tradeoffs of using different ITPs, including ITPs with
special support for a given domain. Along those lines, \cite{Miller2018} argues
that handling of variable bindings should be built into ITPs. 
It is also worth considering the barriers to adoption by non-experts of tools with which experts
have demonstrated success within a domain, and how to overcome those barriers. 
SASyLF~\citep{Aldrich2008}, for example, is one attempt to make LF-based ITPs
more accessible to students.

\section{Practical Impact}
\label{sec:empirical}

Proof engineering has already had a large impact on program verification in many domains,
including those from Section~\ref{sec:app-prog}.
Proof engineers have in recent years verified
operating system~\citep{Klein2009} and web browser~\citep{Jang2012} kernels,
machine learning systems~\citep{DBLP:journals/corr/SelsamLD17},
distributed systems~\citep{Woos2016},
quantum circuits~\citep{rand2017},
constraint solvers~\citep{blanchette2018verified, schlichtkrull2019verified},
compilers~\citep{Leroy:POPL06, Kumar2014},
and file systems~\citep{Chen2015, Amani16}.

So far, proof assistants have had the strongest practical impact in systems software.
The CompCert verified compiler, sold as a commercial product, is finding applications in embedded systems, such as those used in aviation~\citep{CompCert-ERTS-2018}. The BoringSSL library, used in the popular Google Chrome Web browser, recently started to include high-performance cryptographic code in C verified in Coq~\citep{Erbsen2019}. The seL4 verified operating system kernel is used in SCADA systems, and aviation and automotive systems~\citep{Klein2018}.

\chapter{Foundations and Trusted Bases}
\label{ch:foundations}

We give a short overview of the pre-history (Section~\ref{sec:pre}) and early history (Section~\ref{sec:early-history}) of proof assistants, and then 
discuss their foundations (Section~\ref{sec:foundations}) and trusted bases (Section~\ref{sec:trustedbases}). 

\section{Proof Assistant Pre-History}
\label{sec:pre}

Specification and verification of software systems can be viewed as reducing human, informal notions and reasoning to systematic application of logical principles and axioms. From this perspective, Aristotle's systematization of the principles of correct reasoning~\citep{AristotlePriorAnalytics,sep-aristotle-logic} is arguably the oldest precursor. The proposal of \cite{Leibniz1685} to reduce human reasoning to mathematical calculation 
is a second important step. Leibniz also laid the foundations for symbolic propositional logic, although this was also done independently by, e.g., Boole.

A later important development was the introduction of predicate logic (or \emph{predicate calculus}) by Frege in the late 19th century~\citep{Frege1893,sep-frege}. Two key innovations in Frege's logical system were (1) the introduction of \emph{quantifiers} of expressions in propositions, and (2) a notion of proof (sequences of valid inferences) for propositions with quantifiers. Frege was able to capture and prove concepts from number theory in his system from first principles. However, his system included an axiom later shown by Russell to make the system inconsistent~\citep{sep-frege-theorem}. Nevertheless, the logics of ITPs based on higher-order logic are reminiscent of Frege's logic (which included second-order quantification), and his notion of proof is similar to the modern conception.

In the early 20th century, Russell and Whitehead continued Frege's work of putting mathematics on a firm logical basis. Crucially, this included developing methods for avoiding inconsistencies, e.g., due to unrestricted formation of sets of entities~\citep{Russell1918}. In the end, they proved many significant theorems of arithmetic and set theory in their logical system by rigorous inference~\citep{Whitehead1997}, but relied on axioms that were considered questionable at the time~\citep{sep-principia-mathematica}; this is echoed in more recent concerns for philosophical justification of the basis for the logical system underpinning a proof assistant~\citep{Barras2010}.
\citet{Goedel1930} established the connection between truth and provability for \emph{first-order} predicate logic, showing that proofs of true propositions can always be constructed, in principle (systems of inference rules can be made \emph{complete}). However, he then subsequently established that even modest extensions of expressibility in first-order logic lead to \emph{incompleteness}~\citep{Goedel1931}: there can be no system that allows constructing proofs for all true propositions. Together with other negative results, e.g., by \cite{Tarski1936}, this ended the search for a single universal logical system as a foundation for mathematics and all mathematical endeavors.

At roughly the same time, a theoretical basis for computation and computer programs was given by \cite{Church1936} in the $\lambda$-calculus, and computers were developed more practically by von Neumann and others in the 1940s~\citep{vonNeumann1993}. As pointed out by \cite{Backus1978}, the $\lambda$-calculus and computers as described by von Neumann gave rise to two distinct program styles: the functional style is characterized by computational steps as reductions of \emph{expressions} and an absence of state, and the imperative style is characterized by computation as transitions between complex states and \emph{statements} that effect such transitions.

Also around that time, \cite{Curry1934} observed a connection between axioms and type systems.
This and later observations culminated in 1969 (published in \cite{Howard1980}) with the principle of \emph{formulae-as-types},
also known as \emph{propositions as types}, the Curry-Howard correspondence, or the Curry-Howard isomorphism.
This principle established the connection between programs and proofs, which provided groundwork for the
later development of ITPs.

\cite{turing1949} first considered the problem of correctness for an imperative program that computes the factorial of its input by repeated additions. He described how full correctness could be decomposed into verifying assertions associated with certain points in the code (today called \emph{invariants}), and how to ensure program termination by finding a consistently decreasing quantity (today called a \emph{variant} or \emph{ranking function}). However, this work remained obscure, and more systematic approaches for reasoning about imperative programming languages were presented only late in the 1960s~\citep{Floyd1967Flowcharts,Hoare:CACM69}.

\cite{McCarthy1960} proposed a practical realization of the functional style of programming in the form of the Lisp language. \cite{McCarthy1963} also highlighted the problem of putting computing and programs on a formal foundation. To this end, he proposed several formalisms for capturing different classes of functions, and showed how to reason about the equivalence of such functions. He also described how datatypes could be constructed recursively and be subject to inductive reasoning. \cite{Burstall1969} showed how to reason about more practical programs in the functional style using the principle of structural induction.

Research on logical reasoning using computers initially took two main forms~\citep{Warden2009Book}: (1)~fully automated proofs of propositions in simple proof systems such as Robinson's resolution system~\citep{Robinson1965}, and (2)~computer checking of the validity of single steps in human-constructed mathematical proofs, as in the Automath system by de~Bruijn~\citep{DeBruijn1970,DeBruijn1994}; Automath is notable for representing both propositions and proofs in the same formal system (a variant of the $\lambda$-calculus). The former approach is limited by the difficulty (and resulting long machine time) of finding proofs algorithmically and its bounds on expressiveness of propositions, while the latter is limited by the ingenuity (and labor supply) of the humans that construct the proofs that the system checks.
The legacy of Automath includes the \emph{de~Bruijn principle}~\citep{Barendregt2002}, which states that proof-checking programs should be as small and simple as possible to facilitate high assurance and trustworthiness.


\section{Proof Assistant Early History}
\label{sec:early-history}

In the early 1970s, Milner proposed an approach to computer proofs in between full automation and basic inference checking. One of his insights was that fine-grained automation can be directed by human ingenuity through so-called \emph{proof tactics} (Section~\ref{sec:tactics}), alleviating the burden on users in Automath-style systems. He also chose an underlying formal system (Scott's logic of computable functions~\citep{Scott1993}) that could represent concepts familiar to computer scientists and programmers, such as integers, lists, and computer programs themselves~\citep{Gordon2000}. The first implementation of his approach, called Stanford LCF~\citep{Milner1972b}, provided a workflow still used in several modern proof assistants, where the user inputs a command (e.g., a single tactic to apply to attempt to reach the current proof goal) and the system executes the command, resulting in a complete proof or in a number of subgoals. Although tactics could be complex, the system guaranteed that a proof reported as finished could be exported and verified independently by an Automath-style checker~\citep{Warden2009Book}.

Limitations on the flexibility and scalability of Stanford LCF prompted Milner to develop ML (Meta Language), a programming language for use in a new version of LCF. ML was a typed language, and Milner defined a theorem in LCF as an abstract data type whose predefined values were instances of axioms and whose operations were inference rules. This technique, which persists in some proof assistants today, is usually referred to as the ``LCF approach,'' and it in effect reduces the soundness of inferences in an embedded logical system to the soundness of the type system (and type checking mechanism) of the host language. Adventurous and flexible tactics could be implemented in ML and applied without concern for affecting soundness, although, e.g., termination was not guaranteed.

The resulting implementation of LCF in ML was called Edinburgh LCF~\citep{Milner1979}, and was further developed mainly by Paulson and Huet, who enhanced its reasoning capabilities and wrote a compiler for ML to avoid the overhead of interpretation~\citep{Gordon2000}. Paulson then went on to develop the Isabelle proof assistant framework~\citep{Paulson1994,Paulson2000}, and Huet to develop, with Coquand, the first version of the Coq proof assistant~\citep{Coquand1985}. 
Gordon used the last version of the LCF system, called Cambridge LCF, as a basis for the HOL proof assistant~\citep{Gordon1993}. 
The Nuprl proof assistant~\citep{Constable1986} also followed in the LCF tradition.
Together, these proof assistants comprise the \emph{LCF family}, and their recent incarnations
are now widely used in the research community. 
Recently developed proof assistants such as RedPRL~\citep{redprl} and Lean~\citep{deMoura2015} have also joined the LCF family.
ML was standardized as Standard ML~\citep{Milner1997}, and it and its dialects are widely used as implementation languages for proof assistants.

While Automath targetted mathematics, the initial applications of LCF-style systems for verification was in the area of programming languages and compilers. Stanford LCF had case studies for verified compilation of an imperative language to a stack-based language~\citep{Milner1972}, and equational theories on integers and lists~\citep{Newey1973}. Edinburgh LCF had case studies for verified programming language implementations~\citep{Cohn1983}, and an important use case of HOL was hardware verification~\citep{Boulton1992}.

\cite{pa-history-geuvers-sadhana09} and \cite{Harrison2014} provide a more comprehensive description of the history of ITPs.

\section{Proof Assistant Foundations}
\label{sec:foundations}

The foundational theories of many ITPs are based on some variation of the theory of \emph{types}, which goes back to Russell and his attempt in the early 1900s to avoid inconsistency in formal systems by forbidding pathological cases such as the set of all sets with some arbitrary property~\citep{sep-russell-paradox}. Specifically, \cite{Church1940} introduced the \emph{simply typed} $\lambda$-calculus to avoid inconsistencies in the original $\lambda$-calculus. This typed calculus, also referred to as Higher-Order Logic (HOL), is the basis of the proof assistants HOL4 and HOL Light.

While $\lambda$-calculus gives an account of computation, the principle of propositions-as-types was a later development, related to the conception of \emph{intuitionistic} mathematics by Brouwer, Heyting, Kolmogorov, and others~\citep{Heyting1956}. A basic tenet of intuitionism (or \textit{constructivism}) is to only admit mathematical objects that can be mentally construed from basic principles; postulation of existence or axiomatization is not enough. At the level of logical reasoning, this leads to intuitionists rejecting certain proofs established by an appeal to the law of excluded middle (LEM)---that all propositions are either true or false. Moreover, intuitionists interpret functions as effective methods of computation rather than, say, relations that satisfy some set of equations. Consequently, many classical mathematical theorems do not hold as typically formulated with such a restricted logic. However, similar theorems turn out to be possible to prove in many cases, as shown, e.g., by \cite{Bishop1985}. Widely used proof assistants based on intuitionistic type theories, following the tradition of \cite{MartinLof1982a, MartinLof1982b}, include Coq, Agda, and Lean.

\textit{Logical frameworks}~\citep{Harper1993} support reasoning about many different logics from within a single system. 
Automath is a logical framework, as is LF (Section~\ref{sec:dsmetatheory}).
The popular general-purpose proof assistant Isabelle similarly supports many logics, as long as they can be made to conform to underlying framework for simply-typed higher-order natural deduction. While HOL is the most commonly used logic for Isabelle, other bundled logics include first-order logic with Zermelo-Fraenkel set theory, and constructive type theory~\citep{Paulson2000}.





\subsection{Proof Objects}
\label{sec:proof-objects}

\citet{Barendregt2013} characterizes proof assistants according to how they deal with \emph{proof objects}, i.e., the certificates that some property is true according to the underlying logic. In proof assistants closely related to LCF such as Isabelle, HOL4, and HOL Light, proof objects are normally not represented in full, but constructed and checked piece by piece, i.e., they are \emph{ephemeral}. In contrast, Coq and Agda produce complete proof objects, although such objects are usually not kept in memory once constructed, but are stored on disk.

The time to construct and validate proof objects (piecemeal) in Isabelle, HOL4, and HOL Light is directly proportional to the object's size. However, checking proofs in Coq, Agda, and other proof assistants that support \emph{reflection}~\citep{Boutin1997}, i.e., computational steps in proofs, may not be proportional to proof object size. In effect, one computional step can take as long as all conventional steps combined.


\citet{Barendregt2007} uses a formula of the following kind to illustrate the usefulness of reflection and its relation to proof object sizes:
$$
A \defeq p \leftrightarrow (p \leftrightarrow (p \leftrightarrow (p \leftrightarrow (p \leftrightarrow (p \leftrightarrow (p \leftrightarrow (p \leftrightarrow (p \leftrightarrow p))))))))
$$
Proving $A$ directly requires repeated and tedious use of basic derivation rules. In addition, even if rule application is automated, the proof object will be large. Instead, it is possible to perform the proof indirectly by using computation. To this end, we define:
$$
B(1) \defeq p \qquad B(n + 1) \defeq p \leftrightarrow B(n)
$$
We then prove by induction on $n$ that whenever $n \geq 1$, we have $B(2 \times n)$. We conclude the proof of $A$ by rewriting using two equalities, $A = B(10)$ and $10 = 2 \times 5$, and apply the fact about $B$ that we just proved. In proof assistants that support reflection, the final proof object contains no trace of the proofs of the two equalities, since they are established using reductions in the logic engine. In proof assistants without reflection, full proofs must be provided for the equalities before they can be used for rewriting, resulting in large proof objects. For example, proof objects in Isabelle/HOL are typically large, but this does not necessarily mean that proof checking is slower overall, since they are ephemeral~\citep{Wenzel2015}. In effect, large proof objects in Isabelle can be viewed as a consequence of deliberate design decisions, e.g., concerning how to perform rewriting, computation, and proof checking.





\subsection{Equality}
\label{sec:equality}

In logical systems and type theories, there is a conceptual difference between \emph{definitional equality}, used for proof checking (type checking), and \emph{propositional equality}, used in expressing statements to prove. 

In \emph{intensional} type theories, such as the early intuitionistic type theory by \cite{MartinLof1982a} and the Calculus of Constructions~\citep{Coquand1988} (CoC), these concepts are completely distinct, which can limit what can conveniently be proven to be equal. For example, in Coq, $0+n=n$ follows by definitional equality, while $n+0=n$ requires inductive reasoning. 
In contrast, in \emph{extensional} type theories, such as that implemented in Nuprl~\citep{Constable1986}, definitional and propositional equality coincide. However, this means that proof checking is inherently undecidable, since propositional equality can be used to specify undecidable problems. Intuitively, intensionally equal entities are such that they are ``constructed in the same way,'' while extensionally equal entities ``behave in the same way.'' Proofs in extensional systems can sometimes be translated into intensional theories after adding a few axioms~\citep{Oury2005}.

Even within intensional type theories, not all notions of propositional equality are created equal.
Homotopy type theory~\citep{univalent2013homotopy} (HoTT), for example, is an intensional type theory~\citep{nlab:intensional_type_theory} in which 
the notion of propositional equality corresponds to type equivalence.
A \textit{type equivalence} between types \lstinline{A} and \lstinline{B} is a function:
\begin{lstlisting}
f : A -> B.
\end{lstlisting}
for which there exists some function:
\begin{lstlisting}
g: B -> A.
\end{lstlisting}
that is a mutual inverse:
\begin{lstlisting}
section$\phantom{ion}$ : $\forall$ (a : A), g (f a) = a.
retraction : $\forall$ (b : B), f (g b) = b.
\end{lstlisting}
\textit{Univalence} in HoTT states that propositional
equality between types is equivalent to type equivalence between those types.
Consequentially, in HoTT, it is possible to treat equivalent types as being the same.

Both CoC and HoTT are intensional. In both of these type theories,
propositional equality corresponds to inhabitance of the identity type. However, in HoTT,
univalence provides a means of constructing a term of the identity type~\citep{escardo2018self}
that is not present in CoC. This has implications for other properties of these intensional
type theories. For example, in HoTT, as a consequence of univalence, \textit{functional extensionality} holds: functions
can be proven equal merely from the fact that they always return the same values for the same arguments.
This is not true in CoC, though it may be consistently assumed as an axiom.

There are many other weaker equalities 
than propositional equality that can be useful for reasoning about programs and systems. \cite{McBride2002} proposes a heterogenous equality relation for type theories where terms can be considered equal despite having different types. As \cite{CPDT} remarks, researchers are continually discovering new ways for entities such as functions and data to be equal.

\subsection{Predicativity}

The term \emph{predicative} was first used by \cite{Russell1906} to describe so-called propositional functions $\phi(x)$ that define a class, i.e., for which the class $\{x : \phi(x)\}$ actually exists. He distinguished such functions from \emph{impredicative} functions for which no such class exists~\citep{Feferman2005}. For example, the propositional function specifying that a class has itself as member does not define a class, and is thus impredicative. In modern logical systems, enforcing predicativity means that when objects are defined using quantifiers, no such quantifier may be instantiated with the object itself (see Chapter 12 of \cite{CPDT}).

Isabelle's meta-logic stays within predicative simple type theory~\citep{Paulson2000}.
In Coq, the \lstinline{Type} universe (including \lstinline{Set}) is predicative, but \lstinline{Prop} is impredicative.
Including both of these provides a balance to users of consistency with common axioms and expressivity:
Impredicative \lstinline{Type} with \textit{large elimination} (pattern matching that returns terms of type \lstinline{Type})
would not be consistent with LEM, which can be added as an axiom in Coq~\citep{predicavity1}.
On the other hand, there is an informal
consensus that the impredicativity of \lstinline{Prop} adds expressivity which is useful for expressing
most mathematical proofs~\citep{predicavity1}.
Otherwise, in predicative logic, some proofs are more complex~\citep{predicativity2},
though it is not known to what extent this has practical implications on what it is possible to express in each logic.
Thus, Coq includes an impredicative \lstinline{Prop} universe in which large elimination is disabled.

\subsection{Definitional Mechanisms}
\label{sec:definitional}
Programs of interest to computer scientists and engineers often involve classic data structures such as lists, trees, and natural numbers. These data structures can be described, e.g., by initial algebras in category theory or in fixpoint theory~\citep{Scott1970}. Most proof assistants provide mechanisms for defining such data structures; these mechanisms take one of three forms~\citep{Berghofer1999}: (1)~axiomatic, (2)~inherent, and (3)~definitional.

In the first approach, taken by early users of the LCF system, datatype constructors are defined by introducing new axioms, from which induction principles are proved~\citep{Paulson1984}. In the second approach, the underlying logic is extended to support custom datatypes, which requires metatheoretic investigation, e.g., as carried out by \cite{Coquand1990} for inductive types in CoC, and then implemented in Coq by \cite{PaulinMohring1993}. In the third approach, datatype support is added on top of already existing mechanisms; this is done by \cite{Pfenning1990} for the CoC and by \cite{Berghofer1999} for HOL. Church's classic encoding of numbers as functions repeatedly applying an argument function in the $\lambda$-calculus may be considered an example of the definitional approach.

Initial support for datatypes in proof assistants only included inductive datatypes, i.e., the minimal solutions to fixpoint equations. Inductive datatypes are arguably the most important, since they facilitate proofs by the fundamental technique of structural induction~\citep{Burstall1969,harper2016practical}. However, some applications require coinductive datatypes (maximal solutions), which can be accounted for in most type theories~\citep{Coquand1994}. \citet{Gimenez1995} initially implemented support for coinductive and corecursive functions in Coq, while \citet{Paulson1997} did the same for Isabelle/HOL.

A long-standing issue in proof assistants is developing mechanisms for \emph{quotient types}, which are defined by dividing members of an existing type into equivalence classes. Quotients are widely used in mathematical reasoning, in particular in algebra. An initial approach to quotients in Isabelle/HOL was proposed by \cite{Slotosch1997}, with later alternatives by \cite{Paulson2006} and \cite{Huffman2013}. \cite{Cohen2013b} proposed an approach to quotient types in Coq which elides the conventional approach using \emph{setoids}~\citep{GEUVERS2002271} that significantly restricts the scope of rewriting tactics.

The dependently-typed language Cedille~\citep{stump2017calculus} makes it possible to define induction principles
in a language based on Church-encodings~\citep{Church1941}, and encodes datatypes in terms of induction principles. 
This allows for, among other things, zero-cost reuse of functions and proofs across certain datatypes (Section~\ref{sec:proof-reuse}).

Research on definitional mechanisms is still an active topic.
\citet{Sozeau2010} designed and implemented a Coq extension for defining functions equationally which compiles definitions down to eliminators for inductive types; this extensions was used for the function \lstinline{acc} in Chapter~\ref{ch:ex}. \citet{Biendarra2017} presented a redesigned Isabelle/HOL library, following the definitional approach, for writing and reasoning about inductive and coinductive datatypes. While Coq and Agda inherently allow \emph{nonuniform} datatypes, i.e., recursive types whose arguments vary recursively, HOL systems did not support them until the advent of this library, which reduces such definitions to uniform counterparts.

\subsection{Totality of Functions and Termination}
\label{sec:totality-termination}
The logic of Church's simply-typed $\lambda$-calculus, HOL, is a logic of total functions. This means that partial functions cannot be directly described in proof assistants based on HOL, such as Isabelle/HOL. Similarly, the Calculus of Inductive Constructions (CIC), which Coq is based on, supports only total functions. Partial functions can still be indirectly encoded in CIC and HOL, for example by (a) returning values in a monad~\citep{McBride2015}, such as the coinductive delay monad described by \cite{Capretta2005}, (b) requiring proofs of argument value subset membership as function arguments~\citep{CoqArt}, (c) letting functions return values in the option type, or (d) capturing functions as inductive relations between input and output.

Functions in proof assistants based on intuitionistic type theories like CIC need to be terminating for the sake of consistency. When a function is defined in Coq, for example, termination is automatically proven for cases where functions recurse on a subterm of the input and in other simple cases; in more advanced cases, users must manually prove termination or rely on approaches that use, e.g., \emph{sizes} of argument terms~\citep{Abel2017}. In contrast, functions in HOL are not required to be computable at the outset, and thus do not need to be accompanied by termination witnesses. On the other hand, the uses of functions with unproven termination are somewhat limited.

Requirements for totality and termination are two hurdles that new users of ITPs face. They are constraints even to users familiar with functional programming languages, where no such requirements are typically imposed. For certain functions, arguing and formally proving termination may not even be a key concern. 
In that spirit, Zombie~\citep{Casinghino2014} separates out a logical, terminating fragment from a programmatic,
possibly non-terminating fragment, that way the programmer can move freely between those fragments.

In other languages, a common technique to encode such functions in a total setting is to define a ``fuel'' argument, such as a natural number. Either the fuel argument is empty (0) and the function terminates, or there is enough fuel to continue to, e.g., perform recursive calls. This allows 
for proving termination by a simple structural argument on the fuel type. When calling the function in some other context, passing ``infinite fuel'' may be possible, which implicitly trusts that the function always terminates. \cite{Jourdan2012} provide a detailed description of the fuel technique in the context of a verified parsing function on a potentially infinite stream of tokens in Coq.




 
\section{Trusted Computing Bases of Proofs and Programs}
\label{sec:trustedbases}

The concept of a Trusted Computing Base (TCB) was introduced by \cite{Rushby1981} in the context of security of computer systems. The basic idea is that the security of a system may be reduced to the security of a proper subset of all system components. If these components behave as expected, the system as a whole is secure. For verified software, security is replaced with correctness, e.g., functional correctness. \cite{Kumar2015} divides the TCB into the following categories:
\begin{itemize}
\item formal models of system components (e.g., model of a processor);
\item system components for which there are no explicit formal models (e.g., linker or operating system);
\item tools used to check proofs about the system (e.g., proof assistant).
\end{itemize}

\subsection{TCB of Proofs}
\label{sec:trust-proofs}

\cite{Pollack1998} considers the question of how to trust specific machine-checked proofs, and by extension, programs that such proofs pertain to. He divides the question into a purely formal part---whether a provided proof is derivable in a given formal system---and an informal part that asks whether the proof has a purported meaning as expressed outside any formal system. 

As to the formal part, trusting the proof can be reduced, by computer, to trusting the (implementation of) the proof checker of the formal system; the source code for such proof checkers can be compact and readable. However, Pollack argues that a complicated semantics of the checker's implementation language can still provide serious obstacle to trust, and proposes that the language itself should be a logical framework designed to represent formal systems, such as LF~\citep{Harper1993} or Isabelle~\citep{Paulson1994}. As to the informal part, Pollack points to that understanding specific pieces of mathematics relies on acceptance of previous mathematics, whose trust may be partly due to its wide acceptance. 

Based on Pollack's investigation, \cite{Wiedijk2012} defined the notion of \emph{Pollack-inconsistency}, which is expressed in terms of the mechanisms a proof checker uses to print and parse its formulas (which are what the user ultimately must intepret informally). In particular, Wiedijk argues that a system should always be able to parse formulas it outputs. He then demonstrates that current proof assistants are Pollack-inconsistent to some extent, but outlines how this can be addressed by modifying the implementations of printing and parsing.

With the goal of determining how small a trusted proof checker can be for a practical application,
\cite{Appel2003} attempted to minimize the size of a proof checker for proof-carrying machine code.
The result was less than 2700 lines of code.
The Lean theorem prover attempted to minimize the size of the proof-checking kernel from the start~\citep{deMoura2015}.

\cite{coqincoq} encoded a limited version of the formal system underlying Coq in Coq itself, and proved strong normalization of its type system. \cite{Barras2010} addressed the problem of providing set-theoretical models of CoC, the logic that underpins Coq, with the ultimate goal of ensuring that Coq's theory is consistent with the theory implicitly or explicitly assumed by most mathematicians. \cite{Anand2014} encoded and verified the foundations of the Nuprl proof assistant in Coq. \cite{Davis2015} certified the Milawa theorem prover. \cite{Kuncar2018} proved the relative consistency of extensions made to the foundations of Isabelle/HOL. 
\cite{Anand2018} encoded Coq's internal data structures in Coq itself and gave a semantics for type checking, leading up to the MetaCoq project~\citep{Sozeau2019} for building verified checking and extraction for Coq.

\subsection{TCB of Programs}
\label{sec:trust-programs}

Coq and other similar proof assistants contain logic engines that can execute functions that have been verified. However, execution inside such a logic engine is generally slow compared to execution of native functional programs~\citep{Leroy2015}, and does not directly support handling of input and output. Instead, to obtain practical verified programs, proof assistant users rely on mechanisms such as \emph{program extraction} to produce programs that can be integrated into larger systems or executed in conventional runtime environments. However, these mechanisms may increase the trusted base of verified programs.

\paragraph{Program Extraction}

\citet{Paulin1989,Paulin-Mohring1989} proposed realizability for CoC to $F_\omega$. To obtain practical executable programs from Coq functions, \cite{Paulin1993} extended the realizability from $F_\omega$ to ML. \cite{Letouzey2003,Letouzey2004} later introduced a new extraction mechanism for Coq which removed several restrictions. This introduced an intermediate language called MiniML, which can be translated to OCaml, Haskell, and Scheme. The new mechanism, argued correct by a conventional proof, was evaluated on several large projects~\citep{Berger2005,CruzFilipe2006,Letouzey2008}.

\cite{Berghofer2002} identified a subset of HOL that can be translated to practical functional languages and implemented code generation for Isabelle/HOL. \cite{Haftmann2010} proposed a redesign of the code generation mechanism in Isabelle/HOL; their approach is based on translating HOL to an intermediate language called Mini-Haskell, and then further to Standard ML, Haskell, and OCaml. The correctness argument is reminiscent of that for Coq's extraction mechanism. \cite{Haftmann2013} proposed a data refinement (Section~\ref{sec:refinement}) framework which replaces abstract datatypes with concrete ones, which widens the scope of code generation.



\paragraph{Beyond Extraction and Code Generation}

Practical functional programming languages such as OCaml and Haskell lack a fully formal (and machine-checked) semantics. Proof assistant users who want to avoid trusting extraction may use \emph{deep embeddings} (Section~\ref{sec:embedding}) of target practical programming languages along with language semantics. These embeddings and semantics can then be used in certified compilers (Section~\ref{sec:verified-compilers}), which may include formal models of system components such as processors, to produce verified machine code. However, these approaches may have more restrictions and inconveniences than extraction.
Removing or circumventing these restrictions may be fruitful. Continuing to develop and improve certified compilers for Coq like \OE uf and 
CertiCoq, for example, may help proof engineers circumvent extraction to OCaml and Haskell altogether.
Instead, proof engineers may be able to directly compile certified programs to machine or assembly code and run those programs directly.

%




\chapter[Between the Engineer and the Kernel: Languages and\\ Automation]{Between the Engineer and the Kernel: Languages and Automation}
\label{sec:prooflangs}

At the core of every ITP that meets the de Bruijn criterion
is a proof object that a small kernel can check. Directly constructing this proof object
may be too low-level to provide for a positive user experience, and in some cases the proof object may not
be exposed to the end-user at all. Typically, several layers of languages and automation
act as an interface between the proof engineer and the kernel.


Automation is all about finding a \textit{proof}---what the proof object (Section~\ref{sec:proof-objects}) is can differ by ITP.
Because of this, and because of the different philosophies and needs of different communities,
ITPs have different approaches to automation.
We consider two commonly used ITPs as examples: Coq and Isabelle/HOL.

In Coq, a proof is a term in the language Gallina, which the kernel type-checks.
The Coq proof engineer can write proofs in Gallina directly, but it is common
to write proofs using high-level \textit{tactics}, or proof search procedures. 
In Coq, these tactics search for and ultimately produce a Gallina term, which the kernel then type-checks.
Users can combine existing tactics, or write their own, either in the general-purpose programming language
OCaml or in the tactic language Ltac. 

In contrast, in Isabelle/HOL, a proof is a term in the language ML with
a specific type, combined with the guarantee that this term must have been
produced using the axioms of the logic.\footnote{It is possible to explicitly produce proof terms that can be checked
with a small kernel using Isabelle/HOL-Proofs, but it is not common to explicitly do so.}
It is not possible to directly write a term of this type, since correctness hinges on the guarantees
about its construction. Instead, users commonly write proofs in Isabelle/Isar, which is a high-level \textit{proof language}:
a language for structuring and composing propositions, facts, and proof goals. 

Ltac and Isabelle/Isar are examples of two languages which embody different styles of automation.
This chapter discusses these and other styles of automation (Section~\ref{sec:languagesforauto}),
then concludes with a discussion of automation in practice (Section~\ref{sec:specializedauto}) built in these different styles.






\section{Styles of Automation}
\label{sec:languagesforauto}

Proof engineers can construct proofs of theorems in a wide variety of ways.
There are three common styles of proof automation:
writing sequences of proof search \textit{tactics} (which can be defined
either using a \textit{metalanguage} like Standard ML or a specialized
\textit{tactic language} like Ltac), writing high-level programs in a structured \textit{proof language},
and using \textit{reflection} to write proof-checking procedures within the host language itself.
When executed, all expressions or commands reduce to primitive
inference rule applications in the proof-checking kernel.

Ltac and Isabelle/Isar are examples of a tactic language and a proof language, respectively.
Some proof assistants (for example, both Coq and Isabelle) have
support for all three styles of automation, either natively or through
extensions. However, not all do; Agda, for example,
takes a minimilistic approach,
supporting only reflection (it is possible to imitate tactics
using reflection). Table~\ref{tab:styles}
references examples of supported styles of automation for a sample of major proof assistants.

\begin{table}
  \begin{footnotesize}
\centering
\begin{tabular}{ |l|l|l|l|l| }
\hline
   & \multicolumn{2}{|c|}{\textbf{Tactics}} & &  \\ \hline
   & \textbf{Metalanguage} & \textbf{Tactic Lang.} & \textbf{Proof Lang.} & \textbf{Reflection} \\ \hline
  \textbf{Agda} & & & & Possible \\\hline
  \textbf{Coq} & OCaml & Ltac & \ssreflect & Possible \\\hline
  \textbf{Isabelle} & ML & Eisbach & Isar & Possible \\\hline
  \textbf{Nuprl} & ML & & & Possible \\
\hline
\end{tabular}
\caption{Examples of styles of automation a sample of major proof assistants support, including some external developments.}
\label{tab:styles}
\end{footnotesize}
\end{table} 

These styles often merge,
and the lines between them can be blurry. It is possible, for example,
to write proofs in the high-level proof language SSReflect in Coq, or to
combine this language with Ltac tactics. It is possible to write ML tactics in Isabelle/HOL, and to write tactic-style proofs in
Isabelle/HOL that look similar to Ltac proofs in Coq.

This section explores the design space and uses in common proof assistants of languages for different styles of automation:
Tactics and tactic languages (Section~\ref{sec:tactics}), proof languages (Section~\ref{sec:structuredprooflangs}), and
reflection (Section~\ref{sec:refl}).
It then concludes with a discussion of future styles of automation (Section~\ref{sec:futurestyle}).

\subsection{Tactics \& Tactic Languages}
\label{sec:tactics}

LCF introduced the language ML (\textit{metalanguage}) to let users
write high-level proof automation~\citep{Gordon1978}.
In LCF, theorems are represented using the abstract type \lstinline{thm} in ML;
the only way to construct an inhabitant of \lstinline{thm} is using the axioms and inference rules of the logic.
A proof in LCF has the following type in ML:
\begin{lstlisting}
type proof = thm list $\rightarrow$ thm
\end{lstlisting}
In other words, a proof is a function that takes a list of hypotheses and, from them, proves the conclusion.

A basic unit in LCF proof automation is the \lstinline{tactic}:
\begin{lstlisting}
type tactic = goal $\rightarrow$ (goal list $\times$ proof)
\end{lstlisting}
The \lstinline{goal} type (not shown) represents proof goals.
Thus, a tactic is a function that takes a proof goal and then produces a list
of new goals which the goal reduces to; such goals are conventionally called \emph{subgoals}.
When no more goals remain, the tactic produces a value of type \lstinline{proof}.

Based on this tactic definition, it is possible to define higher-order functions that take tactics as arguments
and return new tactics. Milner called such functions \emph{tacticals}. For example, a collection
of tactic combinators may include the tactical \lstinline{repeat} with type \lstinline{tactic $\rightarrow$ tactic},
which repeatedly applies its argument tactic to the proof goal. In Coq, the composition tactical \lstinline{t1; t2}
runs the first tactic \lstinline{t1}, then runs the second tactic on all goals
produced by the first tactic \lstinline{t2}.


However, the LCF representation of tactics means that, by definition, the outputted goals are mutually independent---a proof for one
is unrelated to proofs for others. In practice, constraints during proof
search can apply across subgoals~\citep{SpiwackTactic}. For this reason, Coq previously used, up to at least version 8.3 in 2010~\citep{Spiwack2010},
a tactic type definition along the following lines: 
\begin{lstlisting}
type proof = thm list $\rightarrow$ thm
type tactic = goal $\times$ state $\rightarrow$ (goal list $\times$ state $\times$ proof)
\end{lstlisting}
Here, the state returned from a tactic call can be used to figure out dependencies between subgoals, such as shared variables.

In the early days of proof assistants, users combined custom tactics written in an ML dialect with built-in tactics and tactical combinators to write custom 
automation~\citep{Constable1986, cornes1995coq, paulson1988preliminary, paulson1983tactics}. This tradition for programming proof automation is the default workflow in the HOL4 and HOL Light proof assistants, and remains a possibility in Coq and Isabelle. However, Coq and Isabelle also support writing tactics in tactic languages rather than in the tool's implementation language.

\subsubsection{Tactic-Based Proofs}

The original proof development workflow in LCF was to write sequences of tactic calls until no proof goals were left. This style is still prevalent in modern proof assistants.
Consider an inductive proof of the theorem \lstinline{app_nil_r} in Coq,
which states that appending the empty list to any list produces the original
list. We can write this using tactics and tacticals:
\begin{lstlisting}
Theorem app_nil_r : forall (A : Type) (l : list A), l ++ [] = l.
Proof.
  intros. induction l; auto. simpl. rewrite IHl. auto.
Qed.
\end{lstlisting}
Executing these tactics produces a Gallina proof term:
\begin{lstlisting}
(fun (A : Type) (l : list A) =>                         (* hypotheses *)
  list_ind                                              (* induction principle for lists *)
    (fun (l$_0$ : list A) => l$_0$ ++ [] = l$_0$)                (* motive to prove *)
    eq_refl                                             (* base case (by reflexivity) *)
    (fun (a : A) (l$_0$ : list A) (IHl : l$_0$ ++ [] = l$_0$) => 
      (* inductive case (by rewriting) *)
      eq_ind_r (fun (l$_1$ : list A) => a :: l$_1$ = a :: l$_0$) eq_refl IHl)
    l)                                                  (* argument to induction *)
: forall (A : Type) (l : list A), l ++ [] = l.          (* theorem type *)
\end{lstlisting}

There is an analogous in the Isabelle/HOL standard library:
\begin{lstlisting}
lemma append_Nil2 : "append xs [] = xs"
by (induct xs) auto
\end{lstlisting}
While it is not typical to do so, using Isabelle/HOL-Proofs, it is possible to reconstruct and inspect a proof object
for this proof.

\subsubsection{Tactic Languages}
\label{sec:tacticlangs}


Tactic languages allow proof engineers to write custom tactics alongside specifications and proofs, rather than in an implementation language such as Standard ML.
We describe two tactic languages---Ltac for Coq and Eisbach for Isabelle---in detail,
then conclude with a brief discussion of other tactic languages. 

\paragraph{Ltac}
Nearly 20 years ago, Coq introduced the Ltac tactic language~\citep{Delahaye2000},
which has since become the standard for tactic development in Coq.
Ltac is an untyped domain-specific language with support for pattern
matching on terms and goals, as well as writing custom tactics and tacticals.
The Ltac manual can be found in the Coq documentation~\citep{ltac}.

To understand Ltac, consider the \lstinline{break_match} tactic from the StructTact~\citep{structtact} library:
\begin{lstlisting}
Ltac break_match := break_match_goal || break_match_hyp.
\end{lstlisting}
This tactic breaks down match statements, both in goals:
\begin{lstlisting}
Ltac break_match_goal := match goal with
  | [ |- context [ match ?X with _ => _ end ] ] =>
    match type of X with
      | sumbool _ _ => destruct X
      | _ => destruct X eqn:?
    end
end.
\end{lstlisting}
and in hypotheses:
\begin{lstlisting}
Ltac break_match_hyp := match goal with
  | [ H : context [ match ?X with _ => _ end ] |- _] =>
    match type of X with
      | sumbool _ _ => destruct X
      | _ => destruct X eqn:?
    end
end.
\end{lstlisting}
Both tactics perform syntactic pattern matching
over the goal, using the syntax \lstinline{name : cpattern |- cpattern}, where the 
left \lstinline{cpattern} represents hypotheses and the right \lstinline{cpattern} represents the conclusion.
In \lstinline{break_match_goal}, pattern matching looks only in the conclusion;
in \lstinline{break_match_hyp}, pattern matching looks only in the hypotheses.
Both tactics use the \lstinline{context} syntax to find all subterms of the term that are \lstinline{match} statements,
then pattern match on the type of the result, using the \lstinline{destruct} tactic to break down those \lstinline{match} statements.

The effect of \lstinline{break_match} is to simplify tedious but conceptually simple proofs by case analysis,
and to do so without relying on the names of hypotheses, which can make proofs likely to break as specifications change~\citep{Woos2016}. 
Consider, for example, an interpreter correctness proof:
\begin{lstlisting}
Lemma interp_eval : forall op v v', interp op v = Some v' -> eval op v v'.
Proof.
  unfold interp. intros. destruct op; destruct v; 
  try discriminate; inversion H; constructor.
Qed.
\end{lstlisting}
Here, the \lstinline{intros} tactic introduces hypotheses named \lstinline{op}, \lstinline{v}, \lstinline{v'}, and \lstinline{H}.
The \lstinline{destruct op; destruct v} sequence of tactics then does case analysis on \lstinline{op} and \lstinline{v}.
We can use \lstinline{repeat break_match} instead of \lstinline{destruct op; destruct v} to simplify this proof:
\begin{lstlisting}
Proof.
  unfold interp. intros. repeat break_match;
  try discriminate; inversion H; constructor.
Qed.
\end{lstlisting}
The resulting proof is more concise. It is also less likely to break as specifications change,
since it does not depend on the automatically generated names \lstinline{op} and \lstinline{v}, which may later change
(Section~\ref{sec:des-scale}).

Ltac was designed to
achieve a balance between the flexibility of writing tactics in a powerful language like OCaml,
and the ease of flexibility from using built-in combinators to write custom tactics
directly inside of the Coq proof assistant.
Like ML and OCaml, it is Turing-complete (see Chapter 16 of \cite{CPDT}). However, it gives proof engineers
limited access to underlying features like environment management. This means that
proof engineers do not have to deal with low-level issues in OCaml such as managing de Bruijn indexes;
proof engineers who want that level of control may write plugins in OCaml.

Ltac2~\citep{ltac2}, the next generation of Ltac, is in development.
It comes full circle, returning to the ML family of languages. 

\paragraph{Eisbach}

The tactic language Eisbach~\citep{Matichuk2015EisbachAP} for Isabelle
was inspired by Ltac. Eisbach is tactic language that is integrated
into the proof language Isabelle/Isar. Using Eisbach, proof engineers 
write tactics (called \textit{proof methods}) directly in Isar syntax.
The Eisbach manual can be found in \cite{matichuk2015eisbach}.

To understand Eisbach, consider an example proof method from the Eisbach manual for solving existentials:
\begin{lstlisting}
method solve_ex =
  (match conclusion in $\exists$ x. Q x for Q $\Rightarrow$
    <match premises in U : Q y for y $\Rightarrow$
      <rule exI [where P = Q and x = y, OF U]>>)
\end{lstlisting}
This matches the current conclusion with the \lstinline{Q} in $\exists$ \lstinline{x. Q x},
then looks in the hypotheses for a term that matches \lstinline{Q y} for the matched \lstinline{Q} and some \lstinline{y},
and then calls the introduction rule for existentials with that hypothesis. 
In other words,
if some hypothesis is \lstinline{Q y} and the goal is $\exists$ \lstinline{x. Q x},
then \lstinline{solve_ex} will automatically prove the goal from the hypothesis.
The manual shows an example of calling this proof method to solve a proof:
\begin{lstlisting}
lemma halts p $\Rightarrow$ $\exists$ x . halts x
  by solve_ex
\end{lstlisting}

Like Ltac, Eisbach provides limited access to low-level details.
Isabelle proof engineers who want access to these details may write
proof methods directly in Isabelle/ML. 

%

\paragraph{Other Tactic Languages}


The untyped nature of languages like Ltac may make it difficult to
debug custom automation and to provide strong guarantees on custom
tactics. Typed tactic languages such as those of the
Delphin~\citep{poswolsky2009system} and Beluga~\citep{pientka2008programming} logical frameworks, 
the tactic language for VeriML~\citep{Stampoulis2010}, and
Mtac~\citep{Ziliani2015} and Mtac2~\citep{Kaiser2018} in Coq
address these problems.

PSGraph~\citep{lin2016understanding} is a graphical tactic language which aims to make
debugging and refactoring of tactics easier. In PSGraph, tactics are flow graphs for proof subgoals,
and executing tactics amounts to following the flow graph directly. 

The Matita~\citep{asperti2007user} proof assistant introduces
a language of \textit{tinycals} to address some challenges that tacticals
pose for user interaction. In particular, proof assistants traditionally execute tacticals
atomically. This makes it difficult to communicate how the tactical is executed to the user,
as well as to debug tacticals and provide useful error messaging when they fail.
Tinycals, in contrast, act like traditional tacticals, except that they allow
for more fine-grained execution.

Some tactic languages merge the tactic approach with metaprogramming constructs,
so that users can write tactics in the proof language itself.
In Idris~\citep{brady2013idris}, it is possible to implement tactics using the elaboration monad, which exposes
elaboration to users and enables metaprogramming within Idris~\citep{Christiansen2016}.
Agda recently replaced its reflection mechanism
with one based on Idris' elaborator reflection~\citep{agda-elab}.

Similarly, Lean exposes several metaprogramming constructs (including a tactic monad),
which enable users to write tactics in Lean itself, to access proof state, and to access
internal methods in the underlying C++ codebase~\citep{Ebner2017};
this enables proof engineers to write powerful procedures.

Some tools merge the approach of proof by reflection
with a tactic language; we discuss these in Section~\ref{sec:refl}.

\subsection{Proof Languages}
\label{sec:structuredprooflangs}

A proof language is a mechanism for structuring and composing
propositions, facts, and proof goals. Proof languages generally allow both
\emph{backward} reasoning, going from the proof goal to a new set of goals,
and \emph{forward} reasoning, where new facts are added to the proof context
but the goal remains the same. When formulating reasoning steps in a proof language,
procedures in lower-level languages, such as tactics, can be invoked explicitly
or implicitly. In turn, these procedures can invoke specialized external proof search
programs. In contrast to plain unstructured sequences of commands (``tactic soups''), proofs written in
proof languages are usually meant to convey key proof ideas, i.e., to be understandable by humans.
To ensure readability, proof languages take inspiration from
traditional mathematical vernacular. We consider two proof languages---Coq/\ssreflect and Isabelle/Isar---in detail, and then briefly discuss other proof languages.

\paragraph{Coq/\ssreflect}

Coq/\ssreflect is a proof language for Coq that emphasizes reasoning by rewriting using equalities and proofs by computation via small-scale reflection~\citep{Gonthier2010}. It was originally developed by Gonthier in the context of his proof of the four-color theorem~\citep{Gonthier2008}. Idiomatic proofs in Coq/\ssreflect make use of ``bullets'' (\lstinline{-}, \lstinline{*}, or \lstinline{+}) to structure proofs similarly to how Isabelle/Isar uses indentation of blocks to indicate structure.

The basis of Coq/\ssreflect is that a step in a proof is one of the following:
\begin{itemize}
\item a \emph{deduction step} that directly constructs parts of a proof, either by backwards or forwards reasoning;
\item a \emph{bookkeeping step} that performs a management operation on the proof context, e.g., introducing or renaming assumptions;
\item a \emph{rewriting step} that changes parts of the proof goal or some assumption, either by way of some equality lemma or by computation.
\end{itemize}
In idiomatic proofs, these kinds of steps are interleaved and tend to be used in equal proportions.

The \emph{reflect} part of \ssreflect refers to the convention of performing deduction steps that translate between symbolic representations, such as expressions involving boolean functions, and logical representations, such as inductive predicates. For example, when a proof goal can be solved by reasoning in propositional logic, we can convert the proof goal to boolean form and perform computation in Coq's logic engine instead of applying multiple propositional derivations manually. In contrast to general proofs by reflection (Section~\ref{sec:refl}), which focus on computational efficiency and managing the sizes of proof objects, \ssreflect leverages reflection for convenience and user productivity. For example, a conjunct can be reflected to a boolean value that directly computes to \lstinline{true}, saving the manual effort of applying tactics.

The proof language contains special syntax for translating between representations, which is called application of \emph{view lemmas}. Moreover, Coq users traditionally use different commands for rewriting, definition expansion, and partial evaluation. In Coq/\ssreflect, all of these tasks are performed via parameters to the \lstinline{rewrite} tactic. Rewriting operations can pinpoint specific subterms in the current proof goal through the use of \emph{pattern expressions}~\citep{Gonthier2012} that may mention some constant names for disambiguation but leave others implicit.

Consider the following lemma from the Mathematical Components project whose proof is written in idiomatic \ssreflect:
\begin{lstlisting}[language=Coq]
Lemma edivnP : forall m d, edivn_spec m d (edivn m d).
Proof.
rewrite /edivn => m [|d] //=; rewrite -{1}[m]/(0 * d.+1 + m).
elim: m {-2}m 0 (leqnn m) => [|n IHn] [|m] q //=; rewrite ltnS => le_mn.
rewrite subn_if_gt; case: (ltnP m d) => [// | le_dm].
rewrite -{1}(subnK le_dm) -addSn addnA -mulSnr; apply: IHn.
apply: leq_trans le_mn; exact: leq_subr.
Qed.
\end{lstlisting}
Here, the first rewrite unfolds the \lstinline{edivn} definition, while the last rewrite performs chained rewriting using facts arithmetic, such as that addition is associative (\lstinline{addnA}). Explicit names for quantified variables are given after the operator \lstinline{=>}, which reduces the chance of brittleness due to reliance of machine-generated variable names. The tactic \lstinline{elim} performs induction on the natural number \lstinline{m}.

\paragraph{Isabelle/Isar}

Isabelle/Isar is a proof language for Isabelle that aims for human readability while retaining some symbolism of formal deduction systems~\citep{Wenzel2007isar}. It is built on the Isabelle/Pure logic, which is an intuitionistic fragment of HOL. Isabelle/Isar can be understood as an interpreter for block-structured syntax capturing the flow of facts and proof goals~\citep{Wenzel2006}.

As an example, the Isabelle/Isar manual~\citep{wenzel2004isabelle} contains a definition of a group
that assumes only a left identity element, along with the following proof
that for any group, the identity element of the group is a right identity (we expand the 
\lstinline{term} $\ldots$ notation from the version in the reference manual for clarity):
\begin{lstlisting}
theorem right_unit : x $\circ$ 1 = x
proof -
  have 1 = x$^{-1}$ $\circ$ x by (rule left_inv [symmetric])
  also have x $\circ$ (x$^{-1}$ $\circ$ x) = (x $\circ$ x$^{-1}$) $\circ$ x by (rule assoc [symmetric])
  also have x $\circ$ x$^{-1}$ = 1 by (rule right_inv)
  also have 1 $\circ$ x = x by (rule left_unit)
  finally show x $\circ$ 1 = x.
qed
\end{lstlisting}
Translated directly into English, we can think of this as the following proof (with implicit symmetry of equality):
\begin{theorem}[Right unit]
$x \circ 1 = x$
\end{theorem}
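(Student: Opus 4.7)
The plan is to derive $x \circ 1 = x$ by a chain of equational rewrites that leverages the group axioms (left identity, left inverse, associativity) together with an already-available fact $x \circ x^{-1} = 1$ (right inverse). The strategy is to first rewrite the right-hand occurrence of $1$ as $x^{-1} \circ x$, then use associativity to re-bracket so that an $x \circ x^{-1}$ appears, which we collapse to $1$ via right inverse, leaving $1 \circ x$, which is finally reduced to $x$ by the left unit law. In Isar this is naturally a \texttt{have \dots also have \dots finally} calculation, where each step cites exactly one algebraic lemma.

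Concretely, the steps I would carry out, in order, are: (i) rewrite $1 = x^{-1} \circ x$ by the symmetric form of \texttt{left\_inv}, turning the goal into $x \circ (x^{-1} \circ x) = x$; (ii) apply \texttt{assoc} (in its symmetric form) to reassociate the left-hand side to $(x \circ x^{-1}) \circ x$; (iii) rewrite $x \circ x^{-1}$ to $1$ using \texttt{right\_inv}, obtaining $1 \circ x$; (iv) apply \texttt{left\_unit} to conclude $x$. Chaining these equalities transitively yields $x \circ 1 = x$. The role of the bookkeeping connective \emph{also}/\emph{finally} in Isabelle/Isar is to thread the transitivity of $=$ implicitly through the calculational block, so the only proof-technical work is choosing each rewrite.

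The main obstacle, and what makes this proof interesting from an engineering standpoint, is the availability of \texttt{right\_inv}: if the ambient definition of group only assumes a \emph{left} identity together with \emph{left} inverses and associativity, then $x \circ x^{-1} = 1$ is not an axiom but a prior lemma that must itself be proved before this one. The standard derivation goes by computing $x \circ x^{-1} = 1 \circ (x \circ x^{-1}) = ((x \circ x^{-1})^{-1} \circ (x \circ x^{-1})) \circ (x \circ x^{-1})$ and collapsing using associativity and \texttt{left\_inv} twice, which is a subtle manipulation and the real mathematical content. Once \texttt{right\_inv} is in scope, the right-unit law reduces to the simple four-line calculation above; in a proof-language setting like Isar this makes the result an excellent illustration of how human-readable vernacular cleanly separates the creative step (establishing \texttt{right\_inv}) from the routine equational bookkeeping.
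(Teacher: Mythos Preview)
Your proposal is correct and matches the paper's proof essentially step for step: the same four equational rewrites (\texttt{left\_inv[symmetric]}, \texttt{assoc[symmetric]}, \texttt{right\_inv}, \texttt{left\_unit}) chained via \texttt{also}/\texttt{finally}. Your observation that \texttt{right\_inv} is a prior lemma rather than an axiom in the left-only axiomatization is also exactly the setup the paper assumes.
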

\begin{proof}
By left inverse, $1 = x^{-1} \circ x$. By associativity, $x \circ (x^{-1} \circ x) = (x \circ x^{-1}) \circ x$.
By right inverse, $x \circ x^{-1} = 1$. By left unit, $1 \circ x = x$. Then by the above, $x \circ 1 = x$.
\end{proof}
Isabelle/Isar completes and checks this proof much like a human reader would, by making all of the appropriate substitutions.
We could render an alternate English proof with all of these substitutions explicit, rather than leaving them to the reader:
\begin{proof}
We can write $x \circ 1 = x$ as $x \circ (x^{-1} \circ x) = x$ by left inverse,
which is $(x \circ x^{-1}) \circ x = x$ by associativity,
which is $1 \circ x = x$ by right inverse,
which holds by left unit.
Thus, $x \circ 1 = x$.
\end{proof}
At this point, we are much closer to proof by a sequence of commands. The key difference for readability is that the intermediate goals are explicit
in the proof language, so to reconstruct an English proof, the reader does not need to step through tactics one-by-one and track the transformation of the goal.



\paragraph{Other Proof Languages}

Mathematically-inclined proof languages such as Isar for Isabelle and Czar for Coq~\citep{Corbineau2008} were influenced by the language of the Mizar proof system~\citep{trybulec1985computer}, which tilts more towards natural language than logical symbolism for representing deduction steps. Many proof assistants following the LCF tradition now have Mizar modes; for example, Mizar modes have been implemented in HOL~\citep{harrison-mizar} and HOL light~\citep{wiedijk2001mizar}, as well as in Coq~\citep{giero2003mmode}.

Building a proof system specifically for human-readability means that there is less of a barrier for humans and computers to check the same proofs.
Following in this spirit, the Formalized Mathematics journal consists entirely of mathematical properties and proofs in 
Mizar that are automatically translated into English and generated as PDFs, such as the properties of sets~\citep{darmochwal1990finite},
naturals~\citep{bancerek1990fundamental}, and reals~\citep{JFR1411}.

While most proof languages were designed with mathematics in mind, their use has not been confined to mathematics.
For example, using
Isar 
is recommended style for submission to the Isabelle Archive of Formal Proofs~\citep{isabelleafp},
which consists of more computer science than mathematics formalizations~\citep{Blanchette2015}.

The language PSL~\citep{Nagashima2017} for Isabelle/HOL allows expressing high-level \emph{proof strategies}. PSL generates efficient Isar proof scripts
from user-written strategies.


\subsection{Proofs by Reflection}
\label{sec:refl}

Writing proofs by reflection, that is, calling certified procedures within the host language itself~\citep{Allen1990}, can be viewed as an alternative to writing proofs in a proof language or using tactics. 

Chapter 15 of CPDT~\citep{CPDT} illustrates this style of proof in Coq and demonstrates its benefits on a proof that a natural number is even;
we present a slightly modified version of that example that is self-contained.
Given some inductive predicate for evenness:
\begin{lstlisting}
Inductive isEven : nat -> Prop :=
| Even_O : isEven O
| Even_SS : forall n, isEven n -> isEven (S (S n)).
\end{lstlisting}
we construct a verified function to check evenness:
\begin{lstlisting}
Fixpoint check_even (n : nat) : option (isEven n) := match n with
| 0 => Some Even_O
| 1 => None
| S (S n') => 
    match check_even n' with
    | Some p => Some (Even_SS n' p)
    | _ => None
    end
end.
\end{lstlisting}
For a given \lstinline{n}, \lstinline{check_even} returns an optional a proof of \lstinline{isEven n} (\lstinline{None} when \lstinline{n}
is not even). As CPDT notes, this type signature guarantees that it only returns a proof when \lstinline{n}
actually is even. 

Our goal is to write a tactic that uses \lstinline{check_even} to prove evenness.
To write this tactic, we need to extract the proof from the \lstinline{option} type above 
when possible. We define a dependently-typed function \lstinline{optionOut}
that does this:
\begin{lstlisting}
Definition optionOutType (P : Prop) (o : option P) := 
  match o with
  | Some _ => P
  | _ => True
  end.

Definition optionOut (P : Prop) (o : option P) : optionOutType P o := 
  match o with
  | Some pf => pf
  | _ => I
  end.
\end{lstlisting}
We then write a tactic that extracts the proof that \lstinline{check_even} returns:
\begin{lstlisting}
Ltac prove_even_reflective := 
  match goal with
  | [ |- isEven ?N] => exact (optionOut (isEven N) (check_even N))
  end.
\end{lstlisting}
With this, the following proof goes through:
\begin{lstlisting}
Theorem even_256 : isEven 256.
Proof.
  prove_even_reflective.
Qed.
\end{lstlisting}
and similarly for any other even number; the tactic fails (as expected) for odd numbers.
As CPDT notes, the size of the resulting proof term is manageable even for large numbers.
This is a particular advantage of this style of proof.

The concept of computational reflection predates ITPs;
an early history of reflection can be found in \cite{Demers95reflectionin},
and an early history of its use in theorem proving can be found in \cite{harrison-reflection}.
Accordingly, it is one of the oldest styles of proof automation.
Its use in modern proof assistants with support for higher-order logics can be traced back to the 1990s, 
starting with a proof of the existence of this class of proofs in Nuprl~\citep{Allen1990},
and following soon after in other ITPs such as LEGO~\citep{Pollack1994} and Coq~\citep{Boutin1997}; 
in Coq, this approach predates Ltac~\citep{Delahaye2000}. 

Idris recently replaced its specialized tactic language with a mechanism for reflection called \textit{elaborator reflection}~\citep{Christiansen2016}.
This mechanism exposes Idris' elaborator directly to the programmer, 
which allows for powerful proof automation. For example, \cite{Christiansen2016} demonstrates how to use
elaborator reflection to write a \lstinline{mush} tactic, which can be used to dispatch many goals in the Idris standard library:
\begin{lstlisting}
mush : Elab ()
mush =
  do attack
    x <- gensym "x"
    intro x
    try intros
    induction (Var x) `andThen` auto
    solve
\end{lstlisting} 


Proof by reflection is also the dominant style of proof automation in Agda, which does not support tactics.
\cite{van2012engineering} demonstrate the \lstinline{isEven} example from earlier using Agda's old mechanism for reflection.
This mechanism was replaced in 2016 with a reflection mechanism based on Idris' elaborator reflection.
Other proof assistants that support proofs by reflection include HOL4~\citep{fallenstein2015proof}, 
Isabelle/HOL~\citep{chaieb2008proof}, and Milawa~\citep{Davis2015}. 

Nowadays, there is a trend of integrating the approach of proof by reflection with tactic languages.
For example, Cybele~\citep{claret2013lightweight} is a plugin for writing reflective tactics in Coq, with support for effects and non-termination. 
Rtac~\citep{Malecha2016} is a reflective tactic language for Coq, which contains specialized automation to make it simpler to write soundness proofs of 
decision procedures when writing reflective tactics.
These mixed approaches enable proof engineers to take advantage of the benefits of both approaches
to more easily build efficient automation. 

\subsection{Future Styles of Automation} 
\label{sec:futurestyle}

One drawback of using tactics is that they can sometimes impede proof understanding. 
In the future, we expect more tools
for proof understanding (in addition to existing structured proof languages). For example, a tool could use tactics to find proofs,
then simplify the result, or otherwise output a format that is easier to understand.

Along those lines, debugging tactics and tacticals can be difficult,
since the execution of tactics and tacticals often is not conducive to fine-grained debugging,
and since fully informative debugging of tactics sometimes requires interfacing 
with multiple languages (such as Ltac and OCaml).
Future tactic languages should better support debugging. 
The continued development of alternative tactic execution models
as well as typed and graphical tactic languages may help with both of these problems.

Another opportunity for improvement with existing automation is 
improved performance of tactics and tactic languages.
We expect more exploration of improving tactic performance, both by
writing tactics differently and by improving the performance of the underlying
engine.

The continued development of tools that integrate several styles of automation may help
proof engineers better take advantage of the benefits of each of these approaches.


\section{Automation in Practice}
\label{sec:specializedauto}

Both specialized and general-purpose automation help move the burden of proof away from the proof engineer and toward the tooling with which
the proof engineer interacts. This section briefly discusses a non-exhaustive sample of automation procedures (Section~\ref{sec:autotactics}).
It then concludes with a discussion of the future of automation (Section~\ref{sec:autofuture}).

\subsection{Automation Procedures}
\label{sec:autotactics}

Automation can be built using any of the various styles of automation (Section~\ref{sec:languagesforauto});
since these styles of automation overlap, we consider automation by what it achieves,
rather than by the style of automation that it utilizes.

\paragraph{Domain-Specific Automation} Domain-specific automation automates
proofs within particular domains. For example, the
\lstinline{omega}~\citep{coq-omega} tactic in Coq implements a decision procedure for quantifier-free Presburger 
arithmetic based on the Omega Test~\citep{Pugh1991}, an integer programming algorithm.
It can automatically prove mathematical statements that can be difficult for Coq users to
prove by hand.

Some domains include verifying programs within specific languages~\citep{cao2015practical,Ricketts2014},
writing mathematical proofs~\citep{nipkow1990, slind1994ac, Braibant2011, narboux2004, gregoire2005, agdasemiring}, 
deciding regular expressions~\citep{braibant2010}, and reasoning about embedded logics
such as separation logic~\citep{appel2006tactics,mccreight2009practical, Krebbers-al:POPL17}.

\paragraph{General-Purpose Automation} 
General-purpose automation is machinery that is useful across many domains.
For example, the \lstinline{break_match} tactic that we used as an example for the Ltac tactic language (Section~\ref{sec:tacticlangs})
contains useful machinery to make proofs by case analysis simpler and more robust.

Many proof assistants ship with useful general-purpose automation;
third-party tools may build on these.
For example, many proof assistants come with automation for inversion and induction~\citep{nipkow1989term, mcbride1996inverting, cornes1995automating}. 
Isabelle/Isar has special support for performing induction proofs~\citep{Wenzel2006}; besides specifying the variable to perform induction on, and the induction principle (rule) to use, a user can indicate that certain variables are to be \emph{arbitrary}, i.e., that they are not bound in the resulting assumptions and proof goals.
For example, the following clause opens a proof by strong induction for natural numbers on the expression $x - y$, where $x$ and some other variable $z$ from the previous context are arbitrary:
\begin{lstlisting}
proof (induct "x - y" arbitrary: z x rule:less_induct)
\end{lstlisting}
In Coq, a similar proof requires building a custom induction principle as a separate lemma.

Hint databases~\citep{coq-commands} in Coq store theorems 
that its other tactics~\citep{coq-tactics} such as \lstinline{auto} and \lstinline{rewrite} can use as hints.
For example, the tactic \lstinline{auto with arith} tells \lstinline{auto} to use the arithmetic theorems
defined in the \lstinline{arith} database when it tries to solve the goal. Hints can help make proofs not only
simpler, but more robust (see Chapter 3.8 of CPDT), though they may negatively impact proof search performance
or even cause it not to terminate (see Chapter 13 of CPDT). 

Some third-party libraries such the StructTact library~\citep{structtact}
and the code distributed with CPDT and FRAP~\citep{FRAPBook} ship a variety of general-purpose automation
that builds on the standard library packaged in one place.
Automation from these libraries ranges from machinery to better handle induction such as \lstinline{prep_induction}~\citep{structtact}
and \lstinline{induct}~\citep{FRAPBook} to powerful tactics like \lstinline{crush}~\citep{CPDT}, which can dispatch many proof obligations
automatically. The \lstinline{agda-prelude}~\citep{agda-prelude} library provides efficient alternatives 
to automation in the Agda standard library.

\textit{Theory exploration}---the automatic discovery and sometimes proof of theorems for a given theory---is a form of 
general-purpose automation that first arose in the context of automated theorem proving for 
mathematics~\citep{buchberger2000theory}. This style of automation aims to mimic the way that
mathematicians explore theories when writing proofs by hand.
While theory exploration tooling began with the development of specialized tooling,
specialized tools can be used in tandem with an ITP;
\cite{dramnesc2015theory}, for example, uses the theory exploration tool \textit{Theorema}~\citep{buchberger2006theorema}
in combination with Coq to explore the theory of binary trees.
The tool Hipster~\citep{JohanssonRSC14, DBLP:journals/eceasst/ValbuenaJ15, Johansson2017} for Isabelle/HOL integrates
theory exploration directly with an ITP.

Other examples of useful general-purpose automation include simple general-purpose proof automation~\citep{coq-tactics, auto2, Lindblad2004},
rewriting~\citep{coq-tactics, nipkow1989term}, 
and solving logical fragments~\citep{paulson1999generic, lescuyer2009improving, hurd2003first, kumar1991integrating, busch1994first, dahn1997integration, hurd1999integrating}, and techniques for reasoning about executable specficiations~\citep{barthe2002efficient},
as well as an implementation of a generalization of congruence closure to dependent type theory~\citep{DBLP:journals/corr/SelsamM17}.
In addition, Chapter~\ref{ch:organization} describes general-purpose automation and tooling for proof reuse (Section~\ref{sec:toolingreuse}),
as well as general-purpose automation built on type classes and canonical structures (Section~\ref{sec:desabs}).


\paragraph{Hammers}
Hammers are systems for general reasoning over large libraries of formal proofs~\citep{Blanchette2016}.
Like the verification language F*~\citep{Swamy2016} or the congruence closure algorithm~\citep{DBLP:journals/corr/SelsamM17} in Lean,
hammers leverage automated theorem provers (ATPs) from within an ITP. Hammers leverage ATPs while preserving the small trusted bases of ITPs.
They are able to learn from previous proof efforts.
In proof assistants, a hammer is exposed as a collection of tactics that in effect comprise a brute-force method for discharging a proof goal.

A hammer for a proof assistant typically has three components:
\begin{enumerate}
\item a \emph{premise selector} that selects facts (axioms) to be used by ATPs from the large library available to the proof assistant;
\item a \emph{translator} that converts the selected facts and proof goal to the restricted logics of the ATPs;
\item a \emph{proof reconstructor} that builds proofs accepted by the proof assistants from the evidence provided by the ATPs.
\end{enumerate}

The premise selector arguably has the most challenging task, since the database of facts can be large, and it is difficult to determine whether a fact is relevant to the given proof goal. The standard approach is to leverage machine learning techniques, such as naive Bayes and $k$-nearest neighbors~\citep{Blanchette2016,Czajka2018}.

The translator must take into account the particular foundations and features of the proof assistant, such as polymorphic or dependent types, and provide faithful representation in target logics, which may have no types or only monomorphic types.

The proof reconstructor, like the translator, is highly specific to the proof assistant. Reconstructors can use many different approaches of varying robustness, such as ATP proof replay, reflection, or proof assistant source generation, augmented by various heuristics. As a result, reconstruction may sometimes fail.

Implementations of hammers include Sledgehammer for Isabelle/HOL~\citep{Blanchette2013}, HOL(y)Hammer for HOL Light and HOL4~\citep{Kaliszyk2014}, and CoqHammer for Coq~\citep{Czajka2018,coqhammer}. Invocations of hammers typically spawn many parallel instances of different ATPs, such as Z3, Vampire, the E theorem prover, and CVC4. Hammer services to proof assistants can also be provided remotely, overcoming local limitations on processing power and memory~\citep{Kaliszyk2015}.

As an example of applying a hammer, consider a Coq lemma about lists from the StructTact library:
\begin{lstlisting}
Lemma app_cons_singleton_inv : forall A xs (y : A) zs w,
 xs ++ y :: zs = [w] -> xs = [] /\ y = w /\ zs = [].
\end{lstlisting}
Invoking the CoqHammer \lstinline{hammer} tactic finds a proof via Z3:
\begin{lstlisting}[language=sh]
Extracting features...
Running provers (using 8 threads)...
Z3 (nbayes-32) succeeded
- dependencies: List.app_eq_unit
\end{lstlisting}
The output also gives the following tactic call to replace the \lstinline{hammer} invocation, yielding a proof of the lemma without ATPs: 
\begin{lstlisting}
Proof. Reconstr.rcrush List.app_eq_unit Reconstr.Empty. Qed.
\end{lstlisting}

While CoqHammer generates sequences of calls to custom tactics for reconstruction, Isabelle's Sledgehammer typically results in calls to the built-in superposition prover \lstinline{metis}~\citep{Blanchette2011}, which works similarly to ATPs such as Vampire.

The usual way to evaluate the effectiveness of a hammer for a particular proof assistant is to apply the hammer on a standard library by replacing proof scripts with invocations of hammer tactics. For example, CoqHammer was able to reprove 44.5\% of all results in the Coq standard library~\citep{Czajka2018}, which is in line with success rates for HOL Light benchmarks (40\%). Success rates for benchmarks in Isabelle/HOL can be as high as 70\%, for databases with upwards of 100,000 facts~\citep{Blanchette2016}. However, these rates do not reflect practical application of hammers in evolving projects, where proof goals may be reformulated based on manual exploration using certain proof strategies. 

In contrast to property-based testing~\citep{Paraskevopoulou2015} and counterexample generators~\citep{Blanchette2010}, hammers do not give feedback when ATPs are unable to discharge a proof goal. Consequently, applying hammers does not necessarily lead to progress. On the other hand, hammers do not require decidable properties, generation of datatype instances, or domain knowledge.

Augmentation of hammer components to increase effectiveness and success rates is an active research topic~\citep{Blanchette2016b,Wang2017,Peng2017}.

\subsection{Future of Automation in Practice}
\label{sec:autofuture}

Hammers have been around in Isabelle for a long time, but until recently,
it was not known if a hammer could be implemented for a dependent type theory.
We expect more development to follow in the lines of CoqHammer.

The existence of third-party libraries for general-purpose proof automation in many ways mirrors
the rise of third-party libraries for other programming languages which supplement
the standard library; we expect more of these libraries to come into existence, and we expect
existing libraries to grow in popularity.
We also expect domain-specific tactics for common domains to continue to
develop, and to cover new domains as they arise.

\chapter{Proof Organization and Scalability}
\label{ch:organization}

As verification projects have grown in size, proof engineers have increasingly
stressed strategies and tools for organizing large proof developments and for dealing
with the challenges of scale.
This section describes some of these strategies and tools:
constructs for property specification and encodings (Section~\ref{sec:specenc}), proof design principles (Section~\ref{sec:proofdes}),
high-level frameworks (Section~\ref{sec:frameworks}), and constructs and tools for proof reuse (Section~\ref{sec:proof-reuse}).



\section{Property Specification and Encodings}
\label{sec:specenc}

Proof engineers leverage many constructs and notations to express programs and their specifications. For example, Coq offers a single basic language called Gallina for both logical formulas and (computable) functions, while Isabelle offers both an object language (e.g., HOL) and a metalogic with different operators and quantifiers. The specification languages can be extended inside proof assistants by using \emph{notations}, which provides new syntax for existing concepts; this is crucial for emulating mathematical vernacular, which can aid understanding of formal definitions.

On top of basic specification constructs, sophisticated properties can be expressed using inductive predicates via familiar definitional
mechanisms (Section~\ref{sec:definitional}). These predicates can be interpreted as higher-order Prolog programs~\citep{CoqArt}. For generality and reuse, collections of such specifications can be abstracted over using mechanisms such as parametric polymorphism~\citep{Strachey2000}, modules, and type classes (Section~\ref{sec:proofdes}).

\subsection{Domain-Specific Specification Languages}
\label{sec:dsspec}

\cite{Sewell2010} presented a domain-specific language called Ott for expressing inductive definitions and inductive properties over such definitions, suited in particular for formalizing programming language semantics. Ott files can be exported to Isabelle/HOL, Coq, and HOL4, using specific annotations for each proof assistant. For example, the regular expression datatype from Chapter~\ref{ch:ex} can be expressed in Ott as
\begin{lstlisting}
regexp :: regexp_ ::= {{ com regexp }} {{ coq-universe Type }}
 | 0 :: :: zero | 1 :: :: unit | c :: :: char
 | r + r' :: :: plus | r r' :: :: times | r * :: :: star
\end{lstlisting}
while the last matching rule for the Kleene star becomes
\begin{lstlisting}
s in L ( r )  s' in L ( r * )
----------------------------- :: star_2
s s' in L ( r * )
\end{lstlisting}
Note that the extra spacing is necessary for Ott's parser to properly disambiguate the syntax.

The more general proof assistant-agnostic specification language Lem~\citep{Mulligan2014} also includes definition of recursive functions and other programming language constructs,
as well as a standard library useful for semantic definitions. Ott files can be exported to Lem format and thus incorporated into larger definitions.

%

\subsection{Refinement of Programs, Data, and Proofs}
\label{sec:refinement}

Stepwise \textit{program refinement} is the construction of a program by a sequence of refinement steps,
where each refinement step breaks the original problem into a subproblem~\citep{Wirth1971};
these steps can be verified in an ITP.
Each refinement can be a refinement of a program without changing the datatypes, or a refinement of
the datatypes themselves (\textit{data refinement}~\citep{de1998data}).
Via the principle of propositions-as-types, similar approaches can be used to develop proofs by stepwise \textit{proof refinement} of an existing
specification.

\paragraph{Program Refinement}
A proof of refinement formally relates an abstract program to a concrete, refined version of that program.
It establishes that all of the behaviors of the concrete program are contained in the set of behaviors of the abstract program~\citep{de1998data}.
This relation can also be stated and proven in terms of the program specifications (as in the \textit{refinement calculus}~\citep{back1988calculus})
or in terms of a simulation relation (Section~\ref{sec:techniques}).
\cite{FRAPBook}	contains an overview of using program refinement to derive verified correct programs from their specificationss.

\cite{back1991} formalized the refinement calculus in HOL.
\cite{vonwright1994} presented a tool for verified program refinement using the refinement calculus in HOL.
Since then, there have been a number of refinement tools in Isabelle/HOL with support for logic~\citep{hemer2001}, object-oriented~\citep{Liu2011},
functional~\citep{lammich2013refinement}, and imperative~\citep{lammich2015refinement} programs.
\cite{Cohen2013} developed a framework for Coq called CoqEAL which automates key steps of data refinement.
\cite{Delaware2015} presented Fiat, a refinement framework for deductive synthesis of abstract data types in Coq.

Proof engineers use proofs of program refinement to break down large proof developments or to compose modular proof developments,
for example for the verification of storage systems~\citep{Chajed2019}, compilers~\citep{Leroy2009, Rizkallah16, Kumar2014}, and
OS kernels~\citep{Klein2014micro, Gu-al:POPL15, Gu-al:OSDI16}.
Refinement proofs can also help make proof developments robust to changes (Section~\ref{sec:des-scale}).

\paragraph{Proof Refinement}
Reasoning backwards in proof assistants, from goals to premises, can be viewed as a form of proof refinement~\citep{Bates1979, krafft1981},
where the proof is the refinement of the specification. The idea of proof refinement is to refine the goal to proofs of subgoals, then refine
those subgoals further.
\cite{Bates1979}, for example, describes the rule for refining a conjunction \lstinline{A }$\land$ \lstinline{ B} given hypotheses \lstinline{S}:
\begin{lstlisting}
  S pr A $\land$ B by
    S pr A
    S pr B
\end{lstlisting}
In other words, \lstinline{A }$\land$ \lstinline{ B} follows from \lstinline{S} if each of \lstinline{A} and \lstinline{B} follow from \lstinline{S}.
Each of \lstinline{A} and \lstinline{B} follow from \lstinline{S} if they can be refined using other rules.

Refinement logics such as Nuprl~\citep{Constable1986} and RedPRL~\citep{angiuli2018} as well as other proof assistants following
in the LCF tradition (Section~\ref{sec:early-history}) encourage this style of reasoning.
\cite{SterlingH17} contains an overview of proof refinement.

%
%

\section{Proof Design Principles}
\label{sec:proofdes}


Good design principles can make proofs easier to develop and maintain.
These design principles mirror software engineering design principles
in many ways, but also address challenges unique to proof
engineering.

Consider an example in Coq from \cite{Woos2016},
which demonstrates a design principle that addresses challenges unique to proof engineering.
In this example, we have a proof \lstinline{eg_proof} of a theorem \lstinline{eg},
which shows that if two functions map equal inputs to equal outputs,
then any proposition that holds on all outputs of \lstinline{g}
must also hold on all outputs of \lstinline{f}:
\begin{lstlisting}
Definition eg : Prop :=
  forall (A B : Type) (f g : A -> B) (P : B -> Prop),
    (forall x, P (g x)) ->
    (forall x, f x = g x) ->
    (forall x, P (f x)).

Lemma eg_proof : eg.
Proof.
  unfold eg. intros. rewrite H0. auto.
Qed.
\end{lstlisting}
Suppose we later change \lstinline{eg} (using \codediff{orange} to show changes):
\begin{lstlisting}
Definition eg : Prop :=
  forall (A B : Type) (f g : A -> B) (P $\codediff{Q}$ : B -> Prop),
    (forall x, P (g x)) ->
    $\codediff{(forall x, P (g x) -> Q (g x))}$ ->
    (forall x, f x = g x) ->
    (forall x, P (f x) /\ $\codediff{Q (f x)}$).
\end{lstlisting} 
As the authors note, our proof \lstinline{eg_proof} no longer holds,
since the automatically generated hypothesis name \lstinline{H0} is now called \lstinline{H1}.
One way to address this is to change the hypothesis name in the proof as well:
\begin{lstlisting}[language=Coq]
Proof.
  unfold eg. intros. $\codediff{rewrite H1.}$ auto.
Qed.
\end{lstlisting}
But if we continue changing \lstinline{eg}, then we will need to keep making these kinds of changes.
Instead, the authors advocate for using the tactic \lstinline{find_rewrite},
since it does not depend on hypothesis names:
\begin{lstlisting}[language=Coq]
Proof.
  unfold eg. intros. $\codediff{find\_rewrite.}$ auto.
Qed.
\end{lstlisting}
This proof goes through for both definitions of \lstinline{eg}.

The design principle from this example addresses a challenge unique to proof engineering, since it deals with the consequences of proof automation.
Other proof engineering design principles mirror software engineering design principles.
This section provides an overview of design principles for proof engineering, drawing
parallels to software engineering when appropriate.
It focuses on general-purpose design principles, and discusses domain-specific design principles (beyond those from Chapter~\ref{cha:appl-mech-proofs})
only when relevant more broadly.


\subsection{Design Principles for Abstraction} 
\label{sec:desabs}

As in software engineering, design principles for proof engineers prevent changes in implementation
from breaking dependencies that ought to rely only on specifications.
For example, much like a software engineer may write an interface for a collection of functions so that he can
switch out implementation details such as the underlying data structure 
without breaking functionality that
depends on those functions, so a proof engineer may write an interface for a collection of lemmas
so that changes to the proofs of those lemmas do not break other lemmas and theorems that depend on those lemmas~\citep{Woos2016}.

There are many ways to achieve this sort of abstraction in ITPs,
some of which have different impliciations for proof automation. For example,
in Coq, it is possible to write interfaces using modules, type classes, or canonical structures;
Coq has special support for proof search for type classes~\citep{Sozeau2008} and canonical structures~\citep{Saibi:PhD}. 
This section describes some means of abstraction in an ITP.


\paragraph{Modules}
Modules, as manifested in languages such as Standard ML~\citep{MacQueen1986} and proof assistants such as Coq~\citep{Chrzaszcz2003}, are collections of named components which may be types, values, or nested modules. A central property is the separation of module interfaces (signatures or module types) and module implementations (structures). The interface-implementation relation is many-to-many; one signature can be implemented by several structures, and one structure can implement several signatures. A structure can choose to hide all information not specified in the signatures it implements.


Parametric modules, called functors, take structures that implement certain signatures as arguments. In proof assistants, functors can provide abstraction and reuse of both functions and proofs. This approach is taken to implement finite sets and maps in Coq using AVL trees~\citep{Filliatre2004}, and later to implement balanced binary search trees
using red-black trees~\citep{Appel2011b}.

\paragraph{Type Classes}
Type classes were first implemented in the Haskell programming language~\citep{Wadler1989}.
In a proof assistant context, type classes have notably been implemented for Coq~\citep{Sozeau2008} and Isabelle/HOL~\citep{Haftmann2006};
instance arguments~\citep{Devriese2011} are a similar feature in Agda.
Type classes can be viewed as a particular use of a module system as in Standard ML, and type classes can coexist with such a module system~\citep{Dreyer2007}.

A type class can be viewed as an abstract data type that defines a collection of functions by their parameter types, while not fixing function implementations. The abstract data type can then be implemented in different ways for different parameter types. For example, Volume 4 of Software Foundations~\citep{Pierce-al:SF} describes an
equality type class with a single function \texttt{eqb} which, when provided two arguments of the same type, returns a boolean:
\begin{lstlisting}
Class Eq A :=
{
  eqb: A -> A -> bool;
}.
\end{lstlisting}
Different implementations (\emph{instances}) of this class can then be provided for different types;
Software Foundations describes one for booleans:
\begin{lstlisting}
Instance eqBool : Eq bool :=
{
  eqb := fun (b c : bool) =>
     match b, c with
       | true, true => true
       | true, false => false
       | false, true => false
       | false, false => true
     end
}.
\end{lstlisting}
and one for natural numbers:
\begin{lstlisting}
Instance eqNat : Eq nat :=
{
  eqb := Nat.eqb
}.
\end{lstlisting}
A compiler can translate programs that use functions defined for type classes to programs that do not by looking up and applying the appropriate function instances, using information about function invocation types.

A key use of type classes in Haskell programs is as a way to structure programs by abstracting certain code over appropriate type classes, and concretizing the abstracted code with appropriate type instances elsewhere, avoiding duplication and facilitating reuse;
proof engineers can use type classes similary to achieve both code and proof reuse.
However, type classes in ITPs also provide additional benefits beyond those that type classes in other languages such as Haskell provide.
For example, one drawback of using only type classes for structuring programs in Haskell is that a type can implement a type class in exactly one way~\citep{HarperModules2011};
it may be useful to define different type class instances for sorting of integers depending on the size of the input data, or use different orders on integers for sorting.
Unlike in Haskell, type classes in Coq can support multiple instances.

The Coq implementation of type classes is \emph{first-class}, meaning that it is a thin layer on top of existing functionality (specifically, implicit arguments and dependent records). In addition, \cite{Sozeau2008} added specific support for type class resolution into Coq's proof search mechanism. In contrast to Coq's type classes, the type classes of Isabelle/HOL are restricted to one type variable, and are not first-class. One particular advantage of type classes in proof assistants (as opposed to type classes in Haskell) is that propositions can be type class members, e.g., a type class for a monad can require witnesses (proofs) for the monad laws along with monad operations. By extension, this means that proofs in one type class instance can be derived partly from proofs in other type class instances (e.g., of some more general class). In contrast with Haskell,
the type class instance resolution system in Coq can always be elided by manually passing implicit type class instances.

Type classes have been used for abstraction and reuse in many proof developments.
For example, \cite{Spitters2011} used type classes to represent a standard algebraic hierarchy in Coq, along with parts of category theory.
\cite{Woos2016} used type classes to organize the correctness proof of the Raft consensus protocol in Coq, and for abstracting the Raft protocol implementation for replication over arbitrary state machines.

Type classes are closely tied to other language features for abstraction.
General parametrization of theories in Isabelle can be achieved via \emph{locales}~\citep{Kammueller1999,Ballarin2006}, which is the mechanism used to provide type class support~\citep{Haftmann2009}; a locale can be viewed as a persistent proof context that includes arbitrary variables and assumptions, and which can be instantiated in other proofs.



\paragraph{Canonical Structures}
In Coq, canonical structures~\citep{Mahboubi-Tassi:ITP13, Saibi:PhD} provide an alternative to type classes.
Canonical structures are a mechanism to provide theory-specific dictionaries to datatypes, allowing for more
flexible resolution strategy than more the more widely used type classes.

To show their typical use, consider partial commutative monoids
(PCMs); an algebraic structure which recurs in our current ongoing
work on the verification of stateful and concurrent
programs~\citep{Nanevski-al:ESOP14}. We implement PCMs using two of the
Coq's native constructs: \emph{dependent records} and \emph{canonical
  structures}.  We follow the established \ssreflect design pattern of
defining algebraic data structures by means of \emph{mix-in}
composition~\citep{Garillot:PhD}, whereby different dependent records
formalize different algebraic properties, which can be combined using
\emph{packed classes} mechanism. The latter also defines the field
resolution strategy~\citep{Garillot-al:TPHOL09} in a case of
overlapping names.
%
%
%
For instance, in Coq the \emph{mix-in} defining PCMs is represented by the
following dependent record:
\begin{lstlisting}
Record mixin_of (T : Type) := Mixin {
  valid : T -> bool;
  join : T -> T -> T;
  unit : T;
  _ : commutative join;
  _ : associative join;
  _ : left_id unit join;
  _ : forall x y, valid (join x y) -> valid x;
  _ : valid unit }.
\end{lstlisting}
The type \code{T} is the \emph{carrier type} of the structure. The
field \code{valid} selects a subset of \code{T}, standing for the
``defined'' elements. The invalid (or ``undefined'') elements help
model partiality: a partial function over \code{T} will return some
invalid element on an input on which it is mathematically undefined.
\code{join} is the binary operation of the PCM, and \code{unit}
is the unit element. The remaining five unnamed fields enumerate the
axioms that have to be satisfied by each PCM instance.

Next, the mix-in ``interface'' is packaged with a carrier type, into a
dependent record type, which represents PCMs. We also introduce a
coercion from the package to the underlying carrier type, so that the
two can be conflated. This coercion essentially accounts for the
delegation hierarchy from object-oriented languages.
\begin{lstlisting}
Structure pcm : Type := Pack {type : Type; _ : mixin_of type}.
Coercion type : pcm >-> Sortclass.
\end{lstlisting}

Next, we explain the mechanism of packaging all necessary definitions
along with lemmas about data structures (such as \emph{join}'s
commutativity and associativity in the case of PCMs) into a single
module that should be imported by the clients of the algebraic
structure.  For example, we introduce appropriate notation for the
join operation, and specifically name and prove the lemmas that
correspond to the PCM properties that we left unnamed in the mixin.
\begin{lstlisting}
Notation x \+ y := (join x y).
Lemma joinC (U : pcm) (x y : U) : x \+ y = y \+ x.
Lemma joinA (U : pcm) (x y z : U) :  x \+ y \+ z = x \+ (y \+ z).
\end{lstlisting}
The lemmas such as \code{joinC} and \code{joinA} are proved by
destructing the package \code{U}, but notice how the coercion allows
conflating \code{U} with its carrier type. Also notice how the
notation \code{\\+} allows the PCM \code{U} to be ommitted from
the equations themselves, as the typechecker can infer it from the
context.

Algebraic structures can inherit the properties of other, more basic
structures. Thus, we also require an analogue of object-oriented
\emph{inheritance}. We illustrate how this can be done in Coq, by
defining an interface for a \emph{cancellative} PCM, which inherits
from an ordinary PCM. The cancellative PCM is defined as the following
mix-in record:
\begin{lstlisting}
Record mixin_of (U : pcm) := Mixin {
  _ : forall a b c: U, valid (a \+ b) -> a \+ b = a \+ c -> b = c
}.
\end{lstlisting}
Notice that the dependent record \code{mixin_of} in this case is
parametrized via the carrier PCM \code{U}, which is used as a target
for a coercion whenever an instance of a plain PCM or a carrier type
\code{U} is required, since coercions a transitive.

Let us now \emph{instantiate} the definition of abstract structure
with concrete datatypes. It turns out that it is insufficient to
merely prove that a datatype satisfies the PCM axioms. To work
comfortably with an algebraic structure in practice, one has to
explicitly ``register'' the structure with the type inference engine.

We first show what goes wrong if one doesn't perform the
``registration.''  For instance, assume we first define an instance of
a PCM for \code{nat} with addition, by proving that \code{+} with
\code{0} satisfies the PCM axioms.  Then the following lemma which
uses the generic notation \code{\\+} for the PCM operation, is
considered ill-formed by Coq. The reason is that Coq cannot figure
that there is a PCM associated with \code{nat}, and that the generic
notation \code{\\+} should be resolved with addition. Indeed, we could
have defined the PCM for \code{nat} via multiplication $(\times)$ with
$1$, in which case \code{\\+} should be resolved by~$\times$.
\begin{lstlisting}
Lemma add_perm (a b c : nat) : a \+ (b \+ c) = c \+ (b \+ a).
\end{lstlisting}
In the above case, once a structure is registered as the default PCM
for \code{nat}, the \code{add_perm} lemma can be proved by selective
rewriting using the standard PCM properties.

Some notable uses of canonical structures include telescopes~\citep{Garillot-al:TPHOL09} and
higher-order tactics for separation logic~\citep{Gonthier2011}.

\subsection{Design Principles for Programming with Dependent Types}

In order to use dependent types to their full extent, proof engineers have developed many
paradigms to deal with the challenges they present.
For example, using dependent types, we can define heterogenous lists and a selection function
over hetereogenous lists. To write the selection function, however, we must first define
the \textit{type} that it has, which depends on the case. \cite{CPDT} accomplishes this
using a membership predicate:
\begin{lstlisting}
Section hlist.
  Variable A : Type.
  Variable B : A -> Type.

  Inductive hlist {A : Type} {B : A -> Type} : list A -> Type :=
  | HNil : hlist nil
  | HCons : forall (x : A) (ls : list A), B x -> hlist ls -> hlist (x :: ls).

  Variable elm : A.

  Inductive member : list A -> Type :=
  | HFirst : forall ls, member (elm :: ls)
  | HNext : forall x ls, member ls -> member (x :: ls).

  Fixpoint hget ls (mls : hlist ls) : member ls -> B elm :=
    (* ... *)
End hlist.
\end{lstlisting}
In other words, the type of \lstinline{hget} states that whenever \lstinline{elm} is a member of some list \lstinline{ls},
then we can select some element of type \lstinline{B elm} from any \lstinline{mls : hlist ls}.
This is one example of a common style of writing functions and proofs using dependent types,
wherein the proof engineer first defines the type of the function or proof inductively,
and then defines the function or proof that has that type.
It is a powerful style that makes it possible to define very expressive types.

\cite{CPDT} provides a comprehensive overview
of dependently-typed programming in Coq with many more examples.
\cite{Tanter2015} outlines design principles for gradual verification in Coq,
which may help reduce the burden of verification with dependent types and increase adoption. 

\subsection{Design Principles for Scale}
\label{sec:des-scale}

The scale of programs verified in ITPs has increased over the years.
In recent years, proof engineers have begun to look at how to address the challenges that come with this increase in scale.
Proof engineering in the large~\citep{Kaivola2003}, for example, describes a methodology
for verifying large-scale ciruits; it is among the earliest work
noting that proof design for large verification projects is important.
This section describes design principles dealing with the challenges of scale such as
robustness in the face of changes, compositionality of components, and efficiency of code and proofs.
In addition, Section~\ref{sec:proof-reuse} describes design principles for proof reuse.

\paragraph{Design Principles for Robustness}
A major source of inefficiency in verification is \textit{proof
  brittleness}: Even a minor change to a single theorem or definition can break many
dependent proofs. This makes proofs difficult to
maintain~\citep{Woos2016, Aydemir2008, plse-coevolve-djg-fose14, Delaware2013ICFP}.
Design principles help make proofs robust in the face of changes.
This is one approach to proof evolution (Section~\ref{sec:evolve}).

One approach to building robust proofs is to make use of proof automation (Chapter~\ref{sec:prooflangs}) to dispatch similar goals.
Proof engineers who use the default tactics included in many proof assistants
already take advantage of this, since the same tactic can discharge different goals.
For example, a proof engineer who uses \lstinline{omega} in Coq or \lstinline{presburger} in
Isabelle/HOL need not change the proof script as the goal changes, so long as the goal stays within
the same fragment of arithmetic solved by those tactics.

The degree to which proof engineers rely on automation varies by style.
CPDT~\citep{CPDT}, for example, advocates for the heavy use of program-specific automation,
noting that this makes proofs more robust; the tagless interpreter proofs from Chapter 8.3 of CPDT contain
an example of automation of this kind.
This style of development localizes the burden of change to the automation itself
as opposed to the many proofs that use the automation.

While automation can make proofs more robust, it can also be brittle in itself.
For example, some Coq tactics automatically generate hypothesis names; small changes in specifications can
cause proofs that rely on those names to break. One approach to this problem is to always explicitly specify hypothesis names,
so that Coq never generates hypothesis names automatically; the IDE Company-Coq~\citep{CompanyCoq2016} provides some built-in
support for this approach.
Planning for Change~\citep{Woos2016} notes that, while this approach helps, it is still necessary to update
those explicit names as specifications change. Instead, the authors advocate for the use of \textit{structural tactics}, or tactics that do not
depend on hypothesis names and hypothesis ordering; many tactics of this style can be found in the associated
StructTact~\citep{structtact} library.

Planning for Change addresses design for robustness not only at the level of automation,
but also at the level of specifications and proof objects. It presents a methodology
for writing robust proofs independently of any domain or framework. This methodology
is informed by a large proof engineering effort verifying the Raft consensus protocol.
It is a set of five recommendations. Some of these recommendations draw on software engineering design principles.
For example, the authors recommend using information hiding techniques similar to those used in software engineering to hide definitions.
That way, the burden of change is localized to interface changes, and changes in only implementation do not cause
breaking changes in dependencies. Other recommendations tackle challenges that are unique to proof engineering. For example, the authors
advocate for the use of custom induction principles to capture common patterns in inductive proofs.

Refinement (Section~\ref{sec:refinement}) can help make proofs robust to changes.
For example, the proofs of the seL4 microkernel in Isabelle/HOL have evolved alongside the implementation for
over eight years~\citep{Klein2014micro}.
The proof development makes use of two layers of specifications: an \textit{abstract specification} which describes only behavior of
the system, and an \textit{executable specification} which includes implementation details. These two layers are connected by
a refinement proof.
Using this approach, the authors found that both making low-level changes and adding new simple features were not very costly,
though more complex changes that interacted with other parts of the code significantly were still costly.

\paragraph{Design Principles for Compositionality}
CompCert (Section~\ref{sec:verified-compilers}) employs a compositional design for describing the different intermediate languages
and how they interact with each other. Affinity lemmas from Planning for Change
also capture this concept.
The CertiKOS project introduces the idea of a \textit{deep specification}~\citep{Gu-al:POPL15} that makes
compositional verification more tractable. DeepSpec~\citep{DeepSpec},
an ongoing project, is addressing this problem more generally.
Section~\ref{sec:frameworks} describes frameworks for compositional verification.

\paragraph{Design Principles for Efficiency}
Some proof assistants like Coq work by extraction (Section~\ref{sec:trust-programs}) from the core language into an executable language.
The resulting extracted code can be slow, which can be a barrier for verifying a realistic system.
\citet{CruzFilipe2003, CruzFilipe2006} describe proof design principles for optimizing the efficiency of extracted
code.


\subsection{Style Guides and Proof Techniques}
\label{sec:techniques}

Style guides and proof techniques help
guide proof engineers in dealing with common patterns to address common challenges.

\paragraph{Style Guides}
Section~\ref{sec:dsmathematics} described style guides in mathematics.
A few general-purpose style guides exist.
Gerwin's style guide~\citep{isabellestyle} for Isabelle, for example,
is a set of guidelines that are used within Isabelle itself and in several
large developments. The Isabelle Archive of Formal Proofs requires that submitted proofs
follow some of these guidelines, and recommends others~\citep{isabelleafp}.
The CoqStyle~\citep{coqstyle} style guide is a set of guidelines for Coq
in the main Coq repository which is used within the standard library. 

\paragraph{Proof Techniques}
Proof techniques are techniques that handle common classes of proofs, or that make it easier
to write proofs in a particular style.
For example, one widely-used proof technique is \textit{simulation}~\citep{Lynch1994};
an overview of this technique can be found in \cite{FRAPBook}.
This technique helps proof engineers prove that systems
preserve liveness and safety properties.
Refinement (Section~\ref{sec:refinement}) reduces to simulation~\citep{Klein2014micro}.

One application of simulation is to show compiler correctness.
CompCert~\citep{Leroy2009}, for example, uses this technique to show that the
program transformations that the compiler makes are semantics-preserving.
In the case of compiler correctness as in CompCert, both directions of simulation (\textit{forward simulation} and \textit{backward simulation})
start with a source program $s$ and a target program $t$ that are related along some relation $r$.
%
The forward simulation (Figure~\ref{fig:sim}, left) states that if $s$ steps to $s'$, then $t$ can step to some $t'$, where $s'$ and $t'$ are related by $r$.
Similarly, the backward simulation (Figure~\ref{fig:sim}, right) states that if $t$ steps to $t'$, then $s$ can step some $s'$, where $t'$ and $s'$ are related by $r$.
Intuitively,
  a forward simulation shows that
  ``anything the source program could do, the target program could do too,'' and
  a backward simulation shows that
  ``anything the target program could do, the source program could do too.''
Together,
  a forward and a backward simulation establish \textit{indistinguishability},
  any entity restricted to only observe ``visible'' program transitions
  (e.g., input and output) will never be able to determine if they
  are interacting with the source or target program~\citep{Sangiorgi2011}.
If the source language is deterministic, then the forward simulation follows from the backward simulation;
if the target language is deterministic, then the backward simulation follows from the forward simulation~\citep{Leroy2009}.
CompCert takes advantage of this to show backward simulation from only forward simulation and determinism of the target language.

\begin{figure}
    \begin{minipage}{0.48\textwidth}
    \centering
     \begin{tikzpicture}[->,>=stealth',shorten >=1pt,auto,node distance=2.8cm,
                    semithick]
  \tikzstyle{every state}=[fill=none,draw=none,text=black]

  \node[state]         (A)              {$s$};
  \node[state]         (B) [right of=A] {$t$};
  \node[state]         (D) [below of=A] {$s'$};
  \node[state]         (C) [below of=B] {$t'$};

  \path (A) edge[-]             node {r} (B)
        (B) edge[dotted]      node {} (C)
        (C) edge[dotted, -]      node {r} (D)
        (A) edge              node {} (D);
\end{tikzpicture}
    \end{minipage}
    \hfill
    \begin{minipage}{0.48\textwidth}
    \centering
    \begin{tikzpicture}[->,>=stealth',shorten >=1pt,auto,node distance=2.8cm,
                    semithick]
  \tikzstyle{every state}=[fill=none,draw=none,text=black]

  \node[state]         (A)              {$s$};
  \node[state]         (B) [right of=A] {$t$};
  \node[state]         (D) [below of=A] {$s'$};
  \node[state]         (C) [below of=B] {$t'$};

  \path (A) edge[-]   node {r} (B)
        (B) edge              node {} (C)
        (C) edge[dotted, -]              node {r} (D)
        (A) edge[dotted]              node {} (D);
\end{tikzpicture}
    \end{minipage}
    \caption{Forward (left) and backward (right) simulation for compiler correctness.
    Premises are show as solid lines and goals are shown as dashed lines.}
    \label{fig:sim}
\end{figure}
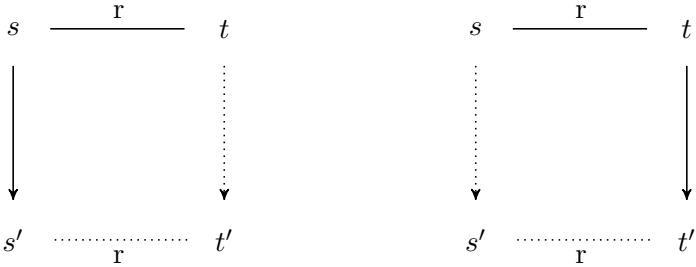

Section~\ref{sec:languagesforauto} discusses some techniques for interacting with automation.
Proof techniques can also help proof engineers reason within certain domains. \cite{bahr2015}, for example,
describes a technique for deriving correct compilers from specifications in Coq.
Section~\ref{sec:reas-about-imper} describes techniques for reasoning about imperative programs.


\subsection{Design Principles for Reasoning about Imperative Programs}
\label{sec:reas-about-imper}

When desigining a verification framework for imperative programs based on a dependently-typed
proof assistant (\eg, Coq), the most common approach is to implement a
version of a Floyd-Hoare style program
logic~\citep{Floyd1967Flowcharts,Hoare:CACM69} in it. When doing so,
the framework designer is faced with the following choices:

\begin{itemize}
\item How to embed, into a proof assistant, the language with the
  features, which the \emph{host} language does not have (\eg, mutable
  state and concurrency)?

\item How to encode verification conditions for imperative programs
  specified in Floyd-Hoare style, and implement the corresponding
  reasoning principles in a proof assistant?
\end{itemize}

\noindent
Below, we elaborate on these two design choices and provide a survey
of the most prominent approaches implementing them, both for
sequential and concurrent reasoning about imperative programs.

\subsubsection{On Shallow and Deep Embedding}
\label{sec:embedding}
An important design decision to take when designing a framework for
verification of effectful (i.e., heap-manipulating or concurrent)
programs on top of a general-purpose proof assistant is its use of
shallow or deep embedding the language to be verified.

\textit{Shallow embedding} is an approach of implementing programming
languages, characterized by representation of the language of interest
(usually called a \emph{domain-specific language} or DSL) as a subset
of another general-purpose host language, so the programs in the
former one are simply the programs in the latter one. The idea of
shallow embedding originates at early '60s with the beginning of era
of the Lisp programming language~\citep{Graham:BOOK}, which, thanks to
its macro-expansion system, serves as a powerful platform to implement
DSLs by means of shallow embedding (such DSLs are sometimes called
internal or embedded).

Shallow embedding in the world of practical programming is advocated
for a high speed of language prototyping and the ability to reuse
most of the host language infrastructure.  An alternative approach of
implementing and encoding programming languages 
is called \textit{deep embedding}, and amounts to the
implementation of a DSL from scratch, essentially,
writing its parser, interpreter and type-checker in a general-purpose
language. Deep embedding is preferable when the overall
performance of the implemented language runtime is of more interest
than the speed of DSL implementation, since then a lot of intermediate
abstractions, which are artifacts of the host language, can be
avoided.

In the world of mechanized program verification, both deep and shallow
embeddings have their own strengths and weaknesses.  Although
implementations of deeply embedded languages and calculi naturally
tend to be more verbose, design choices in them are usually simpler to
explain and motivate. Moreover, the deep embedding approach makes the
problem of name binding to be explicit, so it would be appreciated as
an important aspect in the design and reasoning about programming
languages~\citep{Aydemir2008,Weirich-al:ICFP11,Chargueraud2011}. We
believe that these are the reasons why this approach is typically
chosen as a preferable one when teaching program specification and
verification in Coq~\citep{Pierce-al:SF}.

Importantly, deep embedding gives the programming language implementor
full control over its syntax and semantics.
In particular, the expressivity limits of a defined logic or a type
system are not limited by expressivity of the host language's
type system. Deep embedding makes it much more
straightforward to reason about pairs of programs by means of defining
the relations as propositions on pairs of syntactic trees, which are
implemented as elements of corresponding datatypes.  This point
becomes crucial when one needs to reason about the correctness of
program transformations and optimizing compilers~\citep{Appel:BOOK14}.

In contrast, the choice of shallow embedding, while sparing one the
labor of implementing the parser, name binder and type checker, may
limit the expressivity of the logical calculus or a type system to be
defied. In the case of Hoare Type Theory~\citep{Nanevski-al:POPL10},
for instance, it amounts to the impossibility to specify programs that
store effectful functions and their specifications into a
heap.\footnote{This limitation can be, however, overcome by
  postulating necessary axioms.}

In the past decade Coq has been used
in a large number of projects targeting formalization of logics and
type systems of various programming languages and proving their
soundness, with most of them preferring the deep embedding approach to
the shallow one. We believe that the explanation of this phenomenon is
the fact that it is much more straightforward to define semantics of a
deeply-embedded ``featherweight'' calculus~\citep{Igarashi-al:TOPLAS01}
and prove soundness of its type system or program logic, given that it
is the ultimate goal of the research project. However, in order to use
the implemented framework to specify and verify realistic programs, a
significant implementation effort is required to extend the deep
implementation beyond the ``core language,'' which makes shallow
embedding more preferable.

\subsubsection{Encoding Verification Conditions}

In a Floyd-Hoare style logic, specification of a program $c$ is given
in a form of a \textrm{tiple} $\set{P}~c~\set{Q}$, where the
assertions $P$ and $Q$ are referred to as the \emph{precondition} and
the \emph{postcondition}, respectively. The standard semantics of
the triple imposes that for any state, satisfying $P$, the final
state, after $c$ terminates, satisfies $Q$.
This definition corresponds to termination-insensitive \emph{partial
  correctness} (\ie, $c$ is allowing to not terminate at all, so any
postcondition would hold). Some program logics impose a stronger
semantics of \emph{total correctness}, requiring $c$ to terminate, in
addition to the above~\citep{Dockins-Hobor:DS10}.

This treatment of a Floyd-Hoare triple allows for verifying the
programs by means of following the inference rules of a program
logic, allowing to decompose the proof of $\set{P}~c~\set{Q}$ into the
proofs about $c$'s sub-programs~\citep{Hoare:CACM69}.
While this style of reasoning seems natural and relatively easy to
implement in a proof assistant, and is advocated by the most widely
used tutorials~\citep{Pierce-al:SF}, it is not the most convenient to
conduct the proofs in, due to the need of constantly discharge the
\emph{weakening} obligations, required for ``massaging'' the
verification goal, and represented by the following inference rule:
\begin{mathpar}
\inferrule*[Right={(Weaken)}]
 {P \Rightarrow P' \\
  \spec{P'}~ c~ \spec{Q'}\\
  Q' \Rightarrow Q}
 {\spec{P}~ c~\spec{Q}}
\end{mathpar}

A more proof-assistant-friendly way to encode the Floyd-Hoare-style
verification conditions, ``compressing'' the necessary applications of
the weakening rule, is to use the idea of predicate transformer
by~\citep{Dijkstra:CACM75} that can be used to compute a
pre-condition for a computation, for any context in which that
computation may be used.
This approach, dubbed the \emph{weakest precondition (WP) calculus} allows
one to encode the meaning of a Floyd-Hoare triple (roughly) as
follows:
\[
  \set{P}~c~\set{Q} \eqdef P \Rightarrow \mathsf{wp}~c~\set{Q},
\]
where $\mathsf{wp}~c~\set{Q}$ is the program $c$'s \emph{weakest}
precondition \wrt the imposed postcondition $Q$, expressed as a
logical formula.
Therefore, what is left to the designer of the mechanised program
logic to do is to provide the implementation of the primitive
$\mathsf{wp}$, which would ``compile'' a program and its postcondition
to the logical assertion, which can be later discharged using the host
proof assistant's machinery.

The weakest precondition approach to encoding verification
conditions for imperative programs is amazingly versatile, and, to the
best our knowledge, has been adopted in most of the major
implementations of program logics embedded into proof
assistants~\citep{Nanevski2008,mccreight2009practical,Nanevski-al:POPL10,Chargueraud2010,Chlipala:PLDI11,Swamy-al:PLDI13,Appel:BOOK14,Krebbers-al:POPL17}.

\subsubsection{Verifying sequential heap-manipulating programs}
\label{sec:reas-about-heap}

The main success in a program logic-based verification of
heap-manipulating programs has been achieved with the discovery of
Separation Logic~\citep{OHearn-al:CSL01,Reynolds:LICS02}.
It did not take long for Separation Logic to be mechanised in an ITP~\citep{Nanevski2008,Mehta-Nipkow:CADE03,mccreight2009practical},
using both deep and shallow embedding.

One of the most successful formalizations in Coq
by means of shallow embedding is the series of work on Hoare Type Theory
(HTT) by~\cite{Nanevski2008}, known as YNot. YNot has been used, among other things, in verifying
a relational database system~\citep{Malecha2010} and a secure browser kernel~\citep{Jang2012}.

In addition to adopting the WP-calculus for expressing verification
conditions in an embedding of Separation Logic into Coq, HTT first
made active use of \emph{binary postconditions}, enabling a
straightforward treatment of \emph{logical} variables, whose scope
spans both pre- and postconditions of a Floyd-Hoare
triple. Specifically, this has been achieved by making a postcondition
$Q$ to be not of type $\mathsf{state} \to \mathsf{Prop}$ (\ie, unary,
as suggested by the textbook expositions of program logics), but
rather of type $\mathsf{state} \to \mathsf{state} \to \mathsf{Prop}$,
\ie, constraining both the pre- and the post-state.
This style of specification has later been adopted by multiple other
verification frameworks~\citep{Swierstra:TPHOLS09,Swamy-al:PLDI13}.

The \textsc{Sepref}~\citep{lammich2015refinement} tool
for verifying imperative programs in Isabelle/HOL includes a separation logic
framework built on top of Imperative HOL, which is built on top of Isabelle/HOL.
\textsc{Sepref} uses refinement (Section~\ref{sec:refinement}) to derive an imperative heap-based program and
correctness proof from a functional program and correctness proof.

YNot~\citep{Nanevski2008} has implemented the \emph{heap disjointness},
inherent to Separation Logic, by means of a deep embedding of logic
reasoning principles.
Such an embedding of a domain-specific logic required a later
development of a number of tactics for making large mechanised proofs
tractable~\citep{Chlipala-al:ICFP09}.
In a later work,~\cite{Nanevski-al:POPL10} have shown shown how to
achieve almost the same expressivity with very little domain-specific
automation, by making reasoning about finite heaps decidable and
leveraging the machinery of small-scale
reflection~\citep{Gonthier2010}.

Various successful deep embeddings of Floyd-Hoare style reasoning into
Coq have been demonstrated viable for the sake of reasoning about
low-level programs using different versions of Separation
Logic~\citep{Chlipala-al:ICFP09,Chlipala:PLDI11,Chlipala2013,Chen2015,Cao2018}.
All those efforts came supplied with tailored libraries of
domain-specific tactics, with those tactics automatically
applying Separation Logic's \textsc{Frame} rule and thus progressively
reducing the size of the verification goal.


\subsection{Future of Design}

While we think that domain-specific design principles will always be important, 
we expect that there is a lot of potential for general-purpose design principles that frame proof engineering
in the context of software engineering and make novel use of what we already know.
Planning for Change investigates where
proof engineering diverges from software engineering and where it calls for specialized techniques;
continuing along these lines should drive more useful proof design techniques.

Compared to design principles for mathematics, current general-purpose design principles place
little emphasis on proof understanding.  While this is
an understandable difference in emphasis,
his can inhibit collaboration for proof engineers as well. We expect more work on proof
understanding to become common as collaboration between proof
engineers increases with the growth in large-scale verification
projects.

Automation-heavy styles can help prevent breaking changes, but have
drawbacks. Some of these drawbacks may be avoidable.  For example, one
limitation is that proof checking of the large and complex terms
these procedures produce can be slow.  Developments in proof checking such
as term simplification could make this style more tractable. Debugging
is also difficult; alleviating this concern could be as simple as
better debugging tooling for tactics.


\section{High-Level Verification Frameworks}
\label{sec:frameworks}

In the context of software engineering, a \textit{framework} is distinguished from a library or domain-specific language
in that the client relinquishes control of execution to the framework. In practice,
the concept of a framework often refers to some combination of design principles,
libraries, and tooling that together give structure to code, often within a certain domain,
regardless of control of execution. We use the latter term, as it is what is used most often
in proof engineering papers, and as the concept of control of execution does not
always make sense in the context of proof development.

Several of the libraries and languages we have already discussed (for example, Bedrock~\citep{Chlipala2013})
fit this definition of a framework. This section extends that discussion to cover frameworks
for two common domains: concurrent applications (Section~\ref{sec:reas-about-conc}) and language design and metatheory (Section~\ref{sec:metaframe}).
It then discusses frameworks for a few other domains (Section~\ref{sec:otherframe}), and concludes with
a discussion of the future of frameworks (Section~\ref{sec:futureframe}) for proof engineering.

\subsection{Frameworks for Verifying Concurrent Applications}
\label{sec:reas-about-conc}

Reasoning about concurrent programs brings new challenges into
mechanising reasoning: due to the excessively large state-space of
possible interactions between simultaneously executing processes or
threads, simply enumerating them is no longer tractable.
However, since in most of the practical applications the interaction
between processes on some sort of shared state happens only at
dedicate program points, via specific programming primitives, a
plausible way to reduce this complexity is to reduce \emph{concurrent}
reasoning to a \emph{sequential} one.
This idea has been pioneered in the work on \emph{Concurrent
  Separation Logic} by~\cite{ohearn07resources}, which
provided a series of inference rules for compositional sequential and
concurrent reasoning for shared-memory concurrency.

Similarly to plain Separation Logic, variants of CSL have been
implemented as both shallow and deep embedding with the corresponding
benefits and drawbacks.

The first \emph{shallow embedding} of Subjective Concurrent Separation
Logic, a CSL-like logic for concurrency, was due
to~\cite{LeyWild-Nanevski:POPL13}, who implemented it using Coq's
indexed types. Unlike the prior work on Hoare Type
Theory~\citep{Nanevski-al:POPL10}, in which Coq's dependent types were
only capturing the effect of an imperative program on a state, in
SCSL, the types were also carrying information about \emph{resource
  invariants}, capturing the contract of a concurrent interaction
between threads.
That work has been later extended to a more expressive
\emph{Fine-Grained Concurrent Separation Logic} (FCSL)
\citep{Nanevski-al:ESOP14,Sergey-al:PLDI15}, which provided a more
general treatment of concurrent resources, incorporating ideas from
both CSL and Rely-Guarantee-based verification
methodologies~\citep{Jones:IFIP83,Feng:POPL09}, and implementing them
in a form of a shallowly-embedded type theory for state.

The main shortcoming of both SCSL and FCSL, both being
shallowly-embedded type theories for state, are the limitations due to
the limitations of Coq's model \wrt impredicativity.
At the time of this writing, FCSL did not support higher-order heaps
(\ie, the possibility to reason about arbitrary storable effectful
procedures). It was conjectured by FCSL's authors that this obstacle
could be overcome by relying on the universe polymorphism feature
introduced in Coq version~8.5~\citep{Sozeau-Tabareau:ITP14}.
An approach based on \emph{Rely-Guarantee references}, similar to FCSL
in spirit, employed Coq as a host framework for implementing DSL (but
not proving its soundness \wrt some semantics) for streamlining
reasoning about certain concurrency
patterns~\citep{Gordon-al:TOPLAS17}, allowed by considering
Rely-Guarantee contracts, but without CSL-enabled proof modularity.

Implementation of concurrent imperative programs in Coq by means of
\emph{deep embedding} has been first considered in the context of
verifying low-level code with dynamic thread creations in CAP and CCAP
program logics~\citep{Yu-Shao:ICFP04,Feng-Shao:ICFP05}.
Targeting real architectures, those formal verification efforts
required astonishingly high proof efforts and have been eventually
superseded by a mechanized proof methodology based on \emph{certified
  abstraction layers}, not grounded in any specific Hoare-style
program logic~\citep{Gu-al:POPL15,Kim-al:APLAS17,Gu-al:PLDI18}.

Iris is another CSL-inspired mechanised verification framework that
has been in development in parallel with FCSL, with an aim to provide
more uniform foundations for reasoning about concurrency~\citep{Jung-al:POPL15}.
Due to the chosen semantic foundations, allowing for impredicativity
in the presence of mutable state (and hence, storable higher-order
procedures)~\citep{Svendsen-Birkedal:ESOP14}, Iris could not have been
implemented as a shallow embedding and, hence, has been encoded as a
deeply-embedded logic.

While that initially has been considered an significant obstacle for
verifying large concurrent programs in Iris, due to a large proof
overhead, the later introduction of Iris Proof Mode
(IPM)~\citep{Krebbers-al:POPL17} fixed this shortcoming, significantly
lowering the entrance threshold for conducting mechanised Iris
proofs~\citep{iris-tutorial}.
This has been achieved in IPM by effectively leveraging Coq's
extensible parsing and proof-by-reflection, and introducing a library
of domain-specific tactics, mimicking, for the sake of an end user of
the framework, standard CSL-style inference rules.
Due to its success, IPM itself has been later generalised to
MoSeL---an extensible proof mode allowing for reasoning not just with
Iris but with any separation-style program logics~\citep{Krebbers2018}.

As frameworks implementing Hoare-style reasoning about concurrency
with pre/postconditions, both FCSL and IPM follow the encoding style
with Dijkstra-style weakest preconditions.
%

\subsection{Frameworks for Language Design and Metatheory}
\label{sec:metaframe}

Several frameworks deal with the challenge of component reuse, one of the challenges from \poplmark (Section~\ref{sec:dsmetatheory}).
Meta-Theory \`{a} la Carte (MTC)~\citep{Delaware2013POPL} is a framework and Coq library that builds
on Data Types \`{a} la Carte~\citep{Swierstra2008} to address challenges in reuse:
\textit{extensibility} of definitions and proofs through algebraic properties that provide
control over the evaluation order, and \textit{modular reasoning} about partial definitions and proofs through algebraic combinators.
Using MTC, the proof engineer can assemble a language from existing components.

MTC does not address extensibility of languages with effects: adding
new effects breaks existing proofs. Modular Monadic Meta-Theory (3MT)~\citep{Delaware2013ICFP}
extends MTC with a methodology and monad library that includes monads
for effects as well as algebraic laws. The methodology and library make
proofs resilient to the addition of new effects to a language.
MTC and 3MT both use algebraic properties to address difficulties with component reuse---algebra in many ways offers
natural abstractions, and those techniques can apply more broadly outside of formal metatheory.

Other notable frameworks for language design include the Fiat~\citep{Chlipala2017, Delaware2015} framework for Coq,
as well as the Hybrid~\citep{felty2012hybrid} framework for
Isabelle/HOL and Coq, which addresses the difficulties of using HOAS with inductive and
coinductive proofs.

\subsection{Frameworks for Other Domains}
\label{sec:otherframe}

Concurrent applications and language metatheory are just two domains for which
verification frameworks are useful. Frameworks assist proof engineers in many other domains.
For example, a few frameworks exist for verifying distributed systems.
Verdi~\citep{Wilcox2015} is a framework for building verified distributed systems in Coq; 
it has been used to build and verify an implementation of the Raft consensus protocol~\citep{Woos2016}.
Disel~\citep{Sergey2017} is a framework for compositional verification of distributed protocols.




\subsection{Future Frameworks}
\label{sec:futureframe}

The expressiveness of the underlying logics of common proof assistants combined with their interactive natures
makes it possible to develop useful frameworks for a variety of domains. We expect that proof engineers
will continue to develop and improve on frameworks that tackle challenges associated with common domains,
as well as build new frameworks to handle challenges associated with new domains for verification as they arise.
In addition, we expect that frameworks will address challenges that current frameworks do not fully address,
such as language extension and component reuse in metatheory. 

While it is natural to apply frameworks to challenges within common domains, we also expect the development of
more general-purpose frameworks building on common proof assistants to address challenges that proof engineers
face independently of domain, or when following specific design principles.

\section{Proof Reuse}
\label{sec:proof-reuse}

Large proof developments may involve redundant efforts that can be time-consuming.
Proof reuse addresses this by repurposing existing proofs as much as possible,
minimizing the amount of redundant work that proof engineers must do.'
Early examples of proof reuse include \textit{proof by analogy}~\citep{curien1995},
the technique of adapting a proof of a theorem to a proof of a related theorem,
and \textit{proof generalization}~\citep{hasker1992generalization}, the technique of
adapting a proof of a theorem to prove a more general theorem.

Proof reuse is the proof engineering analogue to software reuse.
Like software reuse, proof reuse leverages design principles (Section~\ref{sec:designreuse}) and language constructs (Section~\ref{sec:languagereuse}).
In addition, the interactive nature of proof assistants naturally leads to a class of proof reuse technologies less explored
in the software engineering world: automated tooling (Section~\ref{sec:toolingreuse}).
This section samples these approaches.

\subsection{Design Principles for Modularity and Reuse}
\label{sec:designreuse}

Good design principles can help maximize the reusability of existing proofs.
Some of these design principles are natural generalizations of design principles
for software reuse more generally, such as \textit{aspect-oriented software development} (AOSD),
a programming approach that optimizes for separation of concern~\citep{Filman2004}.
Others, like the affinity lemmas from Planning for Change (Section~\ref{sec:des-scale}) are unique to proof engineering.

\paragraph{Design Principles from Software Engineering}
In software engineering, encapsulating behavior can help not only protect
against future changes, but can also help with reusing multiple implementations
of interfaces with the same behavior. Likewise, the interfaces and information hiding recommendations form
Planning for Change are useful not only to protect proofs against future changes, but also
to switch between different datastructure implementations with the same high-level behavior.

The work by~\citet{Delaware2011} attacks the problem of language
metatheory extension, along with the corresponding formalization in a
proof assistant and changing the corresponding
type safety proofs (i.e., progress and preservation theorems), from
the perspective of Software Product Lines (SPL). SPL is an approach to AOSD
that opportunistically reuses software by deriving many different pieces of software
from a common producer. 
\citet{Delaware2011} starts from formalizing a core language, taking a ``core'' Featherweight Java
(cFJ), and considering all further extensions to the language (casts,
interfaces, generics) as features.

What is inherent for the SPL approach is reasoning about composition
and possible interaction between features, expressed by means of an
algebra of feature operators: $\cdot$, $\#$, and $\times$. Introducing
multiple features can lead to an exponential explosion of pairwise
interaction, which, however, is rarely observed in practice, as most
of the features are mutually independent.

In order to enable feature-based decomposition of a language, all its
components (syntax, dynamic semantics, safety proofs etc.) are written
is specific languages, amenable for feature compostions. For instance,
the language syntax and its semantic/typing rules can be extended by
introducing the mechanism of \emph{variation points} (VPs) into the
corresponding grammar productions, premises and conclusions, of the
rules, reusing the intuition of SPL design.

On the implementation side, the modularity of extensions is achieved
by means of reusing Coq's capabilities for higher-order
parametrization: language component definitions are parameterized by
the corresponding variation point contexts. The crux of the technique
is identifying the effect of the VPs to the safety proofs, which are
conducted in a way, parametric with respect to the inductive cases to
be considered. For each specific combination of the features, the
top-level proof dispatches to the proofs from the corresponding
feature module.

The shortcoming of the approach is the requirement, for a core
language, to have a significant foresight when identifying the
appropriate VPs, which provide the opportunity for feature extensions.
While the paper demonstrates how to do it in the context of a
language, whose safety is formalised via the syntactic approach, it
provides little guidance with respect to other ways of stating type
soundness (e.g., via logical relations), neither does it consider
other domains beyond PL design. Overall, the approach seems to be a
bit ad-hoc, which is why further advances in this direction lead to
the creation of the monadic MTC~\citep{Delaware2013POPL} and 3MT~\citep{Delaware2013ICFP} frameworks we have already discussed.


\paragraph{Beyond Software Engineering}
Proof assistants in the LCF family are complex systems with multiple languages at different levels.
Accordingly, reuse in these systems happens not only at the term level, but also at the tactic level.
Designing powerful tactics can maximize reuse of proof scripts to prove different goals (Section~\ref{sec:des-scale}).

Among the recommendations that Planning for Change makes
is the use of \textit{affinity lemmas} that describe relationships between components.
These lemmas show that properties that hold over one component also hold over another,
which facilitates reuse of proofs across components.


\subsection{Language Constructs for Organization and Reuse}
\label{sec:languagereuse}

As in software engineering, proof assistants often provide support for reuse at the language level.
These range from entire languages optimized for reuse to useful constructs built
on existing languages that make reuse easier.
This section describes a sample of languages and language constructs for proof reuse.

\paragraph{Languages for Reuse}
Some languages are designed with the goal of optimizing for proof reuse.
For example, \cite{Felty1994} describes an ITP that is optimized for reuse at the tactic level.

In this system, reuse works by replaying tactics in a new proof setting.
To make this possible, the system automatically generalizes proofs using metavariables.
In contrast, the logical framework \textit{PR}~\citep{Caplan1995} optimizes for reuse at the level of the type theory.
The framework builds on an embedded Hoare logic, adding constructs to the logic
that aid in abstraction and reuse of proof terms.

HoTT (introduced in Section~\ref{sec:equality}) has practical proof engineering applications.
HoTT's univalence axiom gives rise to automatic \textit{transport} of functions and proofs
across type equivalences: to write the same function or proof about two equivalent types, the proof engineer needs only to
write the function or proof over one of these two types,
and then show the equivalence between them.
Cubical type theory~\citep{cohen2016cubical} provides
a computational interpretation of HoTT's univalence, so that it is no longer an axiom.

Proof assistants or extensions to proof assistants built on HoTT or cubical type theory
include Cubical Agda~\citep{cubical-agda}, CoqHoTT~\citep{coq-hott}, and RedPRL~\citep{redprl}.
While these ITPs are relatively new,
we expect that reuse will be easier in these proof assistants.
However, univalence is incompatible with the popular axiom UIP (Uniqueness of Identity Proofs, which states
that all proofs of equality at a given type are equal),
and univalent ITPs present their own difficulties,
so these are not a catch-all solution.
We discuss tooling for transport that does not rely on univalence
at the level of the type theory in Section~\ref{sec:toolingreuse}.

\paragraph{Language Constructs for Reuse}
Even in languages that are not designed with the goal of proof reuse in mind, certain language features can help make proof reuse more tractable.
The modules, type classes, and canonical structures discussed in Section~\ref{sec:desabs} are examples of these features, as are
other mechanisms for inheritance.
In addition, many proof assistants implement subtyping or type
coercions~\citep{barthe1995implicit, aspinall1996subtyping, Saibi1997, luo1999coercive, asperti2007user, callaghan2001implementation, deMoura2015} in various forms,
and these can also help make proof reuse more tractable.

One recent development is the notion of an \textit{ornament}~\citep{mcbride2010}, a programming mechanism for describing relationships
between inductive types that preserve inductive structure. That is, there is an ornament between natural numbers
and lists, and between lists and length-indexed vectors; there is no ornament
between lists and trees, since these types have different inductive structures.
Ornaments allow for the derivation of new types from existing types, and for the automatic lifting of functions and proofs
from each existing type to the corresponding new type. Lifting functions and proofs necessitates some additional automation beyond
the addition of ornaments to a language.
So far, ornaments exist in various forms as deep embeddings in Agda~\citep{Dagand17jfp, Williams2014, ko2016},
and as tooling for proof reuse (Section~\ref{sec:toolingreuse}) in Coq.

The language Cedille makes it possible to define combinators that allow for reuse of functions and proofs
across certain related datatypes without any performance penalty~\citep{Diehl2018}.
Like ornaments, these combinators facilitate reuse between unindexed and indexed versions of types
like lists and vectors. They do not support incompletely determined relations that ornaments support,
such as the ornament between natural numbers and lists (lists have a new element in the inductive case).
Applications of these combinators definitionally reduce in such a way as to facilitate efficient reuse thanks to properties
of the underlying type theory of Cedille.



\subsection{Automated Tooling for Proof Reuse}
\label{sec:toolingreuse}

Since proof assistants typically involve a heavily interactive workflow like the REPL, they lend themselves naturally to automation.
As such, in addition to the design principles and language features for proof reuse found in typical
software engineering projects, there is a body of work that uses automated tooling
to repurpose existing proofs. Section~\ref{sec:metaframe} decribes some frameworks
for component reuse, a kind of proof reuse, in mechanized metatheory. This section describes
other tooling for proof reuse.

\paragraph{Adapting Inductive Proofs} \cite{Boite2004} describes a tactic to adapt proof obligations
to changes in inductive types. This technique constructs and analyzes a dependency graph to determine
when reuse of existing proofs is possible, then reuses existing proofs when possible
and generates new proof obligations for new branches of the proof. \cite{Mulhern06proofweaving} provides a high-level
description of a possible method to synthesize missing proofs for those new obligations using a type reconstruction algorithm,
though it is not currently implemented.

\paragraph{Proof Planning} \textit{Proof planning}~\citep{Bundy1998} is a proof search technique
that uses plans to guide search for proofs with similar structures.
Proof planning can involve the use of \textit{critics}~\citep{ireland1996},
which reuse information from failing proofs to guide search for correct proofs.
While it was originally designed for use with automated theorem provers, it has also reached
interactive theorem provers.
For example, IsaPlanner~\citep{Dixon2003} is a proof planner for Isabelle with support
for rippling~\citep{shah2005}, a technique for automatic induction.
Rippling has also been implemented in an induction automation
tool for Coq~\citep{wilson2010}.

\paragraph{Proof Generalization} Proof generalization tools generalize proofs of a theorem
to obtain proofs of more general theorems. Proof generalization first arose in
the 1990s~\citep{hasker1992generalization, kolbe1998proof, pons1999conception}.
A simple example of proof generalization
is the Coq \lstinline{generalize} tactic, which does basic syntactic generalization.
The Coq documentation demonstrates this tactic on the following proof state~\citep{coq-tactics}:
\begin{lstlisting}
x, y : nat
============================
0 <= x + y + y
\end{lstlisting}
Running \lstinline{generalize (x + y + y)} on this goal produces the following proof state:
\begin{lstlisting}
x, y : nat
============================
forall n : nat, 0 <= n
\end{lstlisting}
The final generated proof term proves the original goal by specialization of this generalized goal.

The generalization technique implemented in Coq's \lstinline{generalize} tactic can handle only
simple syntactic substitution. A few tools can handle more complex transformations.
For example, \cite{Johnsen2004} presents a proof generalization tool for Isabelle
with proof terms which can handle generalizing over dependencies on other theorems,
as well as generalization over functions and types. 
The Coq proof repair tool \textsc{PUMPKIN PATCH} (Section~\ref{sec:repair}) includes an abstraction component
which does not just syntactic generalization, but also type-driven generalization.

\paragraph{Transport}
\textit{Transport} (also known as \textit{transfer}) methods
automatically adapt proofs along relations.
These tools aim to mimic the experience of mathematical proofs on paper, in which simply stating a relation
between two structures (such as an equivalence) can be
enough use theorems about one structure as theorems about the other.

This idea developed both as an extension to the language and as an approach
to automation: \cite{Barthe2001} introduced an extension to dependent
type theory with a computational intepretation of isomorphisms using
rewrites. Around the same time, \cite{Magaud2002} introduced an
automatic method for adapting proofs along binary and unary
representations of the natural numbers.

Since then, there have been many more transport tools handling more than just those two types,
including the \textit{Transfer} and \textit{Lifting} packages~\citep{Huffman2013} for Isabelle/HOL,
and a prototype Coq plugin for transporting proofs across ismorphisms and implications~\citep{ZimmermannH15}.

\textit{Univalent transport} is the particular kind of transport across type equivalences that arises
from HoTT's univalence axiom (see Sections~\ref{sec:equality} and~\ref{sec:languagereuse}).
Equivalences for Free!~\citep{tabareau2018} uses insights from HoTT
to develop and formalize a powerful tool for transporting proofs across equivalences in Coq.
In many cases, it is possible to use this tool to port functions and proofs without any axiomatic dependencies;
in some cases, the tool relies on the functional extensionality axiom.
Thus far, the primary barriers to usability of the tool are the proof burden on the user to configure the automation,
and the inefficiency of the generated functions. Nonetheless, this is a significant step toward
a robust tool for automatic transport in a proof assistant that does not depend on univalence.

The \textsc{DEVOID}~\citep{Ringer2019} Coq plugin automates transport across certain equivalences
that correspond to algebraic ornaments, a particular class of ornaments (Section~\ref{sec:languagereuse}).
\textsc{DEVOID} automatically discovers and proves the equivalences that correspond to these ornaments,
and then transports functions and proofs across those equivalences using a program transformation. \textsc{DEVOID} handles a narrow class of equivalences
relative to Equivalences for Free!, but the functions and proofs that it produces for the cases it can handle are small and efficient in comparison.

\subsection{Packaging and Distributing Programs and Proofs}

Proof assistants projects are software artifacts, and can thus be packaged and distributed in a similar way.
For Isabelle/HOL, the venue for distribution is the Archive of Formal Proofs~\citep{isabelleafp,Blanchette2015}.
Coq uses the OCaml infrastructure around the OPAM package manager to provide a similar collection of packages~\citep{CoqOPAM}.
In principle, executable verified software can also be distributed on these platforms, but can also
use conventional channels, which may raise issues of trust.

\subsection{Future of Reuse}

Component reuse, a form of proof reuse, is an
underaddressed tenant \poplmark. The solutions which do
exist are able to take advantage of common proof structure within the
domain of metatheory. We expect that similar common structure exists
for proofs in other domains, and this area is ripe for the development
of design principles, frameworks, and automated tooling to maximize
reuse.

More generally, we expect to see mainstream proof assistants continue
to integrate language reuse constructs. For example, ornaments are a
promising feature designed specifically for reuse in a dependently
typed language, but most existing implementations require the user to write
programs and proofs in a domain-specific deeply embedded logic.
\textsc{DEVOID} takes some steps toward integrating ornaments into an existing ITP
without an embedding, but it handles only a small class of ornaments and makes some additional
restrictions beyond those that the original ornaments work assumes.
We expect ornaments to integrate more smoothly with existing ITPs in the future. 

We expect that recent developments in HoTT will
fundamentally change how people view proof reuse, 
and that concepts from HoTT will continue to influence the design of proof reuse
tools for other languages. Approaches like Equivalences for Free! have the
benefit of principled design of automation with guaranteed properties,
but do not introduce univalence, and so are not incompatible with
other assumptions that programmers may want in their type theories.
These two views of univalent transport
can continue to evolve alongside one another.



\chapter{Practical Proof Development and Evolution}
\label{ch:development}

As the scale of proof development grows, priorities change. Project management becomes more important, as does dealing
gracefully with changes over time. These demands mirror the concerns the software community has addressed as 
program development has scaled. The proof engineering community has responded similarly, with interfaces,
environments, processes, and tools for effective development that scales.

This section discusses developments in user interfaces and tooling (Section~\ref{sec:user}),
proof evolution (Section~\ref{sec:evolve}), user productivity and cost estimation (Section~\ref{sec:productivity}),
and mining proof repositories (Section~\ref{sec:mining}) that address these concerns.


\section{User Interfaces and Tooling for User Support}
\label{sec:user}

Most early proof assistants shipped with a very simple user
interface: the Read-Eval-Print Loop (REPL). This interface reads in user-written
expressions in the proof assistant language, evaluates those expressions,
then prints a result or error for the user.

User interfaces for proof assistants have come a long way from the 
REPL. The support that these REPLs provide users is minimal, and so soon after
their development, many techniques to ease interaction with REPLs arose.
While the interfaces from earlier eras still see common use, we are now entering an era
of interaction that emphasizes full integrated development environments (IDEs),
with support for project management and for asynchronous development.

In parallel to this evolution of user interfaces (Section~\ref{sec:uievolve}), we are seeing an increase
in specalized interfaces (Section~\ref{sec:specialized}),
usability analysis of user interfaces (Section~\ref{sec:usability}),
and advanced tooling for user support (Section~\ref{sec:usersupport}).
We expect these traditions will merge and drive the future of interaction (Section~\ref{sec:futureinterface}) 
with proof assistants.

\subsection{The Evolution of User Interfaces}
\label{sec:uievolve}
We can think of proof assistant user interfaces as evolving in three generations:

\begin{enumerate}
\item Generation I: The REPL
\item Generation II: Separation of Concerns
\item Generation III: Full IDEs
\end{enumerate}

\paragraph{Generation I: The REPL}


The REPL was the earliest form of interaction with the proof assistant.
For example, the description of Stanford LCF~\citep{Milner1972b} calls the proof process a
``conversation between the user and the computer.'' The LCF user writes commands,
which the computer evaluates and replies to with feedback such as new goals.
In part of the example from the LCF description, the user cuts an inline lemma:
\begin{lstlisting}
*****GOAL f $\subset$ g;
\end{lstlisting}
The computer then responds acknowledging the new goal:
\begin{lstlisting}
NEWGOAL #1 f $\subset$ g 
\end{lstlisting}
The user tells the computer to prove this goal inductively:
\begin{lstlisting}
*****TRY 1 INDUCT 1; 
\end{lstlisting}
The computer responds with two intermediate goals, a base case and an inductive case:
\begin{lstlisting}
NEWGOAL #1#1 UU $\subset$ g 
NEWGOAL #1#2 fun(f1) $\subset$ g ASSUME f1$\subset$g 
\end{lstlisting}
The user then proves those subgoals, then uses the inline lemma to prove the original result.

Many ITPs followed in this tradition and introduced command line REPLs.
Examples of command line REPLs include the \lstinline{coqtop}~\citep{coq-commands} command for Coq and
the \lstinline{hol}~\citep{hol-interact} command for HOL.		
Some of these tools are still accessible even when graphical interfaces exist.
For example, Coq still exposes its \lstinline{coqtop} command,
in spite of the existence of the graphical interfaces CoqIDE~\citep{coqide} and Proof General~\citep{Aspinall2000}.

\begin{figure}
\center
\includegraphics[width=0.7\linewidth]{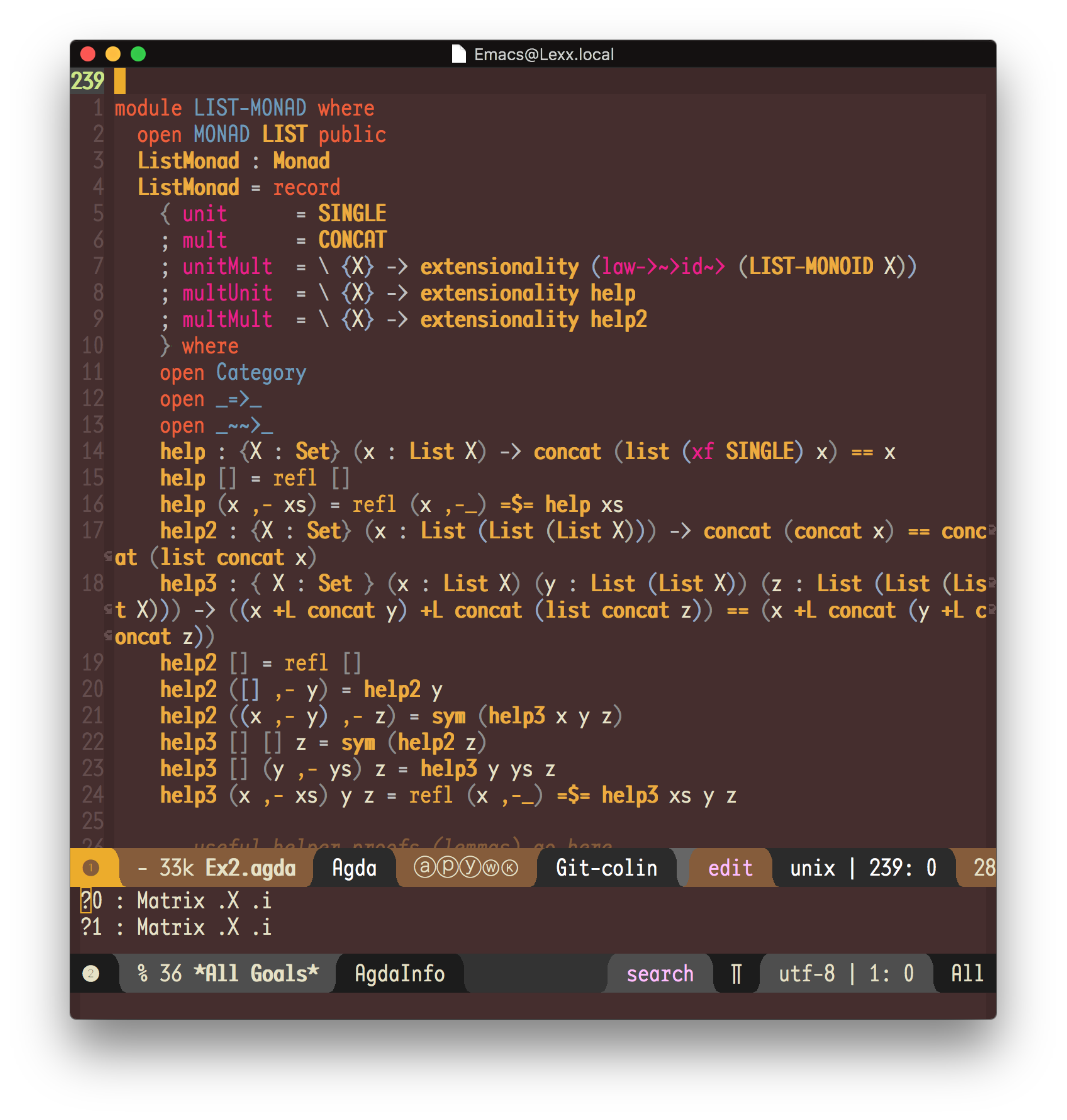}
\caption{Agda Emacs mode (from \cite{barret-agda})}
\label{fig:agda-modes}
\end{figure}

However, not all ITPs followed in this tradition.
Nuprl, for example, was distributed with a graphical interface from the start~\citep{Constable1986}.
Even those ITPs that followed in this tradition sometimes later diverged.
For example, Agda's \lstinline{--interactive} option to interact with the REPL directly 
is no longer supported; Agda interaction happens through an Emacs~\citep{emacs} or Atom~\citep{atom} mode which calls out directly to the 
backend theorem prover~\citep{agda-editing}. Figure~\ref{fig:agda-modes} shows an examples of the Agda Emacs mode.
Isabelle/HOL has recently done away with its REPL; the default interface is now Isabelle/jEdit~\citep{Wenzel2012},
which instead builds on Isabelle/PIDE~\citep{Wenzel2014}.


\paragraph{Generation II: Separation of Concerns}

The 1990s saw a surge in the release of interfaces for ITPs decoupled from the proof checker,
typically communicating with the system through a protocol.
In some ways, this was a natural path of evolution from the way that users
typically interacted with the REPLs for existing ITPs.
While the REPL was typically exposed through a command line tool,
it was common to instead use multiple Emacs buffers, 
one for development and for the proof assistant top-level,
and to copy definitions between the two.
This approach is still used in some modern proof assistants
such as HOL~\citep{hol-tutorial}.

This mode of interaction naturally led to the development of Emacs modes which interact with the REPL or theorem prover backend.
For example, Isamode~\citep{isamode} for Isabelle99 was an Emacs mode for Isabelle which smoothed interaction with the REPL.
The HOL4 Emacs mode~\citep{hol4-interact} is still used to this day.
The Agda Emacs mode, which interacts with the Agda backend, is similar in spirit but contains more advanced functionality; 
it allows the user to, for example, define holes in terms and fill those holes in later in development.
Idris includes an Emacs mode~\citep{mehnert2014tool} for interacting with the REPL, which inspired by both the Agda Emacs mode and Proof General.

Other interfaces beyond Emacs modes communicate with the backend theorem prover
or REPL in this style. For example, ALF, a predecessor to Agda, included a window-based interface for communicating 
with the backend~\citep{altenkirch1994user}. The lightweight interface 
TkHOL~\citep{tkhol} for HOL also follows in this style.
\cite{BERTOT1998} describes a generic approach for building an interface that communicates with
the ITP using a protocol, inspired by the early Coq user interface \textsc{CtCoq}.

\begin{figure}
\begin{minipage}{0.45\textwidth}
\includegraphics[width=1.0\linewidth]{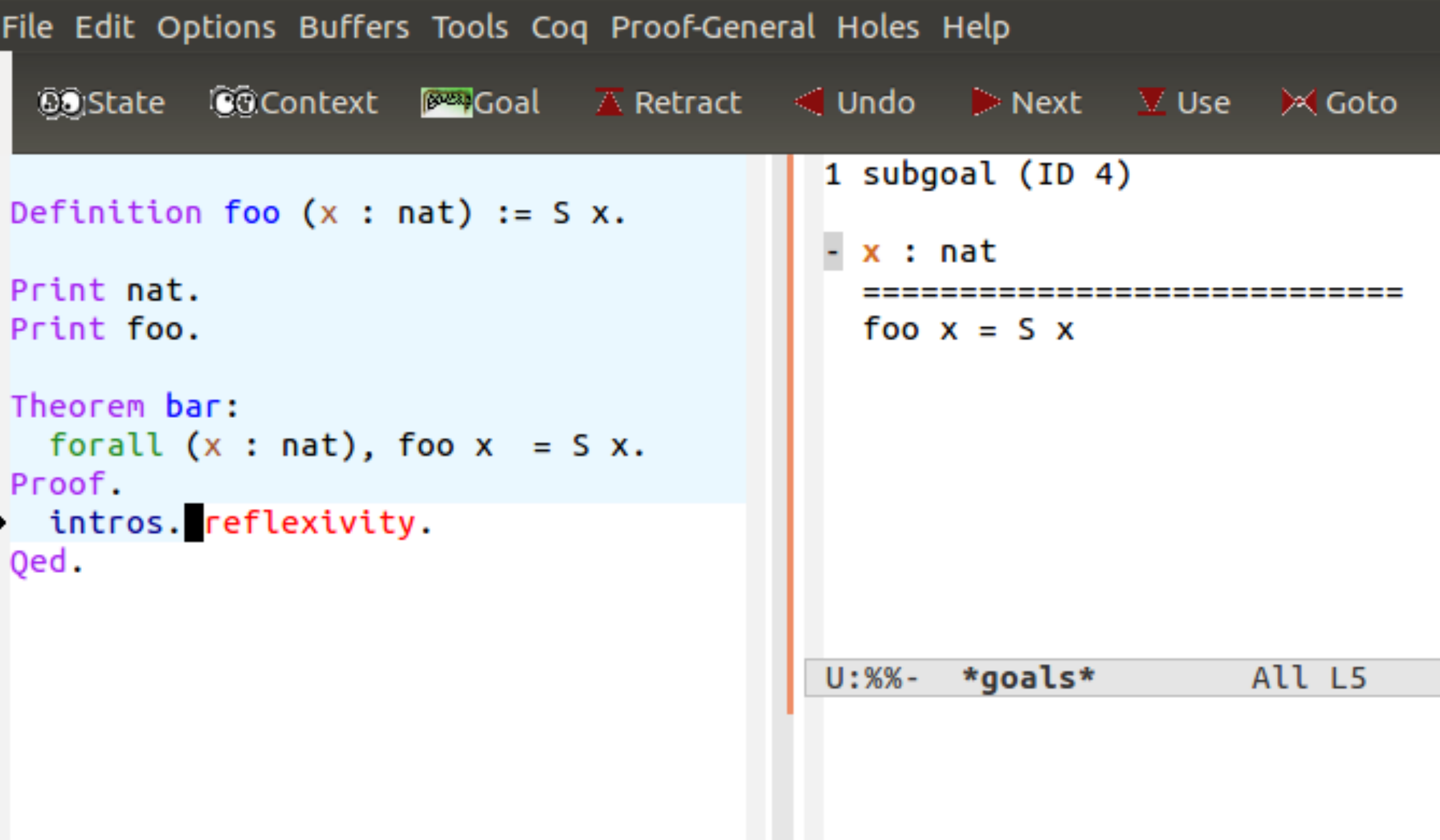}
\end{minipage}
\hfill
\begin{minipage}{0.5\textwidth}
\includegraphics[width=1.0\linewidth]{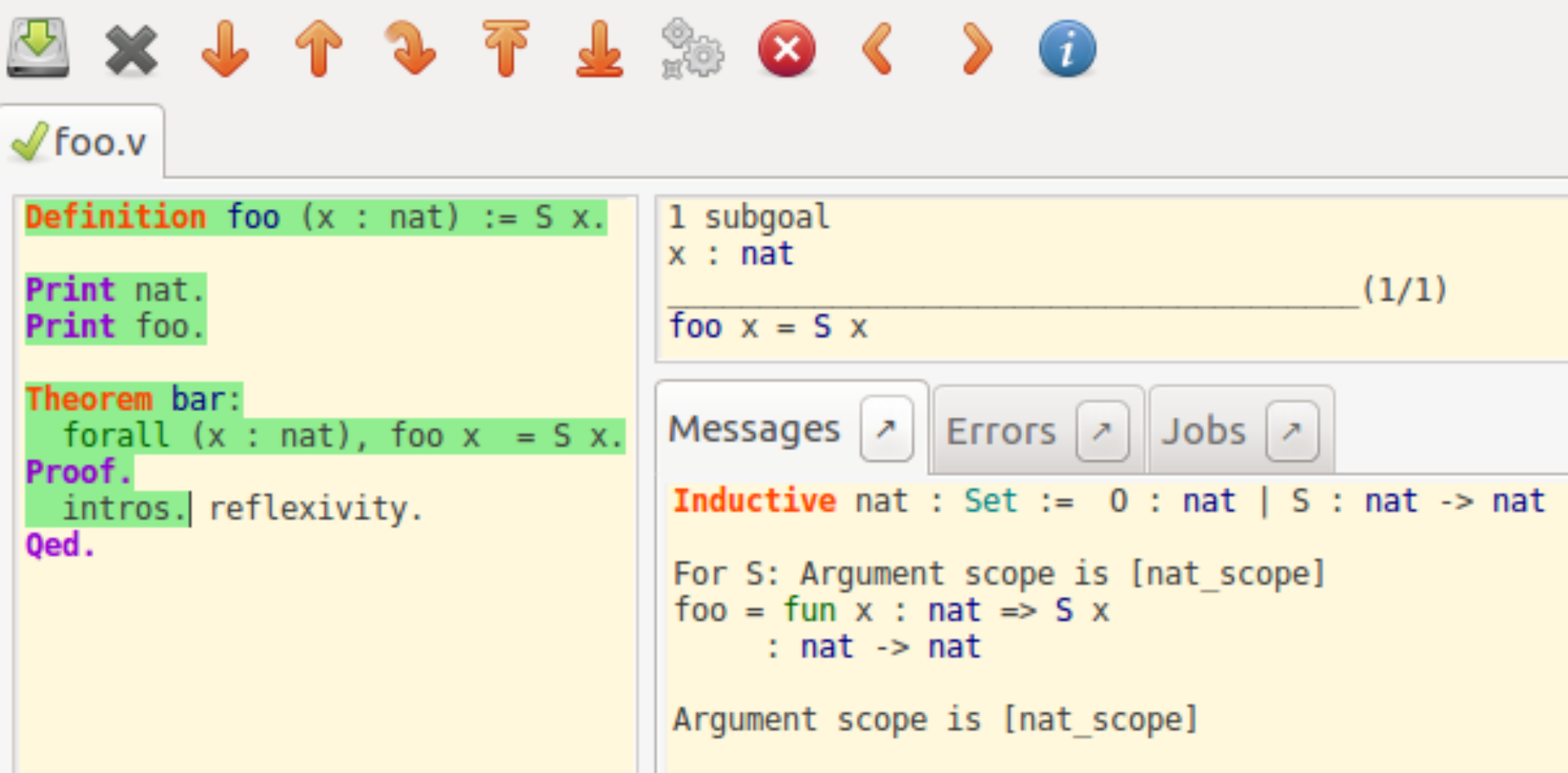}
\end{minipage}
\caption{Proof General (left) and CoqIDE (right) for Coq}
\label{fig:proofgeneral}
\end{figure}

In some cases, these interfaces were entirely
independent of the underlying proof assistant.
One notable example of such an interface from this generation is the Emacs extension 
Proof General~\citep{Aspinall2000}, an
interface for proof development that supports multiple 
proof assistants. Proof General has seen widespread use,
especially within the Coq community. 
While Proof General best supports Coq, it also has support for
LEGO, PhoX, and an old version of Isabelle, as well as experimental support
for other proof assistants~\citep{proofgeneralwebsite}.
It is simple yet easily extensible, both to support new 
proof assistants and to add new functionality for existing proof assistants. 
Company-Coq~\citep{CompanyCoq2016} for example, 
extends Proof General with many new features for Coq, including improved autocompletion, and integration of documentation. 


Following the success of Proof General, Coq released
the lightweight interface CoqIDE~\citep{coqide} as part of Coq 8.0~\citep{coqide-commit}.
Its main selling point was speed: It claimed to be faster than Proof General.
In addition, CoqIDE's native support for Coq means that it is
always maintained and distributed with new versions of Coq, imposing minimal overhead on
users. Figure~\ref{fig:proofgeneral} shows CoqIDE and Proof General side-by-side for Coq.

Both third-party interfaces and native interfaces from this generation continue to be
popular to this day. This separation of concerns has also inspired a new
generation of specialized interfaces (Section~\ref{sec:specialized}) for proof assistants.



\paragraph{Generation III: Full IDEs}

The third generation of user interfaces coincides with the rise of proof development
of large projects and the corresponding increase in concern for good proof engineering
support. Interfaces from this generation focus on scaling to large developments.
For example, early user interfaces did not support \textit{asynchronous development}: 
they did not allow the user to run the proof checker on some proofs while modifying others.
Early user interfaces also did not have support for project management, and so were
not truly full-scale IDEs.

\begin{figure}
\center
\includegraphics[width=0.7\linewidth]{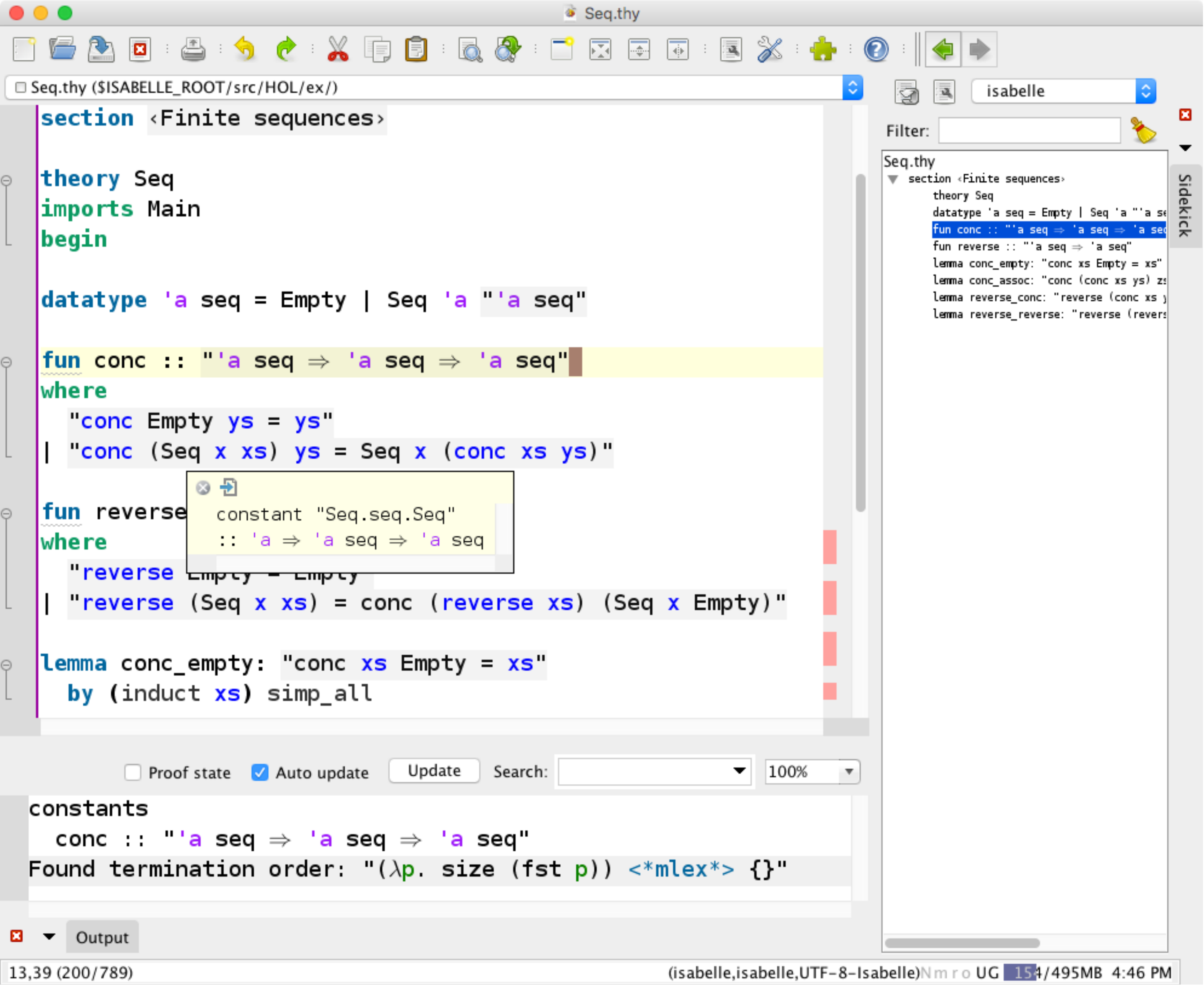}
\caption{Isabelle/jEdit (from \cite{isabelle-jedit})}
\label{fig:pide}
\end{figure}

Many IDEs in the latest wave of development address these concerns.
Coqoon~\citep{Faithfull2016}, for example, is an IDE for Coq built on Eclipse with support for both project management and asynchrony. 
The PIDE framework, originally developed
for use with the Isabelle IDE Isabelle/jEdit~\citep{Wenzel2014}, also supports asynchronous development;
Isabelle/jEdit is shown in Figure~\ref{fig:pide}.

PIDE is ultimately indifferent to the backend theorem prover;
\cite{Wenzel2013} and \cite{Barras2015} describe interfaces built on PIDE for Coq.
PIDE also has additional interfaces in Isabelle aside from the default
Isabelle/jEdit, including Isabelle/VSCode~\citep{isabelle-vscode}, an Isabelle plugin for Visual Studio~\citep{vscode}.
PIDE has seen enough success that Isabelle has done away with its REPL entirely.

Like Isabelle, Lean also has an IDE implemented as a Visual Studio plugin~\citep{lean-vscode}.
This IDE communicates with the Lean server, and supports 
incremental compilation and proof checking, debugging, documentation, and batch execution.

Many existing proof assistant interfaces have integrated features from this generation.
For example, CoqIDE now supports asynchrony~\citep{coqide}. It remains
to be seen to what extent full-scale IDEs for proof assistants will continue to evolve
and to grow in popularity.


\subsection{Specialized Interfaces}
\label{sec:specialized}

The separation of concern from Generation II interfaces for proof assistants inspired the development
of specialized interfaces. For example, web-based interfaces require minimal setup and installation,
and so are thought to be less intimidating to new users, especially students. 
Many web-based interfaces are built with students as the key audience to address concerns students have about installing and using
heavyweight IDEs. Examples of web-based interfaces for proof development include 
ProofWeb~\citep{kaliszyk2007web}, jsCoq~\citep{Gallego2016}, and PeaCoq~\citep{peacoq}.
The Lean 2 tutorial~\citep{lean-tutorial} uses the Lean.JS~\citep{lean-js} web interface for Lean to provide an interactive
learning experience directly in the browser.

Proof assistant users sometimes note that the experience of writing proofs has game-like elements.
The interactive nature of a proof assistant, for example, is similar to interacting with an adversary in a game.
There is some work on gamification of proofs that reifies this intuition into the interface itself.
In these games, players can generate program annotations~\citep{dietl2012verification},
write natural deduction proofs~\citep{lerner2015polymorphic},
and identify inductive invariants~\citep{bounov2018inferring}, all the while having low-level details of these proofs
abstracted away from them. While these games are not interfaces for well-known ITPs like Coq and Isabelle,
they may help with tasks that can assist users in writing proofs, such as finding inductive invariants. Applying
this same intuition to build new interfaces for commonly-used proof assistants may help make them more accessible
to non-experts in the future.

\subsection{Interface Usability Analysis}
\label{sec:usability}

Traditional software engineering tools and interfaces are often subject to usability analyses according to the conventions in human-computer interaction (HCI). There are some similar analyses related to proof assistants. \cite{Aitken1998} propose a three-layer model to account for user interaction with a proof assistant, and perform an empirical study which concludes that there is support for the view of ``proof as programming'' for proof assistant interaction, rather than ``proof by pointing''~\citep{Bertot1994} and ``proof as structure editing''. \cite{Kadoda1999} analyze the usability of theorem provers in a cognitive framework by using questionnaires. \cite{Aitken2000} analyze errors in proof attempts. \cite{HOFM2014raey} use focus groups to evaluate usability of proof assistants, finding that users prefer proof assistants that produce intuitive proofs, can present comprehensible proof steps, and provides a convenient interface.

\subsection{Tooling for User Support}
\label{sec:usersupport}


In parallel with the evolution of user interfaces, recent years have seen
an emphasis on tooling to help users
with proof development. These features are more useful than ever because of the
advent of Generation II user interfaces that are not tightly tied to the REPL.

Many of the user support features that are now arising for proof engineering
echo similar features that already exist in languages with more mature IDEs. 
%
%
%
%
%
For example, languages with more mature IDEs often integrate refactoring tools into those IDEs;
now that proof assistant interfaces are maturing, interfaces
with refactoring support such 
the Coq interface CoqPIE~\citep{Roe2016} are beginning to emerge.
We discuss more refactoring tools in Section~\ref{sec:refactor}.

In addition, new techniques are extending the reach of user support features to
support the challenges particular to proof development. 
For example, one common challenge in proof engineering is efficiently finding relevant datatypes and proofs
Many proof assistants distribute tools for this by default.
For example, Coq includes the \lstinline{Search} command,
which \ssreflect~\citep{Gonthier2010} extends;
Isabelle includes the \lstinline{find_theorems} and \lstinline{find_consts} commands.
This challenge has also inspired several external tools, including the web-based tool Whelp~\citep{asperti2004content} for Coq,
upon which Matita builds~\citep{asperti2007user}.

Machine-learning techniques can also help with challenges in proof developments, for example by suggesting hints to users.
Recent tooling of this flavor includes the ML4PG~\citep{Komendantskaya2012} extension to Proof General, which
uses machine learning to suggest hints during proof development,
and ACL2(ml)~\citep{Heras2013}, which uses machine learning to suggest auxiliary lemmas
for ACL2 development. Nagashima and He propose a proof method recommendation system for
Isabelle/HOL based on machine learning, which is trained on large proof corpora~\citep{Nagashima2018}.

Unlike traditional software development, proof development with ITPs
often involves significant interaction with automation. Accordingly, one question that many tools
explore is the ideal user experience for interacting with automation.
The web-based IDE PeaCoq~\citep{peacoq}, for example, has extra support for tactic previews and context management.
Matita~\citep{asperti2007user} includes special support for contextual term manipulation,
and for understanding the execution of tactical-like chains of tactics.

Another common problem in proof development is that the proof engineer may accidentally
state a false theorem, or may be unsure if a stated theorem is true.
When a stated theorem is false, it can be difficult to determine that the theorem is actually false;
the proof engineer may instead think his inability to prove the theorem is due to his own shortcomings.
Hammers and other general-purpose automation (Section~\ref{sec:autotactics}) can help a proof engineer discharge simple proof obligations
and quickly determine that a theorem is true; the proof engineer can then reprove the theroem in a different way if desired.
Property-based testing tools like Quickcheck for Isabelle~\citep{Bulwahn2012b} and QuickChick~\citep{lampropoulos2017generating, Paraskevopoulou2015} 
for Coq can help users identify counterexamples to false properties.



\subsection{Future of User Interfaces}
\label{sec:futureinterface}

Many of the Generation I interfaces
still exist today. We expect some of these will continue to exist, since they are
useful when resources are limited. However, there is a growing trend of moving away from the REPL in some ITPs
such as Isabelle/HOL; perhaps more ITPs will move in that direction.

The interactive nature of the REPL makes it simple to collect fine-grained data
on how proof engineers develop code. For proof assistants that are backed by a REPL,
collecting this data could help with the development of better tooling to support
proof engineers during development.
Similarly, while there is some empirical information on how proof engineers interact with different user interfaces,
this is still a lot more ground to cover. Even collecting simple information like the number of users of each interface for each proof assistant over time may 
help gauge the impacts of different design decisions.
More user studies on interacting with proof assistants could also help pave the way for more useful interfaces.

We expect that the separation of concerns emphasized with Generation II interfaces
has had a strong influence and will likely continue to have a strong influence. 
Separation of concerns and extensibility may be part of why
Proof General has continued to be successful after so many years.
In many ways, this mirrors the success
that is seen in successful IDEs for software engineers, such
as Eclipse~\citep{eclipse} or IntelliJ IDEA~\citep{intellij}. 
We expect
that future developments will continue to work on separation of concerns
and extensibility, with better plugin systems for IDEs to support more features
with minimal effort for the interface developer and for the proof engineer.

Emacs has played a crucial role in the history of the development of IDEs for ITPs,
with many early interfaces implemented as Emacs modes. Recent years have seen the development
of IDEs as Visual Studio plugins for Isabelle/HOL and for Lean.
In the future, perhaps more proof assistants will implement IDEs as plugins for existing IDEs such as
Visual Studio and Eclipse, and perhaps these existing IDEs will play 
a similar role to Emacs in the continued development of proof assistant IDEs.

We also expect that project management will continue to grow in importance
as large proof developments become more common. Few interfaces for proof assistants
currently have strong support for project management; this is an area ripe for
improvement. Better integration of build tools and continuous integration tools
can greatly improve development experience. 

Finally, we expect more productivity tools to emerge, like the refactoring
tools that already exist, and for these tools to be integrated into IDEs. For example, most mature IDEs for existing languages 
have strong support for debugging. Debugging tools for proof assistants, on the other hand,
are few and far between.
Better plugin systems for interfaces could help minimize the friction in supporting
these features at the IDE level. 

\section{Proof Evolution}
\label{sec:evolve}


Programs change over time, and so proofs about programs must change with those programs.
This concern is raised in the Social Processes~\citep{DeMillo1977} critique of program verification
as a barrier for the verification of real programs. 
This barrier has been realized in real developments; a review~\citep{Elphinstone2013} of the evolution of the seL4 verified 
OS microkernel~\citep{Klein2009}, for example, notes that while
customizing the kernel to different environments may be desirable, 
``the formal verification of seL4 creates a powerful disincentive to changing the kernel.''
\cite{leroy2012} motivates and describes updates to the initial CompCert memory model
that include changes in specifications, automation, and proofs~\citep{leroy-mem-2010}.

Changes in programs and proofs are not always in the proof engineer's control---
updating a standard library, for example, can lead to proofs in client code failing
during regression proof checking (Section~\ref{sec:regression}).
\textit{Reactive} approaches to proof evolution address changes that occur outside of the proof engineer's control.
These approaches contrast with and are complementary to \textit{proactive} approaches
that address brittleness ahead of time, 
such as the design principles discussed in Section~\ref{sec:proofdes}.

Consider, for example, a Coq proof that uses the \lstinline{intros} tactic. If the user does not
pass identifiers to \lstinline{intros}, then Coq automatically chooses hypothesis names. Small changes to the theorem statement or 
to the proof can change the names of the hypotheses that Coq chooses, which can make proofs that refer to those hypotheses brittle.
We briefly discussed two proactive approaches to this problem in Section~\ref{sec:des-scale}:
explicitly choosing identifiers to pass to \lstinline{intros}, and writing tactics that do not refer to these hypotheses at all.
In contrast with these proactive approaches,
the IDE CoqPIE~\citep{Roe2016} automatically renames references to hypotheses in proofs to work
around this problem reactively.

The renaming functionality of CoqPIE is an example of proof refactoring (Section~\ref{sec:refactor}),
a reactive approach to proof evolution. Proof repair (Section~\ref{sec:repair})
is a similar reactive approach to proof evolution. The main distinction between these two approaches 
is that proof refactoring is semantics-preserving, while proof repair need not be. 
Nonetheless, these technologies often overlap.

\subsection{Regression Proving}
\label{sec:regression}

Regression proving is the process of rechecking proofs after a change to a verification project, mirroring \emph{regression testing} for software projects. For large-scale projects, regression proving may require considerable machine time---from tens of minutes and hours up to several days. This can negatively affect the productivity of proof engineers. Absent domain- and context-specific knowledge, as in proof refactoring, the two main techniques to speed up regression proving are proof-checking parallelization~\citep{Wenzel2013MultiProcessing,Barras2013} and proof selection~\citep{Celik2017}.

Support for parallelization varies in degree and kind among proof assistants. Isabelle leverages the support for threads in its host compiler, Poly/ML, to spawn proof checking tasks processed by parallel workers. Using a notion of \emph{proof promises}, proofs that require previous unfinished result can proceed normally and become finalized when extant tasks terminate~\citep{Wenzel2013}. Isabelle also includes a build system with integrated support for checking of proofs and management of parallel workers. The lack of native threads in OCaml prevents similar low-cost fine-grained parallelism for Coq. However, spawning parallel operating system processes is still possible, and such processes can be leveraged for both file-level parallelism and to check fine-grained proof tasks~\citep{Barras2015}. Lean supports fine-grained parallel proof checking~\citep{deMoura2015}. Compared to parallelization of test execution for software projects, checking a proof is deterministic and has no side-effects detrimental to checking other proofs. 

A \emph{regression proof selection} (RPS) technique limits the scope of regression proving to those proofs that are affected by a change to a project. While selection at the file level (modulo file dependencies) is broadly supported via build systems such as \texttt{make}, only some proof assistants such as Isabelle and Coq supports selection of individual proofs; this is made possible by support for \emph{asynchronous} proof checking~\citep{Wenzel2014,Barras2015}. Celik~et~al. proposed an RPS technique for Coq that combines dependency analysis at the file and proof levels~\citep{Celik2017}; their tool implementation, dubbed iCoq, compares checksums of files, terms, and proof scripts to locate and run affected proofs sequentially. In an evaluation on the revision histories of several large-scale Coq projects, iCoq was up to 3 times faster than using conventional \texttt{make}-style checking with a persistent store, and up to 10 times faster than conventional checking when each revision is checked from a clean slate.

\cite{Palmskog2018} defined a taxonomy of regression proving techniques for proof assistants that include both parallelism and selection. Along one axis, they consider parallelization at the file and proof granularity. Along the other axis, they consider selection of files and proofs. Their most sophisticated technique combines proof selection and fine-grained parallelization, and consisistently outperforms other techniques on the revision histories of several Coq large-scale projects.

\cite{WenzelScalingIsabelle, WenzelFurtherScalingIsabelle} outlined how to scale Isabelle for large projects using both parallelism and other techniques.

\subsection{Proof Refactoring}
\label{sec:refactor}

\textit{Refactoring} is the restructuring of code in a way that preserves semantics~\citep{opdyke1992};
\textit{proof refactoring} is the refactoring of proofs~\citep{WhitesidePhD}.
Proof refactoring tools help automate this process, propogating a single change throughout the proof development.
Like program refactoring tools, proof refactoring tools can help keep developments
maintainable as they change over time~\citep{Bourke12}. In that way, it is possible to consider refactoring tools 
as both proactive and reactive approaches to proof evolution, though we consider them here through a reactive lens.

Some proof assistants expose tactics (Section~\ref{sec:tactics}) or proof languages (Section~\ref{sec:structuredprooflangs})
in which the proof engineer can write high-level proof scripts to guide proof search.
Some proof refactoring tools refactor these proof scripts directly.
One such tool is \textsc{POLAR}~\citep{Dietrich2013}, a generic framework for proof script refactoring. 
\textsc{POLAR} is instantiated with two languages,
both of which are based on Isabelle/Isar~\citep{Wenzel2007isar}:
Hiscript~\citep{WhitesidePhD}, a language with support for refactoring,
and $\Omega$\textsc{script}~\citep{dietrich2011}, a language with support for proof planning (Section~\ref{sec:toolingreuse}).
Refactoring in \textsc{POLAR} works through a combination of rewrite rules that operate over
a graph representation of the underlying language. 
\textsc{POLAR} implements ten kinds of refactorings by default,
and also supports custom refactorings.
It guarantees that all lemmas that go through before
the refactoring continue to go though after the refactoring.

Some proof refactoring tools focus on specific refactoring tasks that are common in proof development.
For example, Levity~\citep{Bourke12} is a proof refactoring tool for an old version of Isabelle/HOL that automatically
moves lemmas to maximize reuse. The design of Levity is informed by experiences with two large proof developments.
Levity addresses problems that are especially pronounced in the domain of proof refactoring, such as the
context-sensitivity of proof scripts. Tactician~\citep{adams2015} is a refactoring tool for proof scripts in HOL Light
that focuses on refactoring proofs between sequences of tactics and tacticals.
 
There is little work on refactoring proof terms (Section~\ref{sec:proof-objects}) directly. This is the main focus of Chick~\citep{robert2018}, 
which refactors terms in a dependently-typed functional language similar to Gallina. To use Chick, the proof engineer applies some refactorings.
Chick then uses a program differencing algorithm to determine the changes to make elsewhere in the program,
then makes those changes. Chick supports insertion, deletion, modification, and permutation of subterms.
Similarly, RefactorAgda~\citep{wibergh2019} is a refactoring tool for a subset of Agda that operates directly
over Agda terms. RefactorAgda supports many changes, including changing indentation, renaming terms, moving terms, converting between implicit and
explicit arguments, reordering subterms, and adding or removing constructors to or from types; 
it also documents ideas for supporting other refactorings, such as adding and removing arguments and indicies to and from types. 

For both Chick and RefactorAgda, only some of these changes are semantics-preserving. Adding a new index to a type, for example, does not preserve the 
semantics of the original program. Accordingly, these tools can be viewed as both refactoring and repair tools,
though the algorithms that they use are syntactic.

A natural integration point for a proof refactoring tool is at the level of a platform or an IDE.
The Coq IDE CoqPIE~\citep{Roe2016} for Coq takes this approach for refactoring proof scripts.
CoqPIE includes a \textit{Replay} button which steps through the proof while renaming
any changed hypothesis names. CoqPIE can also automatically split out intermediate goals from a proof into separate lemmas. 
There are plans to support more refactoring functionality in CoqPIE in the future.

\subsection{Proof Repair}
\label{sec:repair}

\textit{Program repair}~\citep{Monperrus2018} is the automatic patching of programs to fix bugs;
\textit{proof repair} is program repair for proofs. Proof repair tools automatically fix broken proofs.
Recent lessons from a review of a certain class of program repair tools~\citep{Qi2015}
highlight why proof repair is a particularly good domain of program repair.
The review demonstrates that many existing tools produce incorrect patches.
Among the recommendations the authors make to remedy this is the suggestion that program repair tools
make use of extra information such as specifications, code from other applications, or example patches when generating patches.

In proof repair, a specification is always available: the theorem the repaired proof ought to prove.
Some proof repair tools take this a step further and make use of additional extra information, such as examples patches.
One such tool is \textsc{PUMPKIN PATCH}~\citep{Ringer2018}, a proof repair tool for Coq that generalizes example patches. 
\textsc{PUMPKIN PATCH} takes as inputs an old proof and a new proof that addresses some change in specification.
From those, it identifies a reusable patch that describes the change in specification;
for the kinds of changes \textsc{PUMPKIN PATCH} can currently handle, this patch is a Gallina function.
The proof engineer can then use this patch to patch other proofs broken by the change in specification.
\textsc{PUMPKIN PATCH} has only preliminary tooling~\citep{pumpkin-git} for applying patches automatically,
and currently handles only simple changes.

Chick~\citep{robert2018} was developed in parallel to \textsc{PUMPKIN PATCH}, and has a similar workflow:
Chick takes a set of example changes supplied by the programmer,
and uses a program differencing algorithm to determine the changes to make elsewhere.
Unlike \textsc{PUMPKIN PATCH}, Chick also applies the changes it finds.
However, Chick does this using a syntactic algorithm that handles only simple transformations;
for this reason, it presents itself primarily as a refactoring tool, even though the changes it makes may
not preserve semantics. The refactoring tool RefactorAgda~\citep{wibergh2019} similarly decribes some semantics-changing
repairs for a subset of Agda.

While proof repair is analogous to program repair, it was born out of traditional proof reuse (Section~\ref{sec:proof-reuse}).
For example, \textsc{PUMPKIN PATCH} discovers patches which help adapt a proof of a theorem to a proof of a related theorem,
and can so be thought of as a tool to assist in proof by analogy~\citep{curien1995}.
Similarly, the proposed proof weaving~\citep{Mulhern06proofweaving} method to automatically satisfy new obligations 
generated in response to changes in inductive types can be viewed as a proof repair technique.
Proof planning critics~\citep{ireland1996} can also be viewed as a technique for proof repair.

New technologies continue to make proof repair more feasible.
GALILEO~\citep{chan2011galileo} is a tool build on Isabelle for identifying and repairing faulty ontologies in response to contradictory
evidence; it has been applied to repair faulty physics ontologies, and may have applications more generally for mathematical proofs.
GALILEO uses repair plans to determine when to trigger a repair, as well as how to repair the ontology.

Knowledge sharing methods~\citep{gauthier2014} match concepts across
different proof assistants with similar logics and identify isomorphic types,
and may have implications for proof repair.
Later work uses these methods in combination with HOL(y)Hammer to
reprove parts of the standard library of HOL4 and HOL Light using combined knowledge 
from the two proof assistants~\citep{Gauthier2015}. 
More recently, this approach has been used to identify similar concepts
across libraries in proof assistants with different logics~\citep{gauthier2017}.
These methods combined with automation like hammers may help the proof engineer 
adapt proofs between isomorphic types, and may have applications
when repairing proofs even within the same logic, using information from different 
libraries, different commits, or different representations of similar types.



\subsection{Future of Proof Evolution}

There is a lot of room for work in proof evolution---only a few techniques exist so far, 
many of which emerged in parallel. We expect these reactive approaches to continue to evolve alongside proactive
approaches like design principles, as the two approaches are complementary.
Proof evolution can help with changes that occur outside of the programmer's control,
such as changes in dependencies (examples of this can be found in \cite{Ringer2018}) and changes that are difficult to protect against
even with informed design (examples of this can be found in \cite{Klein2014}).

Ideally, proof evolution tools ought to integrate naturally with the workflows of proof engineers,
for example through integration with existing tactic or proof languages, or through IDE or continuous integration support.
While some proof evolution tools focus on this already and can offer useful insights, this can be challenging.
For example, refactoring Ltac proof scripts can be difficult, since the semantics of Ltac are not well-defined;
Ltac2~\citep{ltac2} may simplify this in the future.
\cite{robert2018} discusses the challenges involved in refactoring proof scripts in more detail.
\cite{Ringer2018} also discusses the challenges of workflow integration, along with other open problems in proof repair.
We expect to see more emphasis on addressing these challenges in the future.

There is only preliminary work exploring how much of the work from existing refactoring and repair tools
for programming carries over to the domain of proof assistants. It is worth exploring in more detail
which challenges are unique to this domain. For example, \cite{Qi2015} provides several recommendations 
for how program repair tools can make use of extra information such as examples to make searching for patches
more feasible; \textsc{PUMPKIN PATCH} and machine learning tools use examples for this purpose already.
Future proof refactoring and repair tools can similarly learn from those recommendations.

One tempting use case for proof refactoring and repair tools is when a library changes a specification
that breaks proofs in client code that uses those libraries. Current refactoring and repair tools,
however, rely each individual client to determine the appropriate refactors and repairs to make
to fix those proofs. To better address this problem, future refactoring and repair tools
can provide support at the level of library design.
A library designer may, for example, specify how something has changed to a tool;
the tool may then apply this information in client code automatically. 
Some program repair tools already support library-provided patches~\citep{Monperrus2018};
we expect to see this extend to proof refactoring and repair tools in the future.

One barrier to useful refactoring and repair tools for proof engineers is the lack of information
on the kinds of changes that proof engineers make in practice. Collecting data on the changes
that proof engineers make and classifying it could help guide refactoring and repair tools
to handle classes of changes that matter in practice, and could also help machine learning tools
gather both positive and negative examples. Similarly, collecting the benchmarks and examples from
both proactive and reactive approaches to proof evolution such as Planning for Change, seL4, iCoq, and \textsc{PUMPKIN PATCH} 
can help drive the development of future proof evolution tools and measure their success meaningfully.

\section{User Productivity and Cost Estimation}
\label{sec:productivity}

\cite{Bourke12} outline challenges in large-scale verification projects using proof assistants: (1) new proof engineers joining the project, (2) expert proof engineering during main development, (3) proof maintenance, and (4) social and management aspects. They highlight three lessons: (1) proof automation is crucial, (2) using introspective tools for quickly finding facts in large databases gain importantance for productivity, and (3) tools that shorten the edit-check cycle increase productivity, even when sacrificing soundness.

\cite{Zhang2012} present a simulation model of the process of verifying the operating system kernel seL4. Their model is expressed as a software process using the tool Vensim. \cite{Andronick2012} describe the development process and management issues in verifying seL4. They conclude that formal verification, and re-verification, for systems requiring in the order of 10,000 LOC is feasible using a proof assistant. \cite{Staples2013} studied the relationships between sizes of artifacts in seL4. They find that the formal specifications have a significant relationship with the the size of the verified executable code. \cite{Staples2014} study the \emph{proof productivity} problem in the context of seL4; they find that effort is correlated linearly with proof size. \cite{Matichuk2015} analyze the Isabelle/HOL specifications and proof scripts from the seL4 project, and find a quadratic relationship between the size of a formal property and the proof script required to prove it.

\cite{Jeffery2015} identify 30 research questions about productivity in application of formal methods, such as verification using proof assistants. \cite{Klein2015} outline the benefits of trustworthy systems.

\section{Mining and Learning from Proof Repositories}
\label{sec:mining}

Mining software repositories is an emerging field that analyzes software repositories to yield actionable information about software systems and their development and evolution. We describe similar forms of analysis that have been carried out for repositories with proof assistant code.

\cite{Wiedijk2009} compared statistics for standard libraries of several proof assistants for versions available around 2009, including Isabelle/HOL, Coq, and HOL Light. For each library, he reports the number of lines of comments, proofs, definitions, etc. Despite foundational differences, the numbers are similar, with HOL Light having the smallest number of lines for definitions. For example, the LOC shares of theorem statements, definitions, and proof in the Coq version 8.1 standard library were 11\%, 8\%, and 53\%, respectively. Wiedijk argues informally that fewer definitions per proof means higher trustworthiness, since having proofs of relevant properties yield higher confidence in the adequacy of definitions.

\cite{Blanchette2015} investigated Isabelle's Archive of Formal Proofs (AFP), analyzing among other properties the number and sizes of proofs, interdependencies between projects, and number of authors. For the AFP in aggregate, the LOC shares of theorem statements, definitions, and proofs were 19\%, 8\%, and 58\%, respectively. They found that the Isabelle Sledgehammer tool for proof automation~\citep{Blanchette2013} could prove about 60\% of all theorems in the AFP.

Software metrics provide quantitative ways to describe software artifacts and processes and discover new properties. \cite{Aspinall2016} first considered analogous metrics for formal proofs. More specifically, they define an abstract model of formal proofs and a set of proof metrics for this model, which they implement for three different proof assistants (Isabelle, Mizar, and HOL Light) and apply to several large proof corpora.

\cite{Komendantskaya2012} used machine learning with clustering algorithms to identify patterns in large collections of Coq tactic sequences and proof trees, e.g., to find structural similarities between lemmas, and \cite{Heras2013b,Heras2014} highlighted how statistical patterns in proofs can be leveraged during interactive proof development. \cite{Aspinall2016b} used machine learning, in the form of a $k$-nearest-neighbor classifier, to learn and suggest theorem names in HOL Light projects that accurately reflect their property definitions.

\cite{Muller2017} proposed a format and database for capturing and leveraging \emph{alignments} between concepts in different proof assistants, e.g., between natural numbers in Coq on one hand and Isabelle/HOL on the other. One of the basic assumptions in alignment is that concepts have syntactic and semantic similarities across environments, consistent with repetitiveness assumptions in naturalness. \cite{gauthier2017} proposed an algorithm based on heuristics for generating alignments given two proof assistant libraries, and evaluated it on libraries from six proof assistants. For example, by evaluating a library against itself for alignment, duplicated concepts can be found.

\cite{Kaliszyk2017b} leveraged statistical machine learning techniques in a tool that automatically translates (``formalizes'') mathematical texts to proof assistant code. Their approach and evaluation is based on learning and cross-validation using a corpus with established alignments between English texts and HOL Light documents, based on the Flyspeck project~\citep{Hales2017}. They find that the number of correct translations among the top 20 is 64\%.

There are many recent lines of work that learn from large proof assistant corpora to directly perform various automated reasoning tasks~\citep{Kuhlwein2012,Kuhlwein2013,Kaliszyk2014,Irving2016,Loos2017,Peng2017}; the HOLStep dataset~\citep{Kaliszyk2017} is designed as a benchmark for training and evaluating such techniques in a proof assistant context. \cite{Gauthier2017b,Gauthier2018} proposed a technique for learning from HOL4 tactic sequences and proof states and automatically suggest tactic-based proofs of theorems. They achieved around 66\% success rate on the HOL4 standard library, and by also incorporating the automated E prover into the toolchain, they raised the success rate to 69\%. \cite{Huang2018} similarly learn from tactics and proof states, but in the context of Coq and for a limited set of algebraic proof goals. \cite{Yang2019} proposed a more general tactic-based approach for learning and automatic proof suggestion for Coq, which achieved around 12\% success rate on proofs from a large dataset of 123 Coq projects. \cite{Nagashima2018} used custom encodings of proof state in Isabelle/HOL for learning in order to predict suitable proof methods (essentially powerful domain-specific proof tactics) to apply. \cite{Bansal2019} presented a learning environment for HOL Light.

\chapter{Conclusion}
\label{ch:conclusion}


Proof engineering has come far since its infancy in the 1970s~\citep{Milner1972b,Milner1972}, drawing on hundreds of years of foundational ideas in mathematics and logic. Researchers and proof engineers have used proof assistants to build software artifacts spanning hundreds of thousands of lines of code~\citep{Leroy2009,Klein2009}. 
A growing fraction of these artifacts are executable on real hardware, and of these, some are verified down to machine code for verified hardware.
Verified artifacts have shown themselves to be more reliable~\citep{Yang2011}, and are beginning to see 
industrial applications~\citep{CompCert-ERTS-2018, Erbsen2019}.

Compared to most research software, widely used proof assistants such as Coq and Isabelle/HOL are mature and well-maintained tools, with large communities, large software ecosystems, wide selections of support tools, and sophisticated interfaces.
Interest in verification across academia and industry builds additional momentum for proof assistant development.
Interest in formal proofs among mathematicians, for example, results in rich libraries~\citep{UniMath, Bauer2017}, foundational advances~\citep{univalent2013homotopy}, 
and tooling~\citep{Braibant2011} useful for verifying software.
Advances in automated theorem proving reach proof engineers through tools like hammers~\citep{Blanchette2016},
which allow them to benefit from cutting-edge research without increasing the trusted computing base.
Adaptations of research on programming practices and developer support tools and systems from software engineering
help proof engineers continually increase productivity.

We are living in the age of ``big verification''~\citep{popl-time}.
A new generation of computer science students are learning to use proof assistants in undergraduate and graduate courses~\citep{coq-classroom},
entering the workforce equipped with the skills to verify software. As the scale of that verified software continues to grow,
the challenges of proof engineering will continue to grow more 
significant and salient. This survey concludes with a discussion of five of the opportunities to address high-level challenges that remain.

\paragraph{Opportunity 1: Adapting Tools and Ideas from Software Engineering} 
Development processes, build workflows, and support infrastructure for proof assistants are far behind those for traditional software development, and proof engineer productivity is consequently far from its potential. Communities and ecosystems are small compared to those for traditional software.
Many features of and tools for proof assistants are not adequately documented, and proof engineers struggle to develop and maintain verified software in the face
of evolving proof assistants, libraries, and requirements~\citep{Bourke12}.
Domains may lack usable frameworks and libraries to build on, forcing time-consuming development from scratch.
Interfaces may lack in usability and key features, and may become unmaintained and obsolete.
Results in one proof assistant cannot be readily used in another, except in special cases.
Proof engineering is particularly far behind software engineering with respect to maintenance, disincentivizing experts~\citep{Elphinstone2013}
from changing the system once it has been verified.

Proof engineering has already benefited from traditional software engineering, for example through many of
the design principles discussed in Chapter~\ref{ch:organization}. It can continue to draw on tools and ideas from software engineering 
when applicable. The surveyed work suggests that it is sometimes necessary to adapt these tools and ideas,
both to address challenges unique to or especially pronounced in proof engineering (such as brittleness), and to fully take advantage of
the opportunities that proof engineering presents (such as the availability of full specifications).
Continuing to transfer ideas and tools from software engineering to proof engineering with these differences in mind may help close the gap between the two disciplines. Ideas and tools ripe for transfer include improved continuous integration systems, 
package systems, source code hosting, graphical interfaces, error messaging, debugging tools, and development processes.

\paragraph{Opportunity 2: Making Proof Assistants More Accessible}
Compared to traditional programming languages, systems, and environments, proof assistants can be hard for non-experts to understand.
Their foundational bases in logic, mathematics, and type theory, which have served to maintain trust, may also deter potential users due to perceived complexity.
Omissions in and misunderstandings of specifications may lead to lowered expectations and negative perceptions of formally verified software.
An overwhelming majority of large successful software verification projects using proof assistants are carried out and maintained by small teams of highly specialized and trained researchers, frequently with close ties to the institutions where the corresponding proof assistant is developed~\citep{Gonthier2013}. 

Some of this inaccessibility may dissipate over time, as students become more familiar with concepts like
inductive or dependent types through the use of interactive theorem provers in computer science courses,
or through the availability of online books and tutorials.
Some may inspire new abstractions around concepts that are not accessible to the average programmer.
For example, new abstractions may help non-experts more easily interface with unification algorithms and existential variables.
The interactive theorem proving community can continue to draw on automated theorem proving to help make interactive theorem proving more accessible and increase its reach and impact, not just through hammers, but also as part of counterexample generators and similar tools~\citep{Blanchette2010}.
Techniques from the broader formal methods community can even be certified in proof assistants, and then applied inside them. 

\paragraph{Opportunity 3: Understanding and Evaluating Development Processes}
It is difficult to improve the state of the art in proof engineering without understanding the status quo.
The surveyed work suggests that little work has been done to understand and assess the current development processes of proof engineers,
and that some of the work that has been done in this direction has been inconclusive.

Thousands of proof assistant software verification projects are publicly available on platforms such as GitHub for study, reuse, and as research subjects; curated collections such as Isabelle's Archive of Formal Proofs~\citep{isabelleafp} and Coq's OPAM package index~\citep{CoqOPAM} provide high-quality projects with extensive metadata, guaranteed to work with certain proof assistant versions. These codebases can be investigated empirically, learned from, 
and used to construct benchmarks when evaluating new proof engineering techniques. Collecting and analyzing data from other points in the development
process, such as interaction with the IDE, may further facilitate in these processes, as may user studies and community-wide retrospective discussions
of influential work.

\paragraph{Opportunity 4: End-to-End Verification}
Bugs affecting the soundness of a proof assistant with respect to its foundations, although rare, threaten to undermine trustworthiness.
\cite{coq-critical}, for example, contains preliminary documentation of 23 (now fixed) critical bugs in the history of stable releases of Coq. Of the bugs listed,
only 1 (fixed in 2015) was assessed as likely to be exploited by chance, but the risk of others was not determined.
Extensions to proof assistant kernels may further cloud understanding of the implemented metatheory. 

The wide availability of verified systems software provides the basis of a fully verified software ecosystem, with all-encompassing end-to-end guarantees for both functional and non-functional properties for a wide range of practical software. This could rule out many recently discovered critical bugs and even many prevalent security flaws~\citep{Chlipala2018}. Proof assistant self-verification projects and certified compilers for proof assisants, although still emerging and used by few, promise to eventually make the trusted computing base of verified ecosystems minimal. One important step towards a verified ecosystem is to formalize additional practical programming languages and their semantics and runtime environments.

\paragraph{Opportunity 5: Looking to New Applications}
In the Social Processes critique of program verification, \cite{DeMillo1977} wrote that:

\begin{quote}
We believe that, in the end, it is a social process that determines whether mathematicians feel confident about a theorem--and we believe that, because no comparable social process can take place among program verifiers, program verification is bound to fail. We can't see how it's going to be able to affect anyone's confidence about programs.
\end{quote}
The authors of this survey find it surprising with this in mind that \cite{Leroy:POPL06, Leroy2009} put years of effort into verifying
an optimizing C compiler, with little precedent to suggest that such a project could succeed.
But it did succeed, and in so doing increased confidence in both the compiler itself~\citep{Yang2011}
and in the compiled software~\citep{CompCert-ERTS-2018}.

As proof engineers and researchers like \cite{Leroy:POPL06} stretch the boundaries of current proof engineering
techniques on new and interesting domains, proof engineering research can continue to grow to support
the new needs that arise. This cycle can help build a world with, for example, safer transportation systems or more reliable medical devices.
It can help build a world that is a bit more trustworthy.

\begin{acknowledgements}
We would like to thank Dan Grossman, Derek Dreyer, Xavier Leroy, Benjamin Pierce, Andrew Appel, Bob Harper, Jonathan Aldrich, Karl Crary, Adam Chlipala, Chris Martens, Joachim Breitner, Christine Rizkallah, Giuliano Losa, Thomas Tuerk, Roberto Guanciale, Doug Woos, James R. Wilcox, Ryan Doenges, Jared Roesch, Sorin Lerner, Leslie Lamport, Fred Schneider, John Leo, Bob Atkey, Lars Hupel, Buday Gergely, Makarius Wenzel, Jasmin Christian Blanchette, Matthieu Sozeau, Cyril Cohen, David Thrane Christiansen, Sam Tobin-Hochstadt, Mario Alvarez, Joomy Korkut, Robert Rand,
Taylor Blau, Anna Kornfeld Simpson, Jonathan Sterling, Colin Barret,
Toby Murray, Gerwin Klein, Graydon Hoare, Emilio J\'esus Gallego Arias, and Brendan Zabarauskas.
We are grateful to the anonymous reviewers for their detailed comments and valuable suggestions.
This material is based upon work supported by the National Science Foundation Graduate Research Fellowship under Grant No. DGE-1256082, the National Science Foundation under Grant Nos. CCF-1652517, CCF-1836813, and CCF-1749570, and by a Research Award from Facebook.
Any opinions, findings, and conclusions or recommendations expressed in this material are those of the 
authors and do not necessarily reflect the views of the National Science Foundation.
\end{acknowledgements}


\backmatter  
\printbibliography

\end{document}